\newcommand{\ifthesis}[1]{}
\newcommand{\ifgermanbookchapter}[1]{}
\newcommand{\ifarxiv}[1]{#1}
\theoremstyle{definition}
\newtheorem{definition}{Definition}[section]
\newcommand{\chapter}[1]{}
\newcommand{\ifbookchapter}[1]{}
\theoremstyle{definition}
\theoremstyle{plain}
\newtheorem{theorem}[definition]{Theorem}
\newtheorem{lemma}[definition]{Lemma}
\newtheorem{proposition}[definition]{Proposition}
\newtheorem{corollary}[definition]{Corollary}
\theoremstyle{remark}
\newtheorem{remark}[definition]{Remark}
\newtheorem{example}[definition]{Example}
\DeclareRobustCommand{\r}{{\mathbb R}}
\newcommand{\R}{{\mathbb R}}  
\newcommand{\Z}{{\mathbb Z}}  
\newcommand{\be}[1]{\begin{equation}\label{#1}}
\newcommand{\ee}{\end{equation}}
\newcommand{\bl}[1]{\begin{lemma}\label{#1}}
\newcommand{\ble}[1]{\begin{lemmaex}\label{#1}}
\newcommand{\br}[1]{\begin{remark}\label{#1}}
\newcommand{\bt}[1]{\begin{theorem}\label{#1}}
\newcommand{\bd}[1]{\begin{definition}\label{#1}}
\newcommand{\bp}[1]{\begin{proposition}\label{#1}}
\newcommand{\bc}[1]{\begin{corollary}\label{#1}}
\newcommand{\bex}[1]{\begin{example}\label{#1}}
\newcommand{\ec}{\mybox\end{corollary}}
\newcommand{\eex}{\mydiamond\end{example}}
\newcommand{\eem}{\mydiamond\end{example}}
\newcommand{\el}{\mybox\end{lemma}}
\newcommand{\er}{\mybox\end{remark}}
\newcommand{\et}{\qed\end{theorem}}
\newcommand{\ed}{\mytriangle\end{definition}}
\newcommand{\ep}{\mybox\end{proposition}}
\newcommand{\epr}{\end{proof}}
\newcommand{\bpr}{\begin{proof}}
\newcommand{\ecs}{\end{corollary}}
\newcommand{\eexs}{\mydiamond\end{example}}
\newcommand{\els}{\end{lemma}}
\newcommand{\ers}{\end{remark}}
\newcommand{\ets}{\end{theorem}}
\newcommand{\eds}{\end{definition}}
\newcommand{\eps}{\end{proposition}}
\newcommand{\mybox}{\hfill $\Box$} 
\newcommand{\beq}{\begin{eqnarray}}
\newcommand{\eeq}{\end{eqnarray}}
\newcommand{\beqn}{\begin{eqnarray*}}
\newcommand{\eeqn}{\end{eqnarray*}}
\newcommand{\bi}{\begin{itemize}}
\newcommand{\ei}{\end{itemize}}
\newcommand{\ben}{\begin{enumerate}}
\newcommand{\een}{\end{enumerate}}
\newcommand{\stochQ}{{\cal S}_\theta^X}
\newcommand{\stochU}{{\cal S}_\theta^U}
\newcommand{\Eq}{{\cal E}}  
\newcommand{\Oeq}{{\cal K}}  
\DeclareMathOperator{\e}{e}
\DeclareRobustCommand{\thetaminus}[1]{\theta_{-#1}}
\DeclareRobustCommand{\temperedU}{U_\theta^\Omega}
\DeclareRobustCommand{\temperedX}{X_\theta^\Omega}
\DeclareRobustCommand{\boundedU}{U_\infty^\Omega}
\DeclareRobustCommand{\boundedX}{X_\infty^\Omega}
\DeclareRobustCommand{\n}{{\mathbb N}}
\DeclareRobustCommand{\z}{{\mathbb Z}}
\DeclareRobustCommand{\r}{{\mathbb R}}
\DeclareRobustCommand{\realp}{\mathbb P}
\DeclareRobustCommand{\Time}{{\cal T}}
\DeclareRobustCommand{\Tplus}{{\cal T}_{\geqslant 0}}
\DeclareRobustCommand{\borel}{{\cal B}}
\DeclareRobustCommand{\F}{{\cal F}}
  \DeclareRobustCommand{\calF}{\F}
\DeclareRobustCommand{\calK}{{\cal K}}
\DeclareRobustCommand{\calU}{{\cal U}}
\DeclareRobustCommand{\calS}{{\cal S}}
\DeclareRobustCommand{\omegaxb}{X_\borel^\Omega}
\DeclareRobustCommand{\UBOmega}{U_\borel^\Omega}
  \DeclareRobustCommand{\omegaub}{\UBOmega}
\DeclareRobustCommand{\omegaxtheta}{X_\theta^\Omega}
\DeclareRobustCommand{\omegautheta}{U_\theta^\Omega}
\DeclareRobustCommand{\linftyofX}{X_\infty^\Omega}
\DeclareRobustCommand{\linftyofU}{U_\infty^\Omega}
\DeclareRobustCommand{\sthetau}{{\cal S}^{U}_\theta}
\DeclareRobustCommand{\sthetay}{{\cal S}^{Y}_\theta}
\DeclareRobustCommand{\soU}{{\cal S}_c^U}
\DeclareRobustCommand{\sinftyU}{{\cal S}_\infty^U}
\DeclareRobustCommand{\sthetauconst}{{\bar{\cal S}^U}_\theta}
\DeclareRobustCommand{\temparrow}{\rightarrow_\theta}
\DeclareRobustCommand{\longtemparrow}{\longrightarrow_\theta}
\newcommand{\as}{\quad \text{as}\quad}
\newcommand{\mytriangle}{\hfill $\triangle$}
\newcommand{\mydiamond}{\hfill $\Diamond$}
\newcommand{\comment}[1]{}
\newcommand{\ifonecolumn}[1]{#1}
\newcommand{\iftwocolumn}[1]{}
\begin{document}

\title{Remarks on random dynamical systems
       with inputs and outputs,\\ and a small-gain theorem for monotone RDS}
\author{Michael Marcondes de Freitas and Eduardo D. Sontag\\
Mathematics Department\\
Rutgers University, Piscataway, NJ, USA}
\maketitle

\begin{abstract}
This note introduces a new notion of random dynamical systems with inputs and
outputs, and sketches a small-gain theorem for monotone systems which
generalizes a similar theorem known for deterministic systems.
\end{abstract}

\section{Introduction}

\emph{Monotone systems}, whose mathematical development was pioneered by
Hirsch~\cite{Hirsch,Hirsch2} and Smith~\cite{smith,Hirsch-Smith}, play a key
role in many application areas, and particularly in the modeling and analysis
of biological systems.  They constitute a class of dynamical systems for which
a rich theory exists, endowing them with very robust dynamical
characteristics.  In order to analyze interconnections of monotone systems,
however, it is necessary to extend the notion and introduce the concept of
\emph{monotone systems with inputs and outputs}, as standard in control
theory~\cite{mct}, so as to incorporate input and output channels.  
This was done in~\cite{monotoneTAC}, which also presented a result that
guarantees stability of monotone systems under negative feedback (so that the
closed-loop system is no longer monotone); this result may be viewed as a
``small gain theorem'' in terms of an appropriate notion of system gain.

An interesting question is: to what extent do results for monotone I/O systems
extend to the analysis of systems subject to stochastic uncertainty or random
inputs?  This paper proposes an approach based upon the notion of ``random
dynamical system'' (RDS) due to Arnold~\cite{arnold-2010}, and more
specifically the subclass of \emph{monotone} RDS studied
in~\cite{chueshov-2002}.
Even more than when passing from deterministic monotone systems to monotone
systems with inputs and outputs, the generalization is not entirely
straightforward, and many subtle mathematical and conceptual details have to
be worked out.  We introduce the necessary formalism in this note (in a more
general context of not necessarily monotone systems) and
sketch an analogue
of the small gain theorem proved in~\cite{monotoneTAC}.

\section{Random Dynamical Systems}
We first review the random dynamical systems framework of Arnold
\cite{arnold-2010}. Along the way we introduce a couple of pieces of terminology
not found in \cite{arnold-2010} to facilitate the discussion. Suppose given a 
\emph{measure preserving dynamical system}%
\footnote{Arnold \cite[page 635]{arnold-2010} and Chueshov \cite[Definition
1.1.1 on page 10]{chueshov-2002} refer to such an object primarily as a 
\emph{metric dynamical system}. We find {\em measure preserving}, which Arnold
also uses as a synonym, less confusing and more informative.} ({\em MPDS}\,)
$$\theta =(\Omega ,{\cal F},\realp,\{\theta _t\}_{t \in \Time})\,;$$ that is, a
probability space $(\Omega, {\cal F}, \realp)$, a topological group $(\Time,+)$,
 and a measurable flow $\{\theta_t\}_{t \in \Time}$ of measure preserving maps
$\Omega \rightarrow \Omega$ satisfying (T1)--(T3):
\begin{enumerate}
\item[(T1)] $(t,\omega) \mapsto \theta_t\omega$, $(t,\omega) \in \Time \times
\Omega$, is ($\borel(\Time)\otimes{\cal F}$)-measurable,

\item[(T2)] $\theta_{t+s} = \theta_t \circ \theta_s$ for every $t,s \in \Time$
(semigroup property),

\item[(T3)] $\realp \circ \theta_t = \realp$ for each $t \in \Time$ (measure
preserving\footnote{
Property (T3) is normally \cite[Definition 1.1]{walters-2000} stated as
$$\realp(\theta_t^{-1}(B)) = \realp(B)\,, \quad \forall B \in {\cal F}\,,\
\forall t \in \Time\,.$$ But since it follows from (T2) that $\theta_t$ is
invertible with $\theta_t^{-1} = \theta_{-t}$ for each $t \in \Time$, the two
formulations are equivalent in this context.
}). 
\end{enumerate}
In this work $\Time$ will always refer to either $\R$ or $\Z$, depending on
whether
one is
talking about continuous or discrete time, respectively. In either case
$\Tplus$ refers to the nonnegative elements of $\Time$. We will occasionally
need to make measure-theoretic considerations about $\Time$ or Borel subsets of
it. If $\Time = \r$, that is, in continuous time, then we tacitly equip any
Borel subset of $\Time$ with the measure induced by the Lebesgue measure on
$\r$. If $\Time = \z$, or in discrete time, then we think of the counting
measure in $\z$. When $\Time = \z$, it follows from (T2) that $\theta$ is
completely determined by $\theta_1 = \theta(1,\cdot)$. In that case we will
abuse the notation and use the same $\theta$ to denote both the underlying MPDS
and $\theta_1$.

In the context of a given MPDS $\theta$, a set $B \in \calF$ is said to be
$\theta$-invariant if $\theta_t(B) = B$ for all $t \in \Time$. We say that an
MPDS $\theta$ is {\em ergodic {\em(}under $\realp$\,{\em)}} if, whenever $B
\in \calF$ is $\theta$-invariant, then we have either $\realp(B) = 0$ or
$\realp(B) = 1$.

 Let $X$ be a metric space constituting the measurable space $(X, \borel)$ when
equipped with the $\sigma$-algebra $\borel$ of Borel
subsets of $X$. A {\em {\em (}continuous\,{\em )} random dynamical system}
({\em RDS}\,)
{\em on $X$} is a pair $(\theta,\varphi)$ in which $\theta$ is an MPDS and
$$\varphi :
\Tplus \times
\Omega \times X \longrightarrow X$$ is a ({\em continuous}\,) {\em cocycle over
$\theta$}; that is,
a ($\borel({\Tplus}) \otimes \calF \otimes \borel$)-measurable map such that 
\begin{enumerate}
\item[(S1)] $\varphi(t,\omega) := \varphi(t,\omega,\cdot) : X \rightarrow X$ is
continuous for each $t \in \Tplus$, $\omega \in \Omega$,

\item[(S2)] $\varphi(0,w) = id_X$ for each $\omega \in \Omega$, and (cocycle
property)
\[
\varphi(t+s, \omega) = \varphi(t, \theta_s\omega) \circ \varphi(s,
\omega)\,,\quad \forall s,t \in \Tplus\,,\ \forall \omega \in \Omega\,.
\]
\end{enumerate}
The cocycle property generalizes the semigroup
property of deterministic dynamical systems. Thus RDS's include
deterministic dynamical systems as the special case in
which $\Omega$ is a singleton.

\subsection{Trajectories, Equilibria, and $\theta$-Stationary Processes}

\newcommand{\omegaqb}{X_{\borel}^{\Omega}}
In the context of RDS's, the analogue to points in the state space $X$ for a
deterministic
system are random variables $\Omega \rightarrow X$, that is,
$\borel$-measurable maps $\Omega \rightarrow X$.
We denote the set of all random variables on a metric space $X$ by
$\omegaqb$. We refer to a ($\borel(\Tplus)\otimes\calF$)-measurable map $q:
\Tplus \times \Omega
\rightarrow X$ as a $\theta${\em -stochastic process\footnote{A
``$\theta$-stochastic process'' is indeed a stochastic process in the
traditional sense. We use the prefix ``$\theta$-'' to emphasize the underlying
probability space, as well as the time semigroup.} on} $X$, and
denote $q_t := q(t,\cdot)$ for each $t \in \Tplus$. The set of all
$\theta$-stochastic processes on a metric space $X$ is denoted by $\stochQ$.

Let $(\theta,\varphi)$ be an RDS. Given $x \in \omegaqb$, we define the ({\em
forward}\,) {\em trajectory starting at} $x$ to be the $\theta$-stochastic
process
$\xi^{x} \in \stochQ$ defined by
\begin{equation}\label{eq:trajectory}
\xi^x_t(\omega) := \varphi(t,\omega ,x(\omega))\,, \quad (t, \omega) \in \Tplus
\times \Omega\,.
\end{equation}
The {\em pullback trajectory starting at} $x$ is in turn defined to be the
$\theta$-stochastic process $\check\xi^{x}: \Tplus \times \Omega \rightarrow X$
defined by
\begin{equation}\label{eq:pullbacktrajectory}
\check\xi^x_t(\omega) := \varphi(t,\thetaminus{t}\omega
,x(\thetaminus{t}\omega))\,, \quad (t, \omega) \in \Tplus \times \Omega\,.
\end{equation}
More generally, the {\em pullback} of a $\theta$-stochastic process $q \in
\stochQ$ is the
$\theta$-stochastic process $\check q \in \stochQ$ defined by
\[
 \check q_t(\omega) := q_t(\thetaminus{t}\omega)\,, \quad (t,\omega) \in \Tplus
\times \Omega\,.
\]
So the pullback trajectory starting at $x$ is simply the pullback of the
forward trajectory starting at $x$. We will always use the accent
$\phantom{.}\check{}\phantom{.}$ to indicate the pullback of the
$\theta$-stochastic process being accented.

We slightly modify the standard notion of equilibrium for RDS's (see, for
instance, \cite[Definition 1.7.1 on page 38]{chueshov-2002}) to allow for the
defining property to hold only almost everywhere, as opposed to everywhere. So
an \emph{equilibrium} of an RDS $(\theta,\varphi)$ is a random variable $x \in
\omegaxb$ such that
\[
\xi^x_t(\omega) = \varphi(t,\omega ,x(\omega)) = x(\theta _t\omega )\,, \quad
\forall t \in \Tplus\,,\ \forall \omega \in \widetilde{\Omega}\,,
\]
for some $\theta$-invariant $\widetilde{\Omega} \subseteq \Omega$ of full
measure\footnote{That is,
$\theta_t\widetilde{\Omega} = \widetilde{\Omega}$ for all $t \in \Time$, and
$\realp(\widetilde{\Omega}) = 1$.}. It is often not necessary to specify the
said
$\widetilde{\Omega}$. So we say ``for $\theta$-almost all $\omega \in \Omega$''
and write
\[
 \text{``}\ \widetilde{\forall} \omega \in \Omega \ \text{''}
\]
to mean `for all $\omega \in \widetilde{\Omega}$, for some $\theta$-invariant
set $\widetilde{\Omega} \subseteq \Omega$ of full measure'.
 
In view of the notion of pullback convergence with which we will be working
(see Subsection \ref{subsec:convergence}), it is more natural to think of
the concept of equilibrium in terms of pullback trajectories. Observe that a
random
variable $x \in \omegaxb$ is an equilibrium of the RDS $(\theta,\varphi)$ if,
and only
if
\[
 \check\xi^x_t(\omega) = \varphi(t,\thetaminus{t}\omega
,x(\thetaminus{t}\omega)) = x(\omega )\,, \quad \forall t \in \Tplus\,,\
\widetilde{\forall} \omega \in \Omega\,.
\]

The remaining of this section is devoted to interpreting the concept of
equilibrium for an RDS in terms of a shift operator in the set $\stochQ$ of all
$\theta$-stochastic processes on $X$. For each $s\in \Tplus$, let
\be{eq:rho}
\begin{array}{rcl}
\rho _s : \stochQ &\longrightarrow  &\stochQ\\
 q &\longmapsto  &\rho _s(q)
\end{array}
\ee
be defined by
\be{eq:rhodef}
(\rho _s(q))_t(\omega) := q_{t+s}(\theta _{-s}\omega )\,, \quad (t,\omega) \in
\Tplus \times \Omega\,.
\ee

\begin{definition}\label{def:q-constant}
A $\theta $-stochastic process $\bar q \in \stochQ$ is said to be \emph{$\theta
$-stationary}
if
\begin{equation*}
(\rho _s(\bar q))_t(\omega)=\bar q_t(\omega)\,, 
\end{equation*}
for all $s,t \in \Tplus$, for $\theta$-almost all $\omega \in \Omega$.
\mytriangle
\end{definition}
We use the preffix ``$\theta$-'' in ``$\theta$-stationary'' to emphasize the
dependence on the underlying MPDS $\theta$.

\bl{lemma:q-constant}\label{lemma:theta-constant}
The $\theta $-stochastic process $\bar q \in \stochQ$ is $\theta $-stationary if
and only if
there exists a random variable ${q} \in \omegaxb$ such that
\begin{equation}\label{eq:theta-constant}
\bar q_t(\omega ) = {q}(\theta _t\omega )\,,
\quad
\forall t\in \Tplus\,,\ \widetilde{\forall} \omega \in \Omega \,.
\end{equation}
\els

\bpr (Sufficiency)
Suppose that (\ref{eq:theta-constant}) holds for some ${q} \in
\omegaqb$.
Pick any $s\in \Tplus$.
For any $t \in \Tplus$ and $\theta$-almost all $\omega \in \Omega$,
\[
(\rho _s(\bar q))_t(\omega ) = \bar q_{t+s}(\theta _{-s}\omega )
              = {q}(\theta _{t+s}\theta _{-s}\omega )
              = {q}(\theta _t\omega )
              = \bar q_t(\omega) \,.
\]
So $\bar q$ is $\theta$-stationary.

(Necessity)
Suppose that $\bar q \in \stochQ$ is $\theta$-stationary and define ${q} \in
\omegaqb$ by                                   
\begin{equation}\label{eq:dcc}
{q}(\omega ) := \bar q_0(\omega )\,, \quad \omega \in \Omega\,.
\end{equation}
We have
\[
\bar q_{t+s}(\theta _{-s}\omega ) = (\rho_{s}(q))_t(\omega) = \bar q_t(\omega
)\,,\quad \forall s,t \in
\Tplus\,,\ \widetilde{\forall} \omega \in \Omega\,.
\]
Setting $t=0$ and renaming $s$ as $t$ we then have $$\bar q_t(\theta
_{-t}\hat\omega) = \bar q_0(\hat\omega ) = {q}(\hat\omega )\,,\quad \forall t
\in \Tplus\,,\ \widetilde{\forall}\hat\omega \in \Omega\,.$$
Given any $\omega \in \widetilde{\Omega} $ and any $t\in \Tplus$, we may apply
this property with
$\hat\omega =\theta _t\omega $ due to the $\theta$-invariance of
$\widetilde{\Omega}$, thus obtaining
\[
\bar q_t(\omega ) = {q}(\theta _t\omega )\,.
\]
Therefore (\ref{eq:theta-constant}) holds.
\epr

Note that the random variable ${q}$ associated to $\bar q$ is unique up to a
$\theta$-invariant set of measure zero. Indeed, it is determined
$\theta$-almost everywhere by Equation (\ref{eq:dcc}). Thus, we have:

\begin{corollary}\label{lem:equilstate}
Given an RDS $(\theta,\varphi)$ over a metric space $X$ and a random state $x
\in \omegaqb$,
the following two properties are equivalent:
\begin{itemize}
\item[{\em (1)}] $x$ is an equilibrium;
\item[{\em (2)}] the trajectory $\xi^x$, as defined in {\em
Equation (\ref{eq:trajectory})}, is
$\theta$-stationary;
\item[{\em (3)}] the map $t \mapsto \check\xi_t^x \in
\omegaqb$, $t \in \Tplus$, is constant.

\end{itemize}
\end{corollary}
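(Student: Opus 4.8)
The plan is to run the equivalences off the two faces of the equilibrium notion — the forward identity $\varphi(t,\omega,x(\omega))=x(\theta_t\omega)$ in the definition and the pullback identity $\check\xi^x_t(\omega)=x(\omega)$ recorded in the displayed remark just above Equation~(\ref{eq:rho}) — using Lemma~\ref{lemma:q-constant} as the bridge: that lemma is exactly what turns the abstract shift-invariance of $\theta$-stationarity into a concrete ``evaluate along the flow'' formula of the shape (\ref{eq:theta-constant}).

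First I would dispatch (1)$\Leftrightarrow$(3), which is little more than a rephrasing. By (S2), $\check\xi^x_0(\omega)=\varphi(0,\omega,x(\omega))=x(\omega)$ for every $\omega\in\Omega$, so the map $t\mapsto\check\xi^x_t$ is constant — as an element of $\omegaqb$, i.e.\ up to a $\theta$-invariant null set — precisely when $\check\xi^x_t(\omega)=x(\omega)$ holds for all $t\in\Tplus$ and $\theta$-almost all $\omega$. Unwinding Equation~(\ref{eq:pullbacktrajectory}) for the pullback trajectory, this is verbatim the pullback characterization of an equilibrium stated before (\ref{eq:rho}).

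For (1)$\Leftrightarrow$(2) I would apply Lemma~\ref{lemma:q-constant} with $\bar q:=\xi^x$. If $x$ is an equilibrium, then by definition $\xi^x_t(\omega)=\varphi(t,\omega,x(\omega))=x(\theta_t\omega)$ for all $t\in\Tplus$ and $\theta$-almost all $\omega$, which is an identity of the form (\ref{eq:theta-constant}) with $q:=x$; the lemma then yields that $\xi^x$ is $\theta$-stationary. Conversely, if $\xi^x$ is $\theta$-stationary, the lemma produces $q\in\omegaqb$ with $\xi^x_t(\omega)=q(\theta_t\omega)$ for all $t$ and $\theta$-almost all $\omega$; setting $t=0$ and using (S2) gives $q(\omega)=\xi^x_0(\omega)=x(\omega)$ for $\theta$-almost all $\omega$, and reinserting this recovers the defining equation of an equilibrium.

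All the individual steps are short; the only real care needed is bookkeeping with the full-measure sets. One must intersect the $\theta$-invariant sets of full measure coming from the definition of equilibrium, from the $\widetilde{\forall}$ convention, and from Lemma~\ref{lemma:q-constant}, and note that such an intersection is again $\theta$-invariant of full measure. I expect the genuinely delicate point to be fixing the reading of ``constant'' in (3) as constancy of the induced element of $\omegaqb$ modulo a $\theta$-invariant null set, so that it matches the almost-everywhere relaxation adopted for equilibria rather than an everywhere statement that would be strictly stronger; once that convention is in place the argument closes.
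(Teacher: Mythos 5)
Your proposal is correct and is exactly the argument the paper intends: the corollary is stated as an immediate consequence of Lemma~\ref{lemma:q-constant} together with the observation that the associated random variable $q$ is determined $\theta$-almost everywhere by $q=\bar q_0$, and your application of that lemma with $\bar q:=\xi^x$ (so that $q=\xi^x_0=x$ by (S2)), plus the pullback reformulation of the equilibrium property for (1)$\Leftrightarrow$(3), is precisely that reasoning spelled out. Your closing remark about reading ``constant'' in (3) modulo a $\theta$-invariant null set is the right convention and consistent with the paper's almost-everywhere definition of equilibrium.
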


We will always use an overbar to denote the $\theta$-stationary
$\theta$-stochastic process $\bar q$ associated with a given random variable
$q$.

\ifarxiv{%
\subsection{Perfection of Crude Cocycles}
We briefly review the theory of perfection of crude cocycles discussed in
Arnold's \cite[Section 1.2]{arnold-2010}. It is customary for the definition of
an RDS to require that the cocycle property of $\varphi$ in (S2) holds for every
$s,t \in \Tplus$ and {\bf every} $\omega \in \Omega$. If we want to
emphasize this fact we shall say that $\varphi$ is a {\em perfect cocycle} (over
the underlying MPDS $\theta$).

\begin{definition}[Crude Cocycle]\label{def:crudecocycle}
We say that $\varphi\colon \Tplus \times \Omega \times X \rightarrow X$ is a
{\em crude cocycle {\em (}over $\theta$\,{\em )}} if it is a
$(\borel(\Time)\otimes\calF\otimes\borel)$-measurable map satisfying (S1) and
\begin{itemize}
\item[(S2$^\prime$)] $\varphi(0,w) = id_X$ for each $\omega \in
\Omega$, and for every $s \in \Tplus$, there exists a subset $\Omega_s \subseteq
\Omega$ of
full measure such that
\[
\varphi(t+s, \omega) = \varphi(t, \theta_s\omega) \circ \varphi(s,
\omega)\,,\quad \forall t \in \Tplus\,,\ \forall \omega \in \Omega_s\,.
\]
\end{itemize}
The $\Omega_s$'s need not be $\theta$-invariant.
\mytriangle
\end{definition}

As Arnold points out, there are circumstances where this flexibility in the
requirements for a cocycle is desirable. For instance, the flow of a stochastic
differential equation is only guaranteed to be a crude cocycle \cite[Section
2.3]{arnold-2010}. Another example will come up below after we introduce random
dynamical systems with inputs. Consider (deterministic) controlled dynamical
systems. Such systems yield a (deterministic) dynamical system when restricted
to a constant input. One would expect a sensible extension of the concept to
random dynamical systems to have an analogous property. However we shall see in
the proof of Lemma \ref{lemm:rdsi-constant} in the next section that the
restriction of the flow of an RDS with inputs to a $\theta$-stationary input is
not necessarily a perfect cocycle.

\begin{definition}[Indistinguishable Crude Cocycles]
 Let $\theta$ be an MPDS and $\varphi,\psi\colon \Tplus \times \Omega \times X
\rightarrow X$ crude cocycles over $\theta$. If there exists a subset $N \in
\calF$ such that
$\realp(N) = 0$ and
\[
 \{\omega \in \Omega\,;\ \varphi(t,\omega) \neq \psi(t,\omega)\,,\ \text{for
some}\ t \in \Tplus\} \subseteq N\,,
\]
then $\varphi$ and $\psi$ are said to be {\em indistinguishable}.
\mytriangle
\end{definition}

\begin{definition}[Perfection of Crude Cocycles]\label{def:perfection}
 A {\em perfection} of a crude cocycle $\varphi$ is a perfect cocycle $\psi$
(with respect to the same MPDS and evolving in the same state space) such that
$\varphi$ and $\psi$ are indistinguishable. In this case we say that $\varphi$
is {\em perfected by $\psi$}, or that $\psi$ is a {\em perfection of $\varphi$}.
\mytriangle
\end{definition}

In this work we will not need the full power of Arnold's theory of perfection of
crude cocycles. The $\Omega_s$'s of the crude cocycles we shall have to deal
with will be actually $\theta$-invariant, and so it will be enough to simply
redefine the flow on a $\theta$-invariant set of measure zero. We nevertheless
state the proposition below for the sake of completeness.

\begin{proposition}\label{thm:perfection}
Let $\theta = (\calF,\Omega,\realp,(\theta_t)_{t \in \Time})$ be an MPDS with
$\Time = \z$ or $\Time = \r$. Suppose $\varphi\colon \Tplus \times \Omega \times
X \rightarrow X$ is a crude cocycle over $\theta$ evolving in a locally compact,
locally connected, Hausdorff topological space $X$. Then $\varphi$ can be
perfected; in other words, there exists a perfect cocycle $\psi\colon \Tplus
\times \Omega \times X \rightarrow X$ such that $\varphi$ and $\psi$ are
indistinguishable.
\end{proposition}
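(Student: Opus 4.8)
I would treat the two time lattices separately: the discrete case is elementary and self-contained, while the continuous case is in substance Arnold's perfection theorem, and it is only there that the topological hypotheses on $X$ are used. \emph{When $\Time=\z$} we have $\Tplus=\{0,1,2,\dots\}$, and the natural candidate is obtained by iterating the time-one map: set $\psi(0,\omega):=id_X$ and, for $n\geq 1$,
\[
\psi(n,\omega):=\varphi(1,\theta_{n-1}\omega)\circ\varphi(1,\theta_{n-2}\omega)\circ\cdots\circ\varphi(1,\theta_{1}\omega)\circ\varphi(1,\omega).
\]
Each $\psi(n,\omega)$ is a finite composition of maps $\varphi(1,\theta_k\omega)$, hence continuous on $X$ by (S1) and jointly measurable; and, using $\theta_{k+m}=\theta_k\circ\theta_m$ from (T2), one checks directly that $\psi(n+m,\omega)=\psi(n,\theta_m\omega)\circ\psi(m,\omega)$ for \emph{all} $n,m\in\Tplus$ and \emph{all} $\omega\in\Omega$, so $\psi$ is a perfect cocycle. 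For indistinguishability I would prove $\varphi(n,\cdot)=\psi(n,\cdot)$ off a $\realp$-null set by induction on $n$: this holds for $n=0$ by (S2$^\prime$) and for $n=1$ by construction, and if it holds off a null set $M_n$ then, applying (S2$^\prime$) with $s=n$ and $t=1$ off the null set $\Omega\setminus\Omega_n$, one gets $\varphi(n+1,\omega)=\varphi(1,\theta_n\omega)\circ\varphi(n,\omega)=\varphi(1,\theta_n\omega)\circ\psi(n,\omega)=\psi(n+1,\omega)$ for $\omega\notin M_{n+1}:=M_n\cup(\Omega\setminus\Omega_n)$; the $\realp$-null set $N:=\bigcup_{n\geq0}M_n$ then witnesses that $\varphi$ and $\psi$ are indistinguishable. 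The topological hypotheses are not used in this case.

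\emph{When $\Time=\r$} one can no longer iterate a single generator, and the strategy I would follow is Arnold's \cite[Section 1.2]{arnold-2010}. Fix the countable dense subgroup $D:=\{k/2^m:k\in\z,\ m\in\N\}\cap\Tplus$. For each $s\in D$ the crude identity (S2$^\prime$) holds for all $t\in\Tplus$ off the null set $\Omega\setminus\Omega_s$, so off the null set $N_0:=\bigcup_{s\in D}(\Omega\setminus\Omega_s)$ the family $\{\varphi(s,\omega)\}_{s\in D}$ already satisfies the cocycle identity for all $s\in D$ and all $t\in\Tplus$. This is not yet a perfect cocycle: for $s\notin D$ the crude identity holds only off an $s$-dependent null set, and since there are uncountably many such $s$ one cannot simply delete one null set. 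One must instead construct, for $\omega$ in a set of full measure, values $\psi(t,\omega)$ for $t\notin D$ out of the dyadic data $\{\varphi(s,\omega):s\in D\}$ by a limiting/averaging procedure (using the crude identity and Fubini's theorem to control the behaviour along dyadic times for $\realp$-a.e. $\omega$), verify that the resulting $\psi$ is again a jointly measurable cocycle of continuous self-maps of $X$ that is indistinguishable from $\varphi$, and finally put $\psi(t,\omega):=id_X$ on the remaining null set, where (S1), (S2) and indistinguishability hold trivially.

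The main obstacle is precisely this continuous-time extension step. A crude cocycle carries no assumption of continuity --- or indeed of any regularity --- in the time variable, so one cannot naively take $\psi(t,\omega)$ to be a limit of $\varphi(t_n,\omega)$ along dyadics $t_n\to t$; making sense of such a limit, guaranteeing that it is again a continuous self-map of $X$, and guaranteeing that the cocycle identity then holds for \emph{every} $\omega$ outside one fixed null set, is where the assumptions that $X$ be locally compact, locally connected and Hausdorff enter. They ensure, in particular, that uniform-on-compacts limits of continuous self-maps of $X$ remain continuous, and that the function space $C(X,X)$ with the compact--open topology is regular enough for the requisite measurable-selection and modification arguments to close up. I would carry this out following \cite[Section 1.2]{arnold-2010}; as the surrounding text notes, in every application of the proposition in this note the sets $\Omega_s$ are in fact $\theta$-invariant, so only the elementary ``redefine the flow on a $\theta$-invariant null set'' instance of the argument is ever needed.
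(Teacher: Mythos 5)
Your proposal is correct, and it differs from the paper mainly in how much it actually proves. The paper's entire ``proof'' of this proposition is a pointer to Arnold: Theorem 1.2.1 of \cite{arnold-2010} for $\Time=\z$ and Theorem 1.2.2 together with Corollary 1.2.4 for $\Time=\r$. For the discrete case you instead give a complete, self-contained argument --- defining $\psi(n,\omega)$ by iterating the time-one map and checking the perfect cocycle identity directly from (T2), then establishing indistinguishability by induction with the exceptional set $N=\bigcup_n M_n$ a countable union of null sets --- which is essentially a reproof of Arnold's Theorem 1.2.1 and correctly shows that the topological hypotheses on $X$ play no role there; this is a genuine gain in transparency over the bare citation. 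For the continuous case your argument is, like the paper's, ultimately a deferral to Arnold's Section 1.2: your sketch of the dyadic/Fubini-type extension and of where local compactness and local connectedness of $X$ enter is plausible and in the right spirit, but the hard steps (constructing $\psi(t,\omega)$ for non-dyadic $t$ from the dyadic data and verifying continuity and the everywhere cocycle identity off a single null set) are not carried out, so that half of the proposal has exactly the same status as the paper's proof. Your closing remark --- that in this note the $\Omega_s$'s are always $\theta$-invariant, so only the trivial ``redefine on a $\theta$-invariant null set'' instance is ever needed --- matches the paper's own comment preceding the proposition.
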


\begin{proof}
See Arnold \cite[Theorem 1.2.1]{arnold-2010} for the discrete case, which
actually holds with weaker hypotheses and yields stronger conclusions. For the
continuous case, see Arnold \cite[Theorem 1.2.2 and Corollary
1.2.4]{arnold-2010}.
\end{proof}%
}%
\subsection{Pullback Convergence}\label{subsec:convergence}
We work with the notion of pullback convergence developed in the literature
and canonized in the works of Arnold and Chueshov
\cite{arnold-2010,chueshov-2002}.

\begin{definition}(Pullback Convergence)
 A $\theta$-stochastic process $\xi \in \stochQ$ is
said
to {\em converge to a random variable $\xi_\infty \in \omegaqb$ in the
pullback sense} if
\[
 \check\xi_t(\omega)
 := \xi_t(\theta_{-t}\omega) \longrightarrow \xi_\infty(\omega) \quad
\text{as}\,\quad t \rightarrow \infty\,,
\]
for $\theta$-almost all $\omega \in \Omega$. 
\mytriangle
\end{definition}

\begin{proposition}\label{prop:convergencetoeq}
 Let $(\theta,\varphi)$ be an RDS evolving on a metric space $X$. Suppose there
exists a random initial state
$x \in
X^\Omega_{\borel}$ and a map ${x}_\infty: \Omega
\rightarrow X$ such that
\begin{equation}\label{eq:pullbacklimit}
 \check\xi^{x}_t(\omega) =
\varphi(t,\thetaminus{t}\omega,x(\thetaminus{t}\omega)) \longrightarrow
{x}_\infty(\omega)\quad \text{as}\quad t \rightarrow \infty\,,\quad
\widetilde{\forall} \omega \in \Omega\,.
\end{equation}
Then ${x}_\infty$ is an equilibrium.
\end{proposition}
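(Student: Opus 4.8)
The plan is to turn the convergence hypothesis into the defining identity of an equilibrium by a single application of the cocycle property followed by passage to the limit. Concretely, I will first establish that for every $s,t \in \Tplus$ and \emph{every} $\omega \in \Omega$ one has
\[
\check\xi^x_{t+s}(\omega) = \varphi\bigl(s,\thetaminus{s}\omega,\,\check\xi^x_t(\thetaminus{s}\omega)\bigr).
\]
To see this, apply (S2) in the form $\varphi(t+s,\eta) = \varphi(s,\theta_t\eta)\circ\varphi(t,\eta)$ with $\eta := \thetaminus{t+s}\omega$. By (T2) we have $\theta_t\eta = \thetaminus{s}\omega$, and since $\eta = \thetaminus{t}(\thetaminus{s}\omega)$, unwinding Definition~\eqref{eq:pullbacktrajectory} of the pullback trajectory on both sides yields the displayed identity. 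Note there is no exceptional set here: $\varphi$ is a (perfect) cocycle, so (S2) holds for all $\omega$.

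Next, let $\widetilde{\Omega} \subseteq \Omega$ be the $\theta$-invariant full-measure set on which \eqref{eq:pullbacklimit} holds. Fix $s \in \Tplus$ and $\omega \in \widetilde{\Omega}$, and let $t \to \infty$ in the identity above. On the left, $t+s \to \infty$, so $\check\xi^x_{t+s}(\omega) \to {x}_\infty(\omega)$. On the right, $\thetaminus{s}\omega \in \widetilde{\Omega}$ by $\theta$-invariance, hence $\check\xi^x_t(\thetaminus{s}\omega) \to {x}_\infty(\thetaminus{s}\omega)$; since $\varphi(s,\thetaminus{s}\omega,\cdot)$ is continuous by (S1), the limit passes inside. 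Therefore ${x}_\infty(\omega) = \varphi\bigl(s,\thetaminus{s}\omega,{x}_\infty(\thetaminus{s}\omega)\bigr)$ for all $s \in \Tplus$ and all $\omega \in \widetilde{\Omega}$. Applying this with $\theta_s\omega$ in place of $\omega$, which is legitimate because $\theta_s\omega \in \widetilde{\Omega}$, gives $\varphi(s,\omega,{x}_\infty(\omega)) = {x}_\infty(\theta_s\omega)$ for all $s \in \Tplus$, $\omega \in \widetilde{\Omega}$, i.e.\ the equilibrium condition on the $\theta$-invariant full-measure set $\widetilde{\Omega}$.

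Finally, for ${x}_\infty$ to qualify as an equilibrium it must lie in $\omegaxb$. On $\widetilde{\Omega}$, ${x}_\infty$ is the pointwise limit of the $\borel$-measurable maps $\omega \mapsto \check\xi^x_n(\omega) = \xi^x_n(\thetaminus{n}\omega)$, $n \in \N$ (each measurable as a composition of the measurable maps $\xi^x_n$ and $\thetaminus{n}$), so ${x}_\infty|_{\widetilde{\Omega}}$ is $\borel$-measurable; redefining ${x}_\infty$ off $\widetilde{\Omega}$ by any constant makes it a random variable without disturbing the identity on $\widetilde{\Omega}$.

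The only genuinely delicate point is the bookkeeping of the almost-everywhere quantifier: we need a \emph{single} $\theta$-invariant full-measure set that works for all shifts $s$ at once. This is exactly what the argument achieves, because the cocycle identity in the first step carries no exceptional set and $\widetilde{\Omega}$ is $\theta$-invariant, so $\thetaminus{s}$ and $\theta_s$ map $\widetilde{\Omega}$ into itself. (Had $\varphi$ been only a crude cocycle, one would first replace it by a perfection via Proposition~\ref{thm:perfection} before running the same argument.)
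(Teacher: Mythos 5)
Your proof is correct and follows essentially the same route as the paper's: both split the pullback trajectory via the cocycle property, pass to the limit using the continuity in (S1) together with the $\theta$-invariance of the full-measure set, and obtain measurability of $x_\infty$ as a pointwise limit of measurable maps along an integer sequence. The only cosmetic difference is that you first derive the pullback form $x_\infty(\omega)=\varphi(s,\thetaminus{s}\omega,x_\infty(\thetaminus{s}\omega))$ and then shift by $\theta_s$, whereas the paper computes $x_\infty(\theta_\tau\omega)$ directly; your explicit remark about redefining $x_\infty$ off $\widetilde{\Omega}$ is a slightly more careful treatment of the measurability point than the paper's.
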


\begin{proof}
 For each $t \in \Tplus$, the map $\omega \mapsto
\varphi(t,\thetaminus{t}\omega,x(\thetaminus{t}\omega))$, $\omega \in \Omega$,
is measurable, since it is the composition of measurable maps:
\[
 \omega \longmapsto \thetaminus{t}\omega \longmapsto
x(\thetaminus{t}\omega)\,,
\]
\[
 (\thetaminus{t}\omega, x(\thetaminus{t}\omega)) \longmapsto
\varphi(t,\thetaminus{t}\omega, x(\thetaminus{t}\omega))\,.
\]
So it follows from \cite[Chapter 11, \textsection 1, Property M7 on page
248]{lang-1983} that ${x}_\infty$ is measurable. (If $\Time$ is continuous time,
just pick a subsequence $(t_n)_{n\in\n}$ going to infinity.)

In addition, for each $\omega \in \Omega$ such that the limit in
(\ref{eq:pullbacklimit}) exists, and each $\tau \in \Tplus$, we have
\[
 \lim_{t\to\infty}\varphi(t-\tau,\theta_{\tau-t}\omega,x(\theta_{\tau-t}
\omega)) = {x}_\infty(\omega)
\]
also. By $\theta$-invariance, the limit in (\ref{eq:pullbacklimit}) exists
for $\theta_\tau\omega$ as well. Hence
\[
 \begin{array}{rcl}
  {x}_\infty(\theta_\tau\omega)
&= &\displaystyle\lim_{t\to\infty}
\varphi(t,\theta_{-t}\theta_\tau\omega,x(\theta_{-t}\theta_\tau\omega)) \\[1ex]
\ifonecolumn{&=&}\iftwocolumn{&&\adjusttoleft =\;\;}
\displaystyle\lim_{t\to\infty} \varphi(\tau,\theta_{t - \tau} \theta{\tau
-t}\omega,\varphi(t-\tau,\theta_{\tau-t}\omega,x(\theta_{\tau-t}\omega)))
\\[1ex]
\ifonecolumn{&=&}\iftwocolumn{&&\adjusttoleft =\;\;}
\varphi(\tau,\omega,{x}_\infty(\omega))
 \end{array}
\]
by continuity (property (S1) in the definition of an RDS).
\end{proof}
\section{RDS's with Inputs and Outputs}
We now define a new concept.  It extends the notion of RDS's to systems in which
there is an external input or forcing function.  A contribution of this work
is the precise formulation of this concept, particularly the way in which the
argument of the input is shifted in the semigroup (cocycle) property.

As in the previous section, given a metric space $U$, we equip it with its Borel
$\sigma$-algebra $\borel(U)$ and denote by $\omegaub$ the set of Borel
measurable maps $\Omega \rightarrow U$. Let $\sthetau$ be the set of all
$\theta$-stochastic processes $\Tplus \times \Omega \rightarrow U$. Given $u,v
\in \sthetau$ and $s \in \Tplus$, we define $u \lozenge_s v\colon \Tplus \times
\Omega \rightarrow U$ by
\[
(u \lozenge_s v)_\tau(\omega) = \left\{
\begin{array}{rl}
u_\tau(\omega)\,,\ &0 \leqslant \tau < s\\
v_{\tau - s}(\theta_s\omega)\,,\ &s \leqslant \tau
\end{array}
\right.\,, 
\ifonecolumn{\quad \tau \in \Tplus\,,\ \omega \in \Omega\,.}
\]
\iftwocolumn{for all $\tau \in \Tplus$, $\omega \in \Omega$.}%
Given $\tilde u \in U$, we denote by $c(\tilde u)$ the trivial
$\theta$-stochastic process defined by $(c(\tilde u))_t(\omega) := \tilde u$ for
every $t \in \Tplus$ and every $\omega \in \Omega$.

\begin{definition}[$\theta$-Inputs]
We say that a subset ${\calU} \subseteq \sthetau$ is a {\em set of
$\theta$-inputs} if $c(\tilde u) \in \calU$, for every $\tilde u \in U$, and
$u\lozenge_s v \in \calU$, for any $u,v \in \calU$ and any
$s \in \Tplus$. In other words, a set of $\theta$-inputs is a subset of
$\sthetau$ which contains all constant inputs and is closed under concatenation.
\mytriangle
\end{definition}

\begin{definition}\label{def:RDSI}
A \emph{random dynamical system with inputs {\em (}RDSI\,{\em )}} is a triple
$(\theta
,\varphi,\calU)$
consisting of an MPDS
$\theta =(\Omega ,{\cal F},{\mathbb P},\{\theta _t\}_{t \in \Time})$, a set of
$\theta$-inputs $\calU \subseteq \sthetau$, and a map
\[
\varphi:\Tplus\times \Omega \times X\times \calU \rightarrow X
\]
satisfying
\ben
\item[(I1)] $\varphi(\cdot,\cdot,\cdot,u)\colon \Tplus \times \Omega \times X
\rightarrow X$ is ($\borel(\Tplus)\otimes\calF\otimes\borel$)-measurable for
each fixed $u \in \calU$;
\item[(I1$^\prime$)]
the map $\widetilde\varphi\colon \Tplus \times \Omega \times X \times U
\rightarrow X$ defined by
\[
  \widetilde\varphi(t,\omega,x,\tilde u) := \varphi(t,\omega,x,c(\tilde u))\,,
\quad (t,\omega,x,\tilde u) \in \Tplus \times \Omega \times X \times U\,,
\]
is ($\borel(\Tplus)\otimes \calF \otimes \borel
\otimes
\borel(U)$)-measurable;
\item[(I2)]
$\varphi(t,\omega ,\cdot ,u ): X \rightarrow X$ is continuous, for each fixed
$(t,
\omega, u) \in \Tplus \times \Omega \times \calU$;
\item[(I3)]
$\varphi(0,\omega ,x,u) = x$ for each $(\omega , x, u) \in \Omega \times X
\times \calU$;
\item[(I4)]
given $s,t\in \Tplus$, $\omega \in \Omega $, $x\in X$, $u, v \in \calU$,
if
$$\varphi(s,\omega,x,u) = y$$ and $$\varphi(t, \theta_s\omega, y, v) = z\,,$$
then $$\varphi(s+t, \omega, x, u\lozenge_sv) = z\,;$$
\item[(I5)] and given $t \in \Tplus$, $\omega \in \Omega$, $x \in X$, and $u,v
\in
\calU$, if $u_\tau(\omega) = v_\tau(\omega)$ for almost all $\tau \in [0,t)$,
then $\varphi(t,\omega,x,u) = \varphi(t,\omega,x,v)$.
\een
We refer to the elements $u\in \calU$ as \emph{$\theta$-inputs}, or simply
\emph{inputs}. Whenever we talk
about an RDSI $(\theta,\varphi,\calU)$, we tacitly assume the notation laid
above,
unless otherwise specified.
\mytriangle
\end{definition}

\noindent (I1), (I1$^\prime$) and (I2) are regularity conditions. (I3) means
that nothing
has
``happened'' if one is still at time $t = 0$. (I4) generalizes the cocycle
property and (I5) states that the evolution of an RDS subject to an input $u$
is, so to speak, independent of ``irrelevant'' random input values.

\begin{remark}\label{rem:rhoshift}
Notice that for each $s,t\in \Tplus$, $x\in X$, $\omega \in \Omega $,
\[
\varphi(t+s,\omega ,x,u) \;=\; \varphi(t,\theta _s\omega ,\varphi(s,\omega
,x,u),\rho _s(u))\,, \quad \forall u \in \calU\,,
\]
where $\rho_s\colon \sthetau \rightarrow \sthetau$ is defined by Equation
(\ref{eq:rhodef})\footnote{We will use
the same notation $\rho_s$ for the shift operator $\calS_\theta^V \rightarrow
\calS_\theta^V$ defined by Equation (\ref{eq:rhodef}), irrespective of the
underlying metric space $V$. Since the domain of any $\theta$-stochastic process
is always $\Tplus \times \Omega$, this will not be a source of confusion.}:
\begin{equation}\label{eq:shiftrevisited}
(\rho _s(u))_t(\theta_s\omega) \equiv
u_{t+s}(\omega)\,.
\end{equation}
This follows from (I4) with $v = \rho_s(u)$, which then
yields $u\lozenge_sv = u$.
\mybox
\end{remark}

The shift operator $\rho_s$ has a physical interpretation.
The right-hand side is the input as interpreted by an observer of the RDSI
$\varphi$ who started at time $t_1 = 0$, while the left-hand side is how someone
who started observing the system at time $t_2 = s$ would describe it at time $t$
($+\ t_2$). Following this interpretation, a $\theta$-stationary input would
then
be an input which is observed to be just the same, regardless of when one
started observing it.

We also introduce a notion of outputs.
\bd{def:RDSIO}\label{def:rdsio}
A \emph{random dynamical system with inputs and outputs {\em (}RDSIO\,{\em )}}
is a quadruple
$(\theta, \varphi, \calU, h)$ such that $(\theta ,\varphi, \calU)$ is an RDSI,
and
\[
h: \Omega \times X \rightarrow Y
\]
is an ($\calF\otimes\borel$)-measurable map into a metric space $Y$ such that
$h(\omega,\cdot)$ is continuous for each $\omega \in \Omega$. In this context we
call $h$ an {\em output function} and $Y$ an {\em output space}.

It may sometimes be useful to refer to a {\em random dynamical system with
outputs {\em (}RDSO\,{\em )}} only, by which we mean a triple
$(\theta,\varphi,h)$ where
$(\theta,\varphi)$ is an RDS and $h$ is an output function.
\ed
The $\Omega$-component in the domain of output functions is important. It allows
for the concept to model uncertainties in the readout as well. We will return to
systems with outputs further down, in the context of RDSIO's which can be
realized as cascades of RDSO's and RDSIO's.

\begin{example}(Linear Example)\label{ex:linear}
Suppose that $\Time = \r$, and also that $X = U = \r$. Let $\calU := \soU
\subseteq
\sthetau$ be the set of $\theta$-inputs consisting of all $\theta$-stochastic
processes $u \in \sthetau$ such that
\[
t \longmapsto u_t(\omega)\,,\quad t \geqslant 0\,,
\]
is locally essentially bounded for each $\omega \in \Omega$. For each $u
\in
\calU = \soU$, consider the linear Random
Differential Equation with Inputs (RDEI)
\begin{equation}\label{eq:linear}
 \dot \xi = a(\theta_t\omega)\xi + b(\theta_t\omega)u_t(\omega)\,,\quad t
\geqslant 0\,,
\end{equation}
where $a,b \in X^\Omega_\borel$ are such that
\[
 t \longmapsto a(\theta_t\omega)\quad \text{and}\quad t \longmapsto
b(\theta_t\omega)\,,\quad t \geqslant 0\,,
\]
are locally essentially bounded for all $\omega \in \Omega$. Then for each
$\omega \in \Omega$ and any initial state $x \in X$, the ordinary differential
equation (\ref{eq:linear}) has a unique solution $\varphi(\cdot\,;\omega,x,u)$
defined by
\ifonecolumn{%
\begin{equation}\label{eq:linearformula}
\varphi(t;\omega,x,u) := x\e^{\int_0^t
a(\theta_\tau\omega)\,d\tau} + \int_0^t
b(\theta_\sigma\omega)u_\sigma(\omega)\e^{\int_\sigma^t
a(\theta_\tau\omega)\,d\tau}\,d\sigma\,,
\end{equation}%
for each $t \geqslant 0$.
}%
\iftwocolumn{%
$t\varphi(t;\omega,x,u)=$
\begin{equation}\label{eq:linearformula}
x\e^{\int_0^t
a(\theta_\tau\omega)\,d\tau} + \int_0^t
b(\theta_\sigma\omega)u_\sigma(\omega)\e^{\int_\sigma^t
a(\theta_\tau\omega)\,d\tau}\,d\sigma\,,\quad t \geqslant 0\,.
\end{equation}%
}%
This defines an RDSI $\varphi \colon \Tplus \times \Omega \times X \times
\calU \rightarrow X$ over $\theta$. Indeed, (I1) and (I1$^\prime$)
follow
from the fact that the
limit of a sequence of measurable functions is measurable. properties (I2) and
(I3)
follow directly from (\ref{eq:linearformula}). And (I4) and (I5) follow from
uniqueness
of solutions applied for each fixed $\omega \in \Omega$---one basically verifies
that both sides of each equation we want to prove to be true, when looked
at as functions of $t$, define solutions of the same differential equation with
the same initial condition.
\mydiamond
\end{example}

We use $\linftyofX$ instead of the more traditional 
$L^\infty(\Omega;X)$ to denote the space of essentially bounded, measurable maps
$\Omega \rightarrow X$. This way we can be more consistent
with the
notations $\omegaxb$ for random variables and $\omegaxtheta$ for tempered
random variables $\Omega \rightarrow X$ (see definition in Section
\ref{sec:characteristic}). Similarly, we denote by $\sinftyU$ the
subset of
$\theta$-inputs consisting of the globally essentially bounded
$\theta$-stochastic processes in $\sthetau$.

\begin{definition}(Bounded RDSI)\label{def:boundedcocycle}
  An RDSI $(\theta,\varphi,\calU)$ is said to be {\em bounded} if the random
state
\[
  \omega \longmapsto \varphi(t,\omega,x(\omega),u)\,,\quad \omega \in \Omega\,,
\]
is essentially bounded for every fixed $(t, x, u) \in \Tplus \times \linftyofX
\times (\calU \cap \sinftyU)$.
\mytriangle
\end{definition}
We emphasize the fact that we only test the condition in the definition above
for bounded random initial states $x \in \linftyofX$.
\subsection{Pullback trajectories}\label{sec:pullbacktrajectories}
Let $(\theta,\varphi,\calU,h)$ be an RDSIO with output space $Y$. Given $x \in
\omegaqb$ and $u \in
\calU$, we define the ({\em forward}\,) {\em trajectory starting at $x$ and
subject to $u$} to be the $\theta$-stochastic process $\xi^{x,u} \in \stochQ$
defined by
\[
 \xi_t^{x,u}(\omega) := \varphi(t,\omega,x(\omega),u)\,, \quad (t,\omega) \in
\Tplus \times \Omega\,.
\]
We then define the {\em pullback trajectory starting at $x$ and subject to $u$}
to be the $\theta$-stochastic process $\check\xi^{x,u} \in \stochQ$ defined by
\[
 \check\xi_t^{x,u}(\omega) := \xi_t^{x,u}(\thetaminus{t}\omega) =
\varphi(t,\thetaminus{t}\omega,x(\thetaminus{t}\omega),u)\,, \quad (t,\omega)
\in \Tplus \times \Omega\,.
\]
The ({\em forward}\,) {\em output trajectory corresponding to initial state $x$
and input $u$} is defined to be the $\theta$-stochastic process
$\eta^{x,u} \in \sthetay$, where
\[
 \eta_t^{x,u}(\omega) := h(\theta_t\omega, \varphi(t,\omega,x(\omega),u)) =
h(\theta_t\omega,\xi_t^{x,u}(\omega))\,, \quad (t,\omega) \in \Tplus \times
\Omega\,,
\]
while the {\em pullback output trajectory corresponding to initial state $x$ and
input $u$} is analogously defined to be the $\theta$-stochastic process
$\check\eta^{x,u} \in \sthetay$, where
\[
\begin{array}{rcll}
 \check\eta_t^{x,u}(\omega) 
&:= &\eta_t^{x,u}(\thetaminus{t}\omega) &\\
&= &h(\omega,
\varphi(t,\thetaminus{t}\omega,x(\thetaminus{t}\omega),u)) &\\
&= &h(\omega, \check\xi_t^{x,u}(\omega))\,, &\quad (t,\omega) \in
\Tplus \times \Omega\,.
\end{array}
\]
For RDSI's the definitions of forward and pullback trajectories are the same and
we also use the notations $\xi^{x,u}$ and $\check\xi^{x,u}$. For RDSO's the
definitions are analogous, except that they of course do not depend on any
inputs. So forward and pullback trajectories are defined as for RDS's and we
also use the notations $\xi^x$ and $\check\xi^x$, respectively. We denote the
forward and pullback output trajectories corresponding to
initial state $x$ by $\eta^{x}$ and $\check\eta^{x}$, respectively:
\[
 \eta_t^{x}(\omega) := h(\theta_t\omega, \varphi(t,\omega,x(\omega))) =
h(\theta_t\omega, \xi^x_t(\omega))
\]
and
\[
 \check\eta_t^{x}(\omega) := h(\omega,
\varphi(t,\thetaminus{t}\omega,x(\thetaminus{t}\omega))) =
h(\omega, \check\xi^x_t(\omega))
\]
for every $(t,\omega) \in \Tplus \times \Omega$.

Note that the input $u$ is not shifted in the argument of $\varphi$ in the
pullback, while at first one might intuitively think it should have been. There
are several reasons why this is so. First notice that
\[
 \check\xi_t^{x,u}(\omega) = \xi_t^{x,u}(\thetaminus{t}\omega)\,, \quad \forall
(t,\omega) \in \Tplus \times \Omega\,.
\]
So $\check\xi^{x,u}$ is just the pullback of the $\theta$-stochastic process
$\xi^{x,u}$, as it should be the case. However we are more concerned with what
happens in the context of cascades and feedback interconnections of RDSIO's. But
before we get to that we first discuss discrete RDSIO's. This will further
motivate axioms (I1)--(I5) in the definition of an RDSI, provide ---and
completely characterize--- a whole class of examples, and provide the framework
for said discussion of pullback trajectories and cascades.

We say that an RDSI (or RDSIO) is {\em discrete} when $\Time = \z$. We first
note
that, just like RDS's \cite[Section 2.1]{arnold-2010}, RDSI's also have their
flows completely determined by
their state at time $t = 1$. 

\begin{theorem}\label{thm:discretegenerators}
 For every discrete RDSI $(\theta,\varphi,\calU)$, there exists a unique map
$f\colon \Omega \times X \times U \rightarrow X$ such that
\begin{itemize}
 \item[{\em (G1)}] $f\colon \Omega \times X \times U \rightarrow
X$
is $(\calF\otimes\borel\otimes\borel(U))$-measurable,

\item[{\em (G2)}] $f(\omega,\cdot,\tilde u)\colon X \rightarrow
X$ is continuous
for each $(\omega,\tilde u) \in \Omega \times U$,
\end{itemize}
and
\begin{equation}\label{eq:discrete}
 \varphi(n+1, \omega, x, u) = f(\theta_n\omega, \varphi(n,\omega,x,u),
u_n(\omega))\,,
\end{equation}
for every $(n,\omega,x,u) \in \Tplus \times \Omega \times X \times \calU$.

Conversely, given an MPDS $\theta$, a set of $\theta$-inputs $\calU$ and a map
$$f\colon \Omega \times X \times U \rightarrow X$$ satisfying {\em (G1)} and
{\em
(G2)}, define $\varphi\colon \Tplus \times \Omega \times X \times \calU
\rightarrow X$ recursively by
\begin{equation}\label{eq:discretedef1}
 \varphi(0,\omega,x,u) := x\,,\quad (\omega,x,u) \in \Omega \times X \times
\calU\,,
\end{equation}
and {\em Equation (\ref{eq:discrete})}. Then $(\theta,\varphi,\calU)$ is an
RDSI.

We refer to the map $f$ as the {\em generator} of the RDSI
$(\theta,\varphi,\calU)$.
\end{theorem}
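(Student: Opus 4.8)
The plan is to prove the two directions separately, and within each to treat the construction/verification of the map $f$ (resp.\ the cocycle $\varphi$) by induction on $n$.

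\emph{Forward direction (uniqueness and existence of $f$).} For uniqueness, note that if $f$ satisfies Equation~(\ref{eq:discrete}), then setting $n = 0$ and using (I3) (i.e.\ $\varphi(0,\omega,x,u) = x$) forces
\[
 f(\omega, x, u_0(\omega)) = \varphi(1,\omega,x,u)
\]
for every $(\omega,x,u)$. So the value of $f(\omega,x,\tilde u)$ is pinned down as soon as $\tilde u$ is realized as $u_0(\omega)$ for some admissible input $u$ — and here is where the hypotheses on $\calU$ enter: the constant input $c(\tilde u)$ lies in $\calU$, so $f(\omega,x,\tilde u) := \varphi(1,\omega,x,c(\tilde u))$ is a well-defined candidate, and it is forced. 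This is how I would \emph{define} $f$. Measurability (G1) then follows from (I1$^\prime$) applied at $t = 1$ (the map $\widetilde\varphi$ is jointly measurable, and $f$ is its restriction to $\{t=1\}$, which is measurable since $\{1\}$ is a Borel subset of $\Tplus$ carrying the counting measure); continuity (G2) follows from (I2) applied with input $c(\tilde u)$ at $t = 1$. It remains to verify Equation~(\ref{eq:discrete}) for all $n$, i.e.\ that this $f$, built only from constant inputs, correctly propagates $\varphi$ under an \emph{arbitrary} input $u \in \calU$. Here is the crux: I would apply (I4) with $s = n$, the input ``$u$'' on $[0,n)$, and the input $v = c(u_n(\omega))$ afterwards. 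Writing $y := \varphi(n,\omega,x,u)$, (I4) gives $\varphi(n+1,\omega,x,\,u\lozenge_n c(u_n(\omega))) = \varphi(1,\theta_n\omega, y, c(u_n(\omega))) = f(\theta_n\omega, y, u_n(\omega))$. Finally, by construction the concatenated input $u\lozenge_n c(u_n(\omega))$ agrees with $u$ on $[0,n+1)$ at the sample point $\omega$ (on $[0,n)$ it equals $u$; on $[n,n+1)$ — a single time step, $\tau = n$ — it equals $(c(u_n(\omega)))_0(\theta_n \cdot$ pulled back$)$, which at $\omega$ evaluates to $u_n(\omega)$), so (I5) lets us replace it by $u$ on the left-hand side, yielding Equation~(\ref{eq:discrete}). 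I expect this last bookkeeping — checking that the concatenation and $u$ agree on the relevant half-open interval at the right sample point so that (I5) applies — to be the main obstacle, since it is exactly where the shift conventions baked into $\lozenge_s$ and $\rho_s$ must be handled with care.

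\emph{Converse direction.} Given $f$ satisfying (G1)--(G2), define $\varphi$ by (\ref{eq:discretedef1}) and (\ref{eq:discrete}); this is a valid recursive definition since at stage $n+1$ the right-hand side only refers to $\varphi(n,\cdot)$. I would then verify (I1)--(I5). (I1): for fixed $u$, measurability of $\omega \mapsto \varphi(n,\omega,x,u)$ in $(\omega,x)$ jointly follows by induction — the base case is the identity, and the inductive step composes $f$ (measurable by (G1)) with $\theta_n$ (measurable by (T1)), the inductive hypothesis, and $\omega \mapsto u_n(\omega)$ (measurable since $u$ is a $\theta$-stochastic process); joint measurability in $n$ as well is automatic because $\Tplus = \z_{\geqslant 0}$ is countable with the counting measure. (I1$^\prime$): same induction applied with $u = c(\tilde u)$, now tracking the extra variable $\tilde u$ and using that $(c(\tilde u))_n(\omega) = \tilde u$ depends measurably (indeed trivially) on $\tilde u$. (I2): induction on $n$ using that $f(\omega,\cdot,\tilde u)$ is continuous and compositions of continuous maps are continuous. (I3): immediate from (\ref{eq:discretedef1}). (I4): fix $s,t \in \z_{\geqslant 0}$ and prove $\varphi(s+t,\omega,x,u\lozenge_s v) = \varphi(t,\theta_s\omega,\varphi(s,\omega,x,u),v)$ by induction on $t$; for $t=0$ both sides equal $\varphi(s,\omega,x,u)$ (using (I3) on the right and $\varphi(s,\omega,x,u\lozenge_s v) = \varphi(s,\omega,x,u)$ on the left, which itself needs (I5) since $u\lozenge_s v$ agrees with $u$ on $[0,s)$), and the inductive step applies (\ref{eq:discrete}) to both sides and checks that the ``input symbol fed at step $s+t$'' on the left, namely $(u\lozenge_s v)_{s+t}(\omega)$, equals the one on the right, $v_t(\theta_s\omega)$ — which is exactly the definition of $\lozenge_s$ in the regime $\tau = s+t \geqslant s$. (I5): if $u_\tau(\omega) = v_\tau(\omega)$ for $\tau \in \{0,1,\dots,t-1\}$ (``almost all $\tau \in [0,t)$'' with the counting measure means all of them), then induction on $n \leqslant t$ shows $\varphi(n,\omega,x,u) = \varphi(n,\omega,x,v)$, since each recursive step only consults $u_n(\omega)$ with $n < t$. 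None of these steps is deep; the only mild care needed is again consistently reading off the shift conventions in $\lozenge_s$, and recalling that ``almost all'' over $\z$ means ``all.''
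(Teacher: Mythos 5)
Your proposal is correct and follows essentially the same route as the paper: define $f(\omega,x,\tilde u):=\varphi(1,\omega,x,c(\tilde u))$, get (G1)--(G2) from (I1$^\prime$) and (I2), derive Equation~(\ref{eq:discrete}) from (I4) and (I5) (the paper applies (I5) to the time-one flow after the shift decomposition of Remark~\ref{rem:rhoshift}, while you apply it at level $n+1$ to the concatenation $u\lozenge_n c(u_n(\omega))$ --- the same idea rearranged), and verify (I1)--(I5) in the converse by the same inductions on $n$.
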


\begin{proof}
Define $f$ by setting
\[
 f(\omega,x,\tilde u) := \varphi(1,\omega,x,c(\tilde u))\,,\quad
(\omega,x,\tilde u) \in \Omega \times X \times U\,.
\]
Then (G1) and (G2) follow directly from (I1$^\prime$) and (I2), respectively.
Equation (\ref{eq:discrete}) follows from (I4) (see Remark \ref{rem:rhoshift})
and (I5):
\[
\begin{array}{rcl}
\varphi(n+1,\omega,x,u) 
&= &\varphi(1,\theta_n\omega,\varphi(n,\omega,x,u),\rho_n(u)) \\[1ex]
&=
&\varphi(1,\theta_n\omega,\varphi(n,\omega,x,u),
c((\rho_n(u))_0(\theta_n\omega))) \\[1ex]
&= &f(\theta_n\omega,\varphi(n,\omega,x,u),(\rho_n(u))_0(\theta_n\omega))
\\[1ex]
&= &f(\theta_n\omega, \varphi(n,\omega,x,u),u_n(\omega))
\end{array}
\]
for any $(n,\omega,x,u) \in \Tplus \times \Omega \times X \times \calU$.
Uniqueness
follows from (I3) and (I5), together with the computations above performed
backwards
for $t = 0$.

Now suppose $f$ satisfies (G1) and (G2), and that $\varphi$ is defined
recursively by (\ref{eq:discretedef1}) and
(\ref{eq:discrete}). For (I1), pick any $u \in \calU$.
One first shows using induction on $n$ that
\begin{equation}\label{eq:inductivestep}
\varphi(n,\cdot,\cdot,u) =
f(\theta_{n-1}\cdot,\varphi(n-1,\cdot,\cdot,u),u_{ n-1 } (\cdot))
\end{equation}
is (${\cal F} \otimes \borel$)-measurable for each $n \in \z_{>0}$. Indeed,
at $n = 1$ we have 
\[
\varphi(1,\cdot,\cdot,u) =
f(\theta_{1-1}\cdot,\varphi(1-1,\cdot,\cdot,u),u_{1-1} (\cdot)) = f(\cdot,
\cdot, u_0(\cdot))\,,
\]
which is (${\cal F} \otimes \borel$)-measurable, since $f$ satisfies (G1) and
$u_0$ is $\calF$-measurable. Now Equation (\ref{eq:inductivestep}) gives us the
inductive step, since the righthand side is a composition of measurable
functions and, hence, itself measurable. Now pick any $A \in \borel$. We then
have
\[
\varphi(\cdot,\cdot,\cdot,u)^{-1}(A) 
=
\displaystyle\bigcup_{n=0}^\infty \{n\} \times \varphi(n,\cdot,\cdot,u)^{-1}(A)
\in
2^{\z_{\geqslant 0}} \otimes {\cal F} \otimes \borel\,,
\]
since it is a countable union of ($2^{\z_{\geqslant 0}} \otimes {\cal F} \otimes
\borel$)-measurable sets. Thus (I1) holds. One can prove (I1$^\prime$) in the
same way by noting that
\[
\widetilde\varphi^{-1}(A) =
\displaystyle\bigcup_{n=0}^\infty \{n\} \times
\widetilde\varphi(n,\cdot,\cdot,\cdot)^{-1}(A)
\]
for each $A \in \borel$, and that
\[
\widetilde\varphi(n,\cdot,\cdot,\cdot) =
f(\theta_{n-1}\cdot,\widetilde\varphi(n-1,\cdot,\cdot,\cdot),\cdot)
\]
is $(\calF\otimes\borel\otimes\borel(U))$-measurable for
each $n \in \z_{> 0}$.

Property (I2) follows from (G2), (\ref{eq:discretedef1}) and
(\ref{eq:discrete}), again by induction on $n \in \z_{\geqslant 0}$. Indeed, at
$n=0$, $\varphi(0,\omega,\cdot,u)$ is continuous for every $\omega \in \Omega$
and every $u \in \calU$. So once (I2) has been proved for a certain value of $n
\in \z_{\geqslant 0}$, we conclude that
\[
\varphi(n+1,\omega,\cdot,u) = f(\theta_n\omega, \varphi(n,\omega,\cdot,u),
u_n(\omega))
\]
is continuous for any $\omega \in \Omega$ and any $u \in \calU$ as well.

Property (I3) follows from Equation (\ref{eq:discretedef1}).

Before proving (I4) we first prove (I5) by
induction on $n \in \Z_{\geqslant 0}$. Fix $\omega \in \Omega$, $x \in X$.
Equation (\ref{eq:discretedef1}) gives us the base of
the induction. Now assume (I5) holds for a certain value of $n \in
\z_{\geqslant 0}$. If $u,v \in \calU$ are such that $u_j(\omega) =
v_j(\omega)$ for $j =
0,1,\ldots,n$, then $\varphi(n,\omega,x,u) =
\varphi(n,\omega,x,v)$ by the induction hypothesis. So it follows from
(\ref{eq:discrete}) that
\[
\begin{array}{rcl}
\varphi(n+1,\omega,x,u) &= &f(\theta_n\omega,\varphi(n,\omega,x,u),u_n(\omega))
\\
&= &f(\theta_n\omega,\varphi(n,\omega,x,v),v_n(\omega)) \\
&= &\varphi(n+1,\omega,x,v)\,.
\end{array}
\]
This proves (I5).

It remains to prove (I4). For each arbitrarily fixed $p \in \Z_{\geqslant 0}$,
we use induction on $n \in \Z_{\geqslant 0}$. For $n = 0$, (I4) holds in virtue
of (I3) and (I5). For any $\omega \in \Omega$, we have $u_j(\omega) =
(u\lozenge_pv)_j(\omega)$ for $j = 0,\ldots,p-1$. Therefore
\[
  \varphi(0,\theta_p\omega,\varphi(p,\omega,x,u),v) = \varphi(p,\omega,x,u) =
\varphi(0+p,\omega,x,u\lozenge_pv)\,,
\]
for any $x \in X$. Now suppose (I4) holds for some $n \in \z_{\geqslant 0}$.
Given $\omega \in \Omega$ and $x \in X$, set $y :=
\varphi(n,\theta_p\omega,x,u)$. Then
\[
\begin{array}{rcl}
\varphi(n + 1,\theta_p \omega, y, v) &= & f(\theta_{n}\theta_p\omega,
\varphi(n,\theta_p\omega,y,v),v_{n}(\theta_p\omega))\\[1ex]
&=
&f(\theta_{n+p}\omega,\varphi(n+p,\omega,x,u\lozenge_pv),
(u\lozenge_pv)_{n+p}(\omega))\\[1ex]
&= &\varphi(n+p+1,\omega,x,u\lozenge_pv)\,.
\end{array}
\] 
This completes the proof that $(\theta,\varphi,\calU)$ is an RDSI.
\end{proof}

Observe that we did not need (I1) in order to prove the first half of the
theorem. So we could have in principle dropped this axiom from the definition
of an RDSI and an analogous result would still hold. We remind the reader that
(I1) was nevertheless used in showing that RDSI's restricted to
$\theta$-stationary inputs are RDS's (see Lemma \ref{lemm:rdsi-constant}
below).

From the construction of the generator $f$ of an RDSI $(\theta,\varphi,\calU)$,
it is clear how the dependence of the flow $\varphi$ at time $n \in
\z_{\geqslant 0}$ and subject to $\omega \in \Omega$ on the input $u$ is really
through the value $u_n(\omega)$ of the input $u$. So when one shifts the
$\Omega$-argument $\omega$ of $\varphi$ in the pullback trajectory to
$\thetaminus{n}\omega$, there is no need to change the input, since
$\varphi(n,\thetaminus{n}\omega,x(\thetaminus{n}\omega),u)$ depends on
$u_n(\thetaminus{n}\omega)$ already. This is our second reason for defining the
pullback trajectories of systems with inputs like so.

We now discuss the third and most important reason this is the
mathematically sensible way of defining pullback trajectories for RDSI's. Let
$(\theta,\psi)$ be a discrete RDS evolving on the state space $Z =
X_1 \times X_2$:
\[
  \psi\colon \z_{\geqslant 0} \times \Omega \times (X_1 \times X_2)
\longrightarrow
(X_1 \times X_2)\,.
\]
Let $g\colon \Omega \times Z \rightarrow Z$ be the generator of
$(\theta,\psi)$. Suppose $g$ can be written as
\begin{equation}\label{eq:discretegeneratorcascade}
  g(\omega,(x_1,x_2)) \equiv
  \begin{pmatrix}
f_1(\omega,x_1)
\\ f_2(\omega,x_2, h_1(\omega,x_1))
  \end{pmatrix}\,,
\end{equation}
where $f_1\colon \Omega \times X_1 \rightarrow X_1$ is the generator of some
RDSO $(\theta,\varphi_1,h_1)$ with output space $Y_1$, and $f_2\colon \Omega
\times X_2 \times U_2 \rightarrow X_2$ is the generator of some RDSI
$(\theta,\varphi_2,\calU_2)$ with input space $U_2 = Y_1$. Let $\pi_2\colon X_1
\times X_2 \rightarrow X_2$ be the projection onto the second coordinate. We
use $\eta_1$ to denote the output trajectories of $(\theta,\varphi_1,h_1)$,
$\xi$
for the state trajectories of $\psi$, and $\xi_2$ for the state trajectories
of $(\theta,\varphi_2,\calU_2)$.

\begin{theorem}\label{prop:discretecascades}
For any random initial state $$z = (x_1,x_2) \in Z_{\borel(Z)}^\Omega =
(X_1)_{\borel(X_1)}^\Omega \times (X_2)_{\borel(X_2)}^\Omega\,,$$
the following two identities hold:
\begin{itemize}
  \item[{\em(1)}] $\psi(n,\omega,z(\omega))
\equiv \begin{pmatrix}                                      
  \varphi_1(n,\omega,x_1(\omega)) \\
  \varphi_2(n,\omega,x_2(\omega),(\eta_1)^{x_1})
\end{pmatrix}$, and

  \item[{\em(2)}] $\pi_2(\check\xi_n^{z}(\omega)) \equiv
(\check\xi_2)_n^{x_2,(\eta_1)^{x_1}}(\omega)$.
\end{itemize}
\end{theorem}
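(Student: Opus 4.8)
The plan is to prove identity (1) by induction on $n \in \z_{\geqslant 0}$ using the generator recursion (\ref{eq:discrete}), and then to derive (2) from (1) by a direct substitution $\omega \mapsto \thetaminus{n}\omega$ together with the way the $\theta$-input $(\eta_1)^{x_1}$ interacts with that shift.

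For part (1): fix a random initial state $z = (x_1,x_2)$. At $n = 0$ both sides equal $(x_1(\omega), x_2(\omega))$ by (I3) / (\ref{eq:discretedef1}). For the inductive step, suppose the identity holds at $n$. Apply the generator $g$ of $(\theta,\psi)$ in the form (\ref{eq:discretegeneratorcascade}) to $\xi_n^z(\omega) = \psi(n,\omega,z(\omega))$: the first component of $\psi(n+1,\omega,z(\omega))$ is $f_1(\theta_n\omega, \varphi_1(n,\omega,x_1(\omega)))$, which by the generator recursion for $(\theta,\varphi_1,h_1)$ equals $\varphi_1(n+1,\omega,x_1(\omega))$. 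The second component is
\[
 f_2\bigl(\theta_n\omega,\ \varphi_2(n,\omega,x_2(\omega),(\eta_1)^{x_1}),\ h_1(\theta_n\omega,\varphi_1(n,\omega,x_1(\omega)))\bigr)\,.
\]
Here I need the key observation that $h_1(\theta_n\omega,\varphi_1(n,\omega,x_1(\omega))) = (\eta_1)_n^{x_1}(\omega)$, which is exactly the definition of the forward output trajectory of an RDSO given in Section~\ref{sec:pullbacktrajectories}. So the second component becomes $f_2\bigl(\theta_n\omega, \varphi_2(n,\omega,x_2(\omega),(\eta_1)^{x_1}), (\eta_1)_n^{x_1}(\omega)\bigr)$, which by the generator recursion (\ref{eq:discrete}) for $(\theta,\varphi_2,\calU_2)$ equals $\varphi_2(n+1,\omega,x_2(\omega),(\eta_1)^{x_1})$. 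This closes the induction, provided one first checks that $(\eta_1)^{x_1}$ genuinely lies in $\calU_2$ — it is a $\theta$-stochastic process on $U_2 = Y_1$, and one uses that $\calU_2$ should contain it (or that $\varphi_2$ only depends on input values along $\omega$, by (I5), so membership in the distinguished set $\calU_2$ is the only thing to verify; this is where I would be careful, and I would note that in the discrete setting (I5) reduces the requirement to the pointwise values $(\eta_1)_j^{x_1}(\omega)$ for $j = 0,\dots,n$).

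For part (2): apply identity (1) with $\omega$ replaced by $\thetaminus{n}\omega$ and project onto the second coordinate, giving
\[
 \pi_2\bigl(\check\xi_n^{z}(\omega)\bigr) = \pi_2\bigl(\psi(n,\thetaminus{n}\omega,z(\thetaminus{n}\omega))\bigr) = \varphi_2\bigl(n,\thetaminus{n}\omega,x_2(\thetaminus{n}\omega),(\eta_1)^{x_1}\bigr)\,.
\]
By the definition of the pullback state trajectory of an RDSI, the right-hand side is precisely $(\check\xi_2)_n^{x_2,(\eta_1)^{x_1}}(\omega)$. The point being illustrated here — and the conceptual payoff of the section — is that no shift of the input $(\eta_1)^{x_1}$ is needed: the dependence of $\varphi_2(n,\thetaminus{n}\omega,\cdot,u)$ on $u$ is only through $u_n(\thetaminus{n}\omega)$ (via the generator recursion), so pulling back the $\Omega$-argument automatically pulls back the relevant input value, exactly as discussed in the paragraph preceding the theorem.

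The main obstacle I anticipate is not the induction itself, which is routine, but rather the bookkeeping around the input: making sure $(\eta_1)^{x_1}$ is a legitimate element of the set of $\theta$-inputs $\calU_2$ (so that $\varphi_2(\cdot,\cdot,\cdot,(\eta_1)^{x_1})$ is even defined), and being precise that the forward output trajectory $(\eta_1)^{x_1}$, not some shifted version $\rho_s((\eta_1)^{x_1})$, is the correct object to feed into $\varphi_2$ at each stage of the recursion. Once the identity $(\eta_1)_n^{x_1}(\omega) = h_1(\theta_n\omega,\varphi_1(n,\omega,x_1(\omega)))$ is in hand and one trusts that $\calU_2$ is closed under the operations producing such output processes, everything else is a direct unwinding of definitions.
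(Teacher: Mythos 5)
Your proof of part (1) is correct and coincides with the paper's: the same induction, the same use of the identity $h_1(\theta_n\omega,\varphi_1(n,\omega,x_1(\omega))) = (\eta_1)_n^{x_1}(\omega)$, and the same appeal to the generator recursion (\ref{eq:discrete}) for each factor. Your concern about $(\eta_1)^{x_1}$ belonging to $\calU_2$ is legitimate but is resolved by the standing assumption (made explicit in Subsection \ref{subsec:cascades}) that $\calU_2$ contains all forward output trajectories of $(\theta,\varphi_1,h_1)$.

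For part (2), however, you take a genuinely different and substantially shorter route than the paper. You observe that identity (1) holds for every $\omega \in \Omega$ with the \emph{fixed} input process $(\eta_1)^{x_1}$, so substituting $\thetaminus{n}\omega$ for $\omega$ and projecting onto the second coordinate yields $\pi_2(\check\xi_n^{z}(\omega)) = \varphi_2(n,\thetaminus{n}\omega,x_2(\thetaminus{n}\omega),(\eta_1)^{x_1})$, which is $(\check\xi_2)_n^{x_2,(\eta_1)^{x_1}}(\omega)$ by the very definition of the pullback trajectory (the input is not shifted in that definition). This is valid, and it makes (2) an immediate corollary of (1). The paper instead runs a second induction, quantified over all initial states, which requires introducing the shifted state $\hat z(\omega) = g(\thetaminus{1}\omega, z(\thetaminus{1}\omega))$, proving the auxiliary Lemma \ref{lemma:zhat} that $(\eta_1)^{\hat x_1} = \rho_1((\eta_1)^{x_1})$, and invoking the cocycle property (I4) of $\varphi_2$ to reassemble the pullback step by step. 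What the paper's longer argument buys is an explicit demonstration of \emph{how} the shift $\rho_1$ arises and cancels when one decomposes a pullback trajectory of the coupled system, which is precisely the conceptual point the section is making about why the input is not shifted in the pullback; your argument buys brevity by taking that definitional convention as given. Both are correct.
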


\begin{proof}
(1) For each fixed $\omega \in \Omega$ and $z \in Z_{\borel(Z)}^\Omega$, we use
induction on $n \in \z_{\geqslant 0}$. At $n = 0$ we have
\[
  \psi(0, \omega, z(\omega)) = z(\omega) =
\begin{pmatrix}
x_1(\omega) \\ x_2(\omega)  
\end{pmatrix} =
\begin{pmatrix}
\varphi_1(0, \omega, x_1(\omega)) \\ \varphi_2(0, \omega, x_2(\omega),
(\eta_1)^{x_1})
\end{pmatrix}\,.
\]
Now suppose that (1) holds for some $n \in \z_{\geqslant 0}$. Since
$$h_1(\theta_{n}\omega, \varphi_1(n, \omega,
x_1(\omega))) = (\eta_1)_{n}^{x_1}(\omega)$$ by definition, it follows that
\[
  \begin{array}{rcl}
    \psi(n+1, \omega, z(\omega)) 
&= & g(\theta_{n}\omega, \psi(n, \omega, z(\omega)))
\\[1ex]
&= & \begin{pmatrix}
       f_1(\theta_{n}\omega, \varphi_1(n, \omega,
x_1(\omega))) \\ 
       f_2(\theta_{n}\omega, \varphi_2(n, \omega, x_2(\omega),
(\eta_1)^{x_1}), (\eta_1)_{n}^{x_1}(\omega) )
     \end{pmatrix} \\[2.5ex]
&= & \begin{pmatrix}
       \varphi_1(n + 1, \omega, x_1(\omega)) \\ 
       \varphi_2(n + 1, \omega, x_2(\omega),
(\eta_1)^{x_1})
     \end{pmatrix}\,.
  \end{array}
\]
This completes the induction.

(2) We prove by induction that (2) holds, for each $n \in \z_{\geqslant 0}$, for
all random initial states $z = (x_1,x_2) \in Z_{\borel(Z)}^\Omega$, and all
$\omega \in \Omega$. At $n = 0$ we have
\[
\begin{array}{rcl}
  \pi_2(\check\xi_0^{z}(\omega))
&= &\pi_2(\psi(0,\omega,(x_1(\omega),x_2(\omega)))) \\[1ex]
&= &x_2(\omega) \\[1ex]
&= &\varphi_2(0,\omega,x_2(\omega),(\eta_1)^{x_1}) \\[1ex]
&= &(\check\xi_2)_0^{x_2,(\eta_1)^{x_1}}\,. 
\end{array}
\]
Now assume (2) has been proved to hold for all integer values of $n$ up to some
$n_0 \geqslant 0$, for all random initial states $z = (x_1,x_2) \in
Z_{\borel(Z)}^\Omega$ and all $\omega \in \Omega$. Given $z = (x_1,x_2) \in
Z_{\borel(Z)}^\Omega$, define $\hat z = (\hat x_1, \hat x_2) \in
Z_{\borel(Z)}^\Omega$ by
\begin{equation}\label{eq:zhat}
\begin{array}{rcl}
  \hat z(\omega) 
&= &
  g(\thetaminus{1}\omega, z(\thetaminus{1}\omega)) \\[1ex]
&:= &
\begin{pmatrix}
  f_1(\thetaminus{1}\omega, x_1(\thetaminus{1}\omega)) \\
  f_2(\thetaminus{1}\omega, x_2(\thetaminus{1}\omega), h_1(\thetaminus{1}\omega,
x_1(\thetaminus{1}\omega)))
\end{pmatrix}\,, \quad \omega \in \Omega\,.
\end{array}
\end{equation}
We have $(\eta_1)^{\hat x_1} = \rho_1((\eta_1)^{x_1})$ by Lemma
\ref{lemma:zhat}
below, and also
\[
h_1(\thetaminus{(n_0+1)}\omega, x_1(\thetaminus{(n_0+1)}\omega)) = 
(\eta_1)^{x_1}_0(\thetaminus{(n_0+1)}\omega)\,, \quad \omega \in \Omega\,.
\]
Fix $\omega \in \Omega$ arbitrarily and denote $\hat\omega :=
\thetaminus{(n_0+1)}\omega$. Then
\[
\begin{array}{rcl}
  \pi_2(\check\xi_{n_0+1}^{z}(\omega))
&= &\pi_2(\psi(n_0+1, \hat\omega,
z(\hat\omega))) \\[1ex]
&= &\pi_2(\psi(n_0, \thetaminus{n_0}\omega,
\psi(1, \hat\omega, z(\hat\omega)))) \\[1ex]
&= &\pi_2(\psi(n_0, \thetaminus{n_0}\omega, g(\hat\omega,
z(\hat\omega)))) \\[1ex]
&= &\pi_2(\psi(n_0, \thetaminus{n_0}\omega, \hat z(\thetaminus{n_0}\omega)))
\\[1ex]
&= &\pi_2(\check\xi_{n_0}^{\hat z}(\omega)) \\[1ex]
&= &(\check\xi_2)_{n_0}^{\hat x_2, (\eta_1)^{\hat x_1}}(\omega)
\end{array}
\]
by the induction hypothesis. Now
\[
\begin{array}{rcl}
(\check\xi_2)_{n_0}^{\hat x_2, (\eta_1)^{\hat x_1}}(\omega)
&= &\varphi_2(n_0, \thetaminus{n_0}\omega, \hat x_2(\thetaminus{n_0}\omega),
(\eta_1)^{\hat x_1}) \\[1ex]
&= &\varphi_2(n_0, \thetaminus{n_0}\omega, f_2(\hat\omega,
x_2(\hat\omega),(\eta_1)_0^{x_1}(\hat\omega)),
(\eta_1)^{\hat x_1}) \\[1ex]
&= &\varphi_2(n_0, \thetaminus{n_0}\omega,
\varphi_2(1, \hat\omega,
x_2(\hat\omega),(\eta_1)^{x_1}),
\rho_1((\eta_1)^{\hat x_1})) \\[1ex]
&= &\varphi_2(n_0+1, \thetaminus{(n_0+1)}\omega,
x_2(\thetaminus{(n_0+1)}\omega),(\eta_1)^{x_1}) \\[1ex]
&= &(\check\xi_2)_{n_0+1}^{x_2, (\eta_1)^{x_1}}(\omega)\,.
\end{array}
\]
So
\[
\pi_2(\check\xi_{n_0+1}^{z}(\omega)) =  (\check\xi_2)_{n_0+1}^{x_2,
(\eta_1)^{x_1}}(\omega)\,.
\]
Since $z = (x_1,x_2) \in Z_{\borel(Z)}^\Omega$ and $\omega \in \Omega$ were
arbitrary, this completes the inductive step.
\end{proof}
The lefthand side of (2) in the proposition above is the projection over the
second coordinate of the pullback trajectory starting at $z = (x_1,x_2)$ of the
RDS $(\theta,\psi)$. The righthand side is the pullback trajectory of
the RDSI $(\theta,\varphi_2,\calU_2)$ starting at $x_2$ and subject to the
input $(\eta_1)^{x_1}$, the output trajectory of $(\theta,\varphi_1,h_1)$
starting at $x_1$. Theorem \ref{prop:discretecascades} then says that they
coincide. An analogous result holds in continuous time for systems generated by
random differential equations. These provide the motivation for the definition
of cascades of systems with inputs and outpus, an introductory discussion of
which is carried out in Subsection \ref{subsec:cascades}.

We now state and prove the technical lemma referred to in the proof of item (2)
in Theorem \ref{prop:discretecascades}:

\begin{lemma}\label{lemma:zhat}
Let $f\colon \Omega \times X \rightarrow X$ be the generator of a discrete RDSO
$(\theta,\varphi,h)$. Given $x \in \omegaqb$, let $\hat x
\in \omegaqb$ be defined by
\[
  \hat x(\omega) := f(\thetaminus{1}\omega,x(\thetaminus{1}\omega))\,, \quad
\omega \in \Omega\,.
\]
Then $\eta^{\hat x} = \rho_1(\eta^x)$.
\end{lemma}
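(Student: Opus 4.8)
The plan is to verify the identity $\eta^{\hat x}=\rho_1(\eta^x)$ directly, pointwise in $(t,\omega)\in\Tplus\times\Omega$, by unwinding both sides to expressions in $\varphi$ and $h$ and then absorbing the shift with a single application of the cocycle property. Because we are in discrete time, $\thetaminus{1}$ is defined (every $\theta_t$ is invertible with $\theta_t^{-1}=\theta_{-t}$), and the underlying RDS $(\theta,\varphi)$ of the RDSO satisfies the cocycle property (S2) for \emph{every} $\omega\in\Omega$; hence the computation will yield equality of the two $\theta$-stochastic processes everywhere, not merely $\theta$-almost everywhere, and no perfection argument is needed.

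First I would rewrite the right-hand side. Using the definition of $\rho_1$ in Equation~(\ref{eq:rhodef}), the definition of the output trajectory $\eta^x$, and the semigroup property (T2) to simplify $\theta_{t+1}\circ\thetaminus{1}=\theta_t$, one gets
\[
(\rho_1(\eta^x))_t(\omega)=\eta^x_{t+1}(\thetaminus{1}\omega)=h\bigl(\theta_{t+1}\thetaminus{1}\omega,\,\varphi(t+1,\thetaminus{1}\omega,x(\thetaminus{1}\omega))\bigr)=h\bigl(\theta_t\omega,\,\varphi(t+1,\thetaminus{1}\omega,x(\thetaminus{1}\omega))\bigr).
\]
Then I would rewrite the left-hand side. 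Since $f$ is the generator of $(\theta,\varphi)$ we have $f(\omega',x')=\varphi(1,\omega',x')$ (cf. Theorem~\ref{thm:discretegenerators} in the input-free case), so $\hat x(\omega)=\varphi(1,\thetaminus{1}\omega,x(\thetaminus{1}\omega))$. Applying the cocycle property (S2) with $s=1$ and base point $\thetaminus{1}\omega$ gives
\[
\varphi(t+1,\thetaminus{1}\omega,x(\thetaminus{1}\omega))=\varphi\bigl(t,\theta_1\thetaminus{1}\omega,\varphi(1,\thetaminus{1}\omega,x(\thetaminus{1}\omega))\bigr)=\varphi(t,\omega,\hat x(\omega)),
\]
whence $\eta^{\hat x}_t(\omega)=h(\theta_t\omega,\varphi(t,\omega,\hat x(\omega)))=h(\theta_t\omega,\varphi(t+1,\thetaminus{1}\omega,x(\thetaminus{1}\omega)))=(\rho_1(\eta^x))_t(\omega)$. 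Since $(t,\omega)$ is arbitrary, the two processes coincide.

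The argument is essentially mechanical, so I do not expect a genuine obstacle; the only point requiring a moment's care is bookkeeping — keeping track of which arguments of $\varphi$ and $\theta$ are shifted — and observing that the generator identity $f=\varphi(1,\cdot)$ together with (S2) is exactly what converts the $\thetaminus{1}$-shift in $\hat x$ into the time shift built into $\rho_1$.
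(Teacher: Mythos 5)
Your proof is correct and follows essentially the same route as the paper's: identify $f(\omega',x')$ with $\varphi(1,\omega',x')$, use the cocycle property to merge $\varphi(t,\omega,\cdot)\circ\varphi(1,\thetaminus{1}\omega,\cdot)$ into $\varphi(t+1,\thetaminus{1}\omega,\cdot)$, and use $\theta_{t+1}\thetaminus{1}\omega=\theta_t\omega$ to match the shifted output. The only cosmetic difference is that you unwind both sides and meet in the middle, while the paper runs the same chain of equalities in one direction from $\eta^{\hat x}_n(\omega)$ to $(\rho_1(\eta^x))_n(\omega)$.
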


\begin{proof}
Indeed, we have
\[
\begin{array}{rcl}
  \eta^{\hat x}_n(\omega)
  &= & h(\theta_n\omega, \varphi(n,\omega,\hat x(\omega))) \\[1ex]
  &= & h(\theta_n\omega, \varphi(n,\omega,f(\theta_{-1}\omega,
x(\theta_{-1}\omega)))) \\[1ex]
  &= & h(\theta_n\omega,
\varphi(n,\omega,\varphi(1,\theta_{-1}\omega,x(\theta_{-1}\omega)))) \\[1ex]
  &= & h(\theta_{n+1}\theta_{-1}\omega, \varphi(n+1, \theta_{-1}\omega,
x(\theta_{-1}\omega))) \\[1ex]
  &= & \eta_{n+1}^x(\theta_{-1}\omega) \\[1ex]
  &= & (\rho_1(\eta^x))_{n}(\omega)\,,
\end{array}
\]
for every $n \in \z_{\geqslant 0}$ and every $\omega \in \Omega$.
\end{proof}
\subsection{$\theta$-Stationary Inputs}
The concept of RDSI subsumes that of an RDS, as we shall see
below. Denote the subset of $\sthetau$ consisting of $\theta$-stationary inputs
by
$\sthetauconst$. We identify $\sthetauconst$ and $\omegaub$ via Lemma
\ref{lemma:theta-constant}. 

Let $(\theta ,\varphi, \calU)$ be a RDSI, and
suppose that $\bar u
\in
\calU \cap \sthetauconst$ is some
$\theta $-stationary input. Consistent with the convention that an overbar is
used to indicate the $\theta$-stationary process associated with a given random
variable, we remove the bar to denote the random variable associated with a
given $\theta$-stationary process. So we denote by $u$ the random variable in
$\omegaub$ associated via Lemma \ref{lemma:theta-constant} with $\bar u$.
We then define
\[
\varphi_u := \varphi(\cdot,\cdot,\cdot,\bar u) \colon \Tplus \times \Omega
\times
X \longrightarrow X\,.
\]

\bl{lemm:rdsi-constant}
$\varphi_u$ is a crude cocycle.
\els

\bpr
It follows from condition (I1) and \cite[Proposition 2.34,
page 65]{folland-1999} that $\varphi_u$ is measurable. From (I2),
$\varphi_u(t,\omega,\cdot)$ is continuous for each $(t,\omega) \in \Tplus
\times \Omega$, yielding (S1). From (I3), we know that $\varphi_u(0, \omega,
\cdot) = id_X$ for
every $\omega \in \Omega$. So to verify (S2$^\prime$) it remains to prove that
$\varphi_u$ satisfies the
``crude cocycle property''. Let $\widetilde{\Omega} \subseteq \Omega$ be a
$\theta$-invariant subset of full measure such that
\begin{equation}\label{eq:rdsi-constant}
 (\rho_s(\bar u))_t(\omega) = \bar u_t(\omega)\,,\quad \forall s,t \in
\Tplus\,,\ \forall
\omega \in \widetilde{\Omega}\,.
\end{equation}
Fix arbitrarily $\omega \in \widetilde{\Omega}$. For any $s,t \in \Tplus$, we
have $\theta_s\omega \in \widetilde{\Omega}$ by $\theta$-invariance, and so it
follows from (\ref{eq:rdsi-constant}) and (I5) that
\[
 \varphi(t,\theta_s\omega,\varphi_u(s,\omega,x),\rho_s(\bar u)) =
\varphi(t,\theta_s\omega,\varphi_u(s,\omega,x),\bar u)\,.
\]
It then follows from (I4)---see Remark \ref{rem:rhoshift}---that
\beqn
\varphi_u(t+s,\omega ,x) &=& \varphi(t+s,\omega ,x,\bar u)\\
               &=& \varphi(t,\theta _s\omega ,\varphi(s,\omega ,x,\bar u),\rho
_s(\bar u))\\
               &=& \varphi(t,\theta _s\omega ,\varphi_u(s,\omega ,x),\bar u)\\
               &=& \varphi_u(t,\theta _s\omega ,\varphi_u(s,\omega ,x))\,.
\eeqn
So (S2$^\prime$) is satisfied with $\Omega_s := \widetilde\Omega$ for every $s
\in \Tplus$.
\epr
\ifarxiv{
\begin{proposition}\label{prop:perfection}
 $\varphi_u$ can be perfected.
\end{proposition}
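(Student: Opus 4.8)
The plan is to bypass Arnold's general perfection theory (Proposition \ref{thm:perfection}) and instead exploit the extra structure already visible in the proof of Lemma \ref{lemm:rdsi-constant}. There we exhibited a \emph{single} $\theta$-invariant set $\widetilde\Omega \in \calF$ of full measure, \emph{not depending on} $s$, such that the crude cocycle identity for $\varphi_u$ holds for all $s,t \in \Tplus$ and all $x \in X$ whenever $\omega \in \widetilde\Omega$, and such that $\varphi_u(0,\omega,\cdot) = id_X$ there as well. Because the exceptional set $\Omega \setminus \widetilde\Omega$ is itself $\theta$-invariant, it will suffice to redefine $\varphi_u$ on it by the identity; this is precisely the shortcut anticipated in the remark preceding Proposition \ref{thm:perfection}.

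Concretely, I would define $\psi \colon \Tplus \times \Omega \times X \to X$ by $\psi(t,\omega,\cdot) := \varphi_u(t,\omega,\cdot)$ for $\omega \in \widetilde\Omega$ and $\psi(t,\omega,\cdot) := id_X$ for $\omega \in \Omega \setminus \widetilde\Omega$, and claim that $(\theta,\psi)$ is an RDS indistinguishable from $(\theta,\varphi_u)$. The verification then splits into four routine checks. \emph{Measurability:} for $A \in \borel$ the preimage $\psi^{-1}(A)$ is the union of $\varphi_u^{-1}(A) \cap (\Tplus \times \widetilde\Omega \times X)$ and $\Tplus \times (\Omega \setminus \widetilde\Omega) \times A$, both of which lie in $\borel(\Tplus)\otimes\calF\otimes\borel$ since $\widetilde\Omega \in \calF$ and $\varphi_u$ is measurable. \emph{Property (S1):} for fixed $t,\omega$ the map $\psi(t,\omega,\cdot)$ equals either $\varphi_u(t,\omega,\cdot)$ or $id_X$, both continuous. \emph{Property (S2):} $\psi(0,\omega,\cdot) = id_X$ for every $\omega \in \Omega$ (on $\widetilde\Omega$ by (I3), off it by definition); for the cocycle identity, observe that $\theta_s$ is a bijection with $\theta_s(\widetilde\Omega) = \widetilde\Omega$, so $\omega \in \widetilde\Omega \iff \theta_s\omega \in \widetilde\Omega$ --- hence there is no ``one in, one out'' configuration, and in the two remaining cases the identity $\psi(t+s,\omega,x) = \psi(t,\theta_s\omega,\psi(s,\omega,x))$ is immediate, either from the crude cocycle property on $\widetilde\Omega$ or from both sides being equal to $x$. \emph{Indistinguishability:} the set $\{\omega \in \Omega : \psi(t,\omega) \neq \varphi_u(t,\omega)\ \text{for some}\ t \in \Tplus\}$ is contained in $\Omega \setminus \widetilde\Omega$, which is a null set belonging to $\calF$.

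I do not expect any genuine obstacle here. The whole point is that $\widetilde\Omega$ was already arranged in Lemma \ref{lemm:rdsi-constant} to be $\theta$-invariant and independent of $s$, which is exactly what makes the troublesome ``one in, one out'' case in (S2) vacuous and allows a one-line redefinition to do the job --- in contrast to the situation covered by Proposition \ref{thm:perfection}, where genuine topological hypotheses on $X$ are needed. The only things requiring (minor) care are the measurability bookkeeping in the first check and remembering to invoke the $\theta$-invariance of $\widetilde\Omega$ in the verification of (S2).
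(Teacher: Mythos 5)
Your proposal is correct and follows essentially the same route as the paper's own proof: redefine $\varphi_u$ to be the identity on the $\theta$-invariant null set $\Omega\setminus\widetilde\Omega$ inherited from Lemma \ref{lemm:rdsi-constant}, then check measurability, (S1), (S2) (using $\theta$-invariance to rule out the mixed case), and indistinguishability. No substantive differences from the argument in the paper.
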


\begin{proof}
 Let $\widetilde{\Omega}$ be the $\theta$-invariant subset of full measure of
$\Omega$ from the proof of Lemma \ref{lemm:rdsi-constant}, and denote $N :=
\Omega\backslash\widetilde{\Omega}$. Note that $N$ is $\theta$-invariant and
$\realp(N) = 0$. Define $\psi_u\colon \Tplus \times \Omega \times X \rightarrow
X$ by
\begin{equation}\label{eq:psiu}
 \psi_u(t,\omega,x) := \left\{
\begin{array}{rl}
 \varphi_u(t,\omega,x)\,,\ &\text{if}\ \omega \in \widetilde{\Omega}\,, \\
 x\,,\ &\text{if}\ \omega \in N\,.
\end{array}
\right.
\end{equation}
Pick any open subset $A \subseteq X$. Then
\[
 \psi_u^{-1}(A) = \big(\varphi_u^{-1}(A) \backslash \Tplus \times N \times
X\big) \cup \Tplus \times N \times A \,.
\]
Thus $\psi_u^{-1}(A) \in \borel(\Tplus) \otimes \calF \otimes \borel$, proving
that $\psi_u$ is ($\borel(\Tplus) \otimes \calF \otimes \borel$)-measurable.

If $\omega \in \widetilde{\Omega}$, then $\psi_u(t,\omega,\cdot) =
\varphi_u(t,\omega,\cdot) = \varphi(t,\omega,\cdot,u)$, which is continuous for
any $t \geqslant 0$ by (I2). And if $\omega \in N$, then $\psi_u(t,\omega,\cdot)
= id_X$, and thus also continuous for any $t \geqslant 0$. This shows $\psi_u$
satisfies (S1).

It is clear from (\ref{eq:psiu}) that $\psi_u(0,\omega,\cdot) = id_X$ for any
$\omega \in \Omega$. We already know from the proof of Lemma
\ref{lemm:rdsi-constant} that $\psi_u$ satisfies the cocycle property for every
$\omega \in \widetilde\Omega$. Similar computations using (\ref{eq:psiu})
together with the fact that $N$ is also $\theta$-invariant show that $\psi_u$
also satisfies the cocycle property for values of $\omega$ in $N$. This proves
(S2), completing the proof that $(\theta,\psi_u)$ is an RDS.

Since $\realp(N) = 0$, we conclude that $\varphi_u$ and $\psi_u$ are
indistinguishable. Hence $\psi_u$ is a perfection of $\varphi_u$.
\end{proof}
}

Whenever the state space $X$ is such that $\varphi_u$ can be
perfected, we shall assume that $\varphi_u$ has already been replaced by an
indistinguishable perfection and then refer to the resulting RDS
$(\theta,\varphi_u)$.
\subsection{Input to State Characteristics}\label{sec:characteristic}
\bd{def:characteristic}
Let $(\theta ,\varphi, \calU)$ be an RDSI, and suppose that $\bar u \in
\calU \cap \sthetauconst$, with corresponding random variable $u$ (see Lemma
\ref{lemma:theta-constant}).
An \emph{equilibrium associated to $\bar{u }$} (or to $u$) is any
equilibrium 
${\xi }$ of the RDS $(\theta ,\varphi_{u})$.
The set of all equilibria associated to $\bar{u }$ is denoted as
$\Eq(\bar{u })$ (or $\Eq(u)$). 
\ed
In other words, an element ${x }\in \Eq(\bar{u })$ is a random variable
$\Omega \rightarrow X$
such that 
\be{eq:equil_equiv_char}
\varphi_u(t,\thetaminus{t}\omega
,x(\thetaminus{t}\omega )) =
x(\omega )\,,
\quad
\forall t\in \Tplus, \,
\widetilde{\forall} \omega \in \Omega \,.
\ee
When we have a ``proper'' RDS (no inputs), we write simply $\Eq$ for the set of
equilibria. 

Though not really used in this work, we take advantage of the concepts and
notation being introduced to present the following quick observation:

\begin{proposition}\label{prop:output-equil}
Let $(\theta,\varphi,h)$ be an RDSIO and suppose that ${\xi }\in \Eq({\mu })$.
Then the output trajectory $\eta^{\xi} \colon \Tplus \times \Omega \rightarrow
Y$ starting at $\xi$ is $\theta $-stationary.
\end{proposition}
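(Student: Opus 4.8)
The plan is to reduce the statement to Lemma~\ref{lemma:theta-constant}, which says that a $\theta$-stochastic process is $\theta$-stationary exactly when it equals $t\mapsto q(\theta_t\cdot)$ for some random variable $q$, $\theta$-almost everywhere. So I would produce the natural candidate, namely the ``output of the equilibrium''
\[
  \nu\colon\Omega\longrightarrow Y\,,\qquad \nu(\omega):=h(\omega,\xi(\omega))\,,
\]
and check that $\eta^{\xi}_t(\omega)=\nu(\theta_t\omega)$ for all $t\in\Tplus$ and $\widetilde{\forall}\,\omega\in\Omega$. First I would dispose of the routine measurability points: the map $\omega\mapsto(\omega,\xi(\omega))$ is $(\calF,\calF\otimes\borel)$-measurable because $\xi\in\omegaxb$, so composing with the $(\calF\otimes\borel)$-measurable output function $h$ shows $\nu$ is a random variable $\Omega\to Y$ (in fact $\nu=\eta^{\xi}_0$ by (I3)); and $\eta^{\xi}\in\sthetay$ because $(t,\omega)\mapsto h(\theta_t\omega,\varphi_{\mu}(t,\omega,\xi(\omega)))$ is a composition of (T1), of (I1) applied to $\varphi_{\mu}$, and of $h$, hence $(\borel(\Tplus)\otimes\calF)$-measurable (this is anyway built into the notion of output trajectory).

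The substance is then a one-line computation. Since $\xi\in\Eq(\mu)$, i.e.\ $\xi$ is an equilibrium of the RDS $(\theta,\varphi_{\mu})$, Equation~\eqref{eq:equil_equiv_char} holds with $u=\mu$ and $x=\xi$; replacing $\omega$ by $\theta_t\omega$ there (legitimate because the exceptional set is $\theta$-invariant) yields the forward form
\[
  \varphi_{\mu}(t,\omega,\xi(\omega))=\xi(\theta_t\omega)\,,\qquad \forall t\in\Tplus\,,\ \widetilde{\forall}\,\omega\in\Omega\,.
\]
Therefore, on that same $\theta$-invariant full-measure set,
\[
  \eta^{\xi}_t(\omega)=h\bigl(\theta_t\omega,\varphi_{\mu}(t,\omega,\xi(\omega))\bigr)=h\bigl(\theta_t\omega,\xi(\theta_t\omega)\bigr)=\nu(\theta_t\omega)\,,
\]
and Lemma~\ref{lemma:theta-constant} immediately gives that $\eta^{\xi}$ is $\theta$-stationary, with associated random variable $\nu$.

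I do not expect any real obstacle here: the whole content is that applying the readout map along an equilibrium trajectory ``freezes'' it up to the action of $\theta$, and the only care required is to carry a single $\theta$-invariant full-measure set (the one certifying $\xi\in\Eq(\mu)$) consistently through the argument. The fact that $\varphi_{\mu}$ is a priori only a crude cocycle (Lemma~\ref{lemm:rdsi-constant}) is harmless, precisely because both the equilibrium condition~\eqref{eq:equil_equiv_char} and the conclusion of Lemma~\ref{lemma:theta-constant} are already formulated modulo a $\theta$-invariant null set.
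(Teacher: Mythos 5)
Your proof is correct and is essentially the paper's argument in a slightly repackaged form: you exhibit the associated random variable $\nu=h(\cdot,\xi(\cdot))$ and invoke the sufficiency half of Lemma~\ref{lemma:theta-constant}, whereas the paper verifies $\rho_s(\eta^{\xi})=\eta^{\xi}$ directly by the same two uses of the equilibrium identity \eqref{eq:equil_equiv_char} that underlie your computation plus that lemma's proof. No gap; your handling of the $\theta$-invariant exceptional set and of the passage between the pullback and forward forms of the equilibrium property is exactly what is needed.
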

\bpr
Indeed, it follows from (\ref{eq:equil_equiv_char}) that, for any $t,s \geqslant
0$ and any $\omega \in \Omega$,
\beqn
(\rho _s(\eta^\xi ))_t(\omega ) &=& \eta^\xi_{t+s}(\theta _{-s}\omega )\\ 
\ifonecolumn{&=&}\iftwocolumn{&&\adjusttoleftmore =\;\;}
 h(\theta _{t+s}\theta _{-s}\omega,{\varphi(t+s,\theta _{-s}\omega ,{\xi
}(\theta_{-s}\omega),\bar{\mu})})\\
\ifonecolumn{&=&}\iftwocolumn{&&\adjusttoleftmore =\;\;}
h(\theta _{t}\omega,{\varphi(t+s,\theta _{-(s+t)}\theta_t\omega
,{\xi }(\theta _{-(s+t)}\theta_t\omega),\bar{\mu})})\\
\ifonecolumn{&=&}\iftwocolumn{&&\adjusttoleftmore =\;\;}
h(\theta _t\omega,{{\xi }(\theta _{t}\omega )})\\
\ifonecolumn{&=&}\iftwocolumn{&&\adjusttoleftmore =\;\;}
h(\theta_t\omega,{\varphi(t, \omega, {\xi}(\omega), \bar\mu)})\\
\ifonecolumn{&=&}\iftwocolumn{&&\adjusttoleftmore =\;\;}
\eta^\xi_t(\omega)\,,
\eeqn
proving that $\rho_s(\eta^\xi) = \eta^\xi$ for all $s \in \Tplus$, that is,
$\eta^\xi $ is $\theta $-stationary.
\epr

\bd{def:iocharacteristic}
Let $(\theta ,\varphi,h)$ be an RDSIO, and suppose that ${\xi }\in
\Eq({\mu })$.
Then $\eta_0^\xi$ is an \emph{output equilibrium associated to $\mu$}
(or $\bar\mu $). The set of all output equilibria associated to ${\mu }$ is
denoted as $\Oeq({\mu })$. 
\ed

For deterministic systems ($\Omega $ is a singleton), when the set $\Eq(\bar{u
})$ consists of a single globally attracting equilibrium, the mapping
${u }\mapsto \Eq(\bar{u })$  is the object called the
``input to state characteristic'' in the literature on monotone i/o systems.
For systems with outputs, composition with the output map $h$ provides the
``input to output'' characteristic \cite{monotoneTAC}. We extend this notion to
RDSI's. For
reasons which will be illustrated in Example \ref{ex:lineari2schar} and become
clearer in the proofs of Theorems
\ref{thm:randomcics} and \ref{thm:boundedcics}
(converging input to converging state), further conditions on the
convergence of the states are needed. 

In what follows, given an MPDS $\theta$ and a normed space $(X,\|\cdot\|)$, we
denote by $\omegaxtheta$ \label{page:temperedrv} the space of tempered random
variables $\Omega
\rightarrow X$; that is, the space of $\calF$-measurable maps $r\colon \Omega
\rightarrow X$ such that
\[
  \sup_{s \in \Time} \|r(\theta_s\omega)\|e^{-\gamma|s|} < \infty\,,\quad
\forall\gamma > 0\,,\ \widetilde{\forall}\omega \in \Omega\,.
\]
For ease of reference we note a few easy properties of tempered random
variables:
\begin{lemma}\label{lemma:temperedproduct}
  Suppose $\theta$ is an MPDS, $(X,\|\cdot\|)$ is a normed space over $\r$, and
let $R_1,R_2 \in X^\Omega_\theta$, $r \in \r_\theta^\Omega$, and $c
\in \r$. Then
\begin{itemize}
\item[{\em (1)}] $R_1 + R_2$ is tempered.

\item[{\em (2)}] $cR_1$ is tempered.

\item[{\em (3)}] $rR_1$ is tempered; in particular, the product of two
real-valued tempered random variables is tempered.
\end{itemize}
\end{lemma}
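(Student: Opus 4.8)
The plan is to check, for each item, that the supremum appearing in the definition of a tempered random variable stays finite on a suitable $\theta$-invariant set of full measure; the measurability requirement is automatic, since finite sums, scalar multiples, and products of $\calF$-measurable maps into a normed space (resp. into $\r$) are again $\calF$-measurable.

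For (1) and (2) I would simply fix $\gamma>0$ and use the triangle inequality and the homogeneity of the norm: for every $s\in\Time$,
\[
\norm{R_1(\theta_s\omega)+R_2(\theta_s\omega)}e^{-\gamma|s|}
\;\le\; \norm{R_1(\theta_s\omega)}e^{-\gamma|s|}+\norm{R_2(\theta_s\omega)}e^{-\gamma|s|}
\]
and $\norm{cR_1(\theta_s\omega)}e^{-\gamma|s|}=|c|\,\norm{R_1(\theta_s\omega)}e^{-\gamma|s|}$. Taking $\sup_{s\in\Time}$, the right-hand sides are finite for every $\gamma>0$ whenever $\omega$ lies in the ($\theta$-invariant, full-measure) set on which $R_1$, respectively $R_2$, is controlled; for (1) one works on the intersection of the two such sets, which is again $\theta$-invariant of full measure. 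Hence $R_1+R_2$ and $cR_1$ are tempered.

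The only step that is not completely mechanical is (3), and even there the idea is short: fix $\gamma>0$ and split the weight symmetrically as $e^{-\gamma|s|}=e^{-(\gamma/2)|s|}\,e^{-(\gamma/2)|s|}$, so that
\[
|r(\theta_s\omega)|\,\norm{R_1(\theta_s\omega)}\,e^{-\gamma|s|}
=\big(|r(\theta_s\omega)|\,e^{-(\gamma/2)|s|}\big)\big(\norm{R_1(\theta_s\omega)}\,e^{-(\gamma/2)|s|}\big).
\]
Applying the temperedness of $r$ and of $R_1$ with the parameter $\gamma/2>0$ bounds each factor uniformly in $s$, hence bounds the product, on the intersection of the corresponding $\theta$-invariant full-measure sets; since $\gamma>0$ was arbitrary, $rR_1$ is tempered. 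The ``in particular'' assertion then follows by specializing to $X=\r$ and applying (3) with one of the two real-valued tempered random variables in the role of $r$ and the other in the role of $R_1$. No genuine obstacle is expected here; the only care needed throughout is to keep the quantifier ``$\widetilde{\forall}\omega$'' honest, i.e.\ to pass at each stage to a finite intersection of $\theta$-invariant full-measure sets, which remains $\theta$-invariant of full measure.
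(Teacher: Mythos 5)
Your proposal is correct and follows essentially the same route as the paper: the triangle inequality for (1), and the symmetric splitting $e^{-\gamma|s|}=e^{-(\gamma/2)|s|}e^{-(\gamma/2)|s|}$ for (3), with the same care about intersecting $\theta$-invariant full-measure sets. The only (inessential) difference is that you prove (2) directly via homogeneity of the norm, whereas the paper deduces (2) as the special case of (3) with $r\equiv c$.
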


\begin{proof}
(1) Indeed, for any $\gamma > 0$ and any $\omega \in
\widetilde{\Omega}$, we have
\[
\begin{array}{rcl}
\displaystyle\sup_{s \in \Time} \|(R_1 + R_2)(\theta_s\omega)\|e^{-\gamma|s|}
&\leqslant &\displaystyle\sup_{s \in \Time}
\|R_1(\theta_s\omega)\|e^{-\gamma|s|} + \sup_{s \in \Time}
\|R_2(\theta_s\omega)\|e^{-\gamma|s|} \\[2ex]
&< &\infty\,,
\end{array}
\]
where we write $(R_1+R_2)(\theta_s\omega)$ for $R_1(\theta_s\omega) +
R_2(\theta_s\omega)$. 
So both $R_1 + R_2$ is tempered.

(2) follows from (3), which we now prove. Given $\gamma > 0$ and $\omega \in
\widetilde{\Omega}$, apply the
definition of tempered random variable for $\gamma/2$:
\[
\begin{array}{rcl}
\displaystyle\sup_{s \in \Time}
\|r(\theta_s\omega)R_1(\theta_s\omega)\|e^{-\gamma|s|}
&= &\displaystyle\sup_{s \in \Time} |r(\theta_s\omega)|e^{-\frac{\gamma}{2}|s|}
\|R_1(\theta_s\omega)\|e^{-\frac{\gamma}{2}|s|} \\[2ex]
&\leqslant &\displaystyle\left(\sup_{s \in \Time}
|r(\theta_s\omega)|e^{-\frac{\gamma}{2}|s|}\right)
 \left(\sup_{s \in \Time} \|R_1(\theta_s\omega)\|e^{-\frac{\gamma}{2}|s|}\right)
\\[2ex]
&< &\infty\,.
\end{array}
\]
Thus $rR_1$ is tempered.
\end{proof}
In other words, $X^\Omega_\theta$ is a real vector space, and a module over the
ring of real-valued tempered random variables.

\begin{definition}(Tempered Convergence)\label{def:temperedconvergence}
 Let $\theta$ be an MPDS, $X$ be a normed space, $(\xi_\alpha)_{\alpha \in
A}$ be a
 net in $\omegaxb$ and ${\xi}_\infty$ any random variable in $\omegaxb$. We say
that $(\xi_\alpha)_{\alpha \in A}$ converges to
${\xi}_\infty$ in the {\em tempered sense} if there exists a nonnegative,
tempered
random variable $r\colon \Omega \rightarrow \r_{\geqslant 0}$ and an $\alpha_0
\in A$ such that
\begin{itemize}
 \item[(1)] $\xi_\alpha(\omega) \rightarrow {\xi}_\infty(\omega)$ as $\alpha
\rightarrow
\infty$ for $\theta$-almost all $\omega \in \Omega$, and

 \item[(2)] $\|\xi_\alpha(\omega) - {\xi}_\infty(\omega)\| \leqslant
r(\omega)$
for all
$\alpha \geqslant \alpha_0$, for $\theta$-almost all $\omega \in \Omega$.
\end{itemize}
In this case we denote $\xi_\alpha \temparrow \xi_\infty$ (as $\alpha
\rightarrow \infty$).
\mytriangle
\end{definition}

\begin{definition}(Tempered Continuity)\label{def:i2scharcontinuity}
 Let $\theta$ be an MPDS and $X,U$ normed spaces. A map ${\cal K}\colon
\calU \subseteq \omegaub
\rightarrow \omegaxb$ is said do be {\em tempered continuous} if, whenever
$(u_\alpha)_{\alpha \in A}$ is a net in $\calU$
convergent to $u_\infty \in \calU$ in the tempered sense, then
$\calK(u_\alpha) \temparrow \calK(u_\infty)$ as $\alpha \rightarrow \infty$ as
well.
\mytriangle
\end{definition}

In what follows, when we speak of a given RDS $(\theta,\varphi)$, or of an
RDSI $(\theta,\varphi,\calU)$, etc, the underlying state space $X$, input
space $U$ and output space $Y$ are all assumed to be normed spaces, unless
otherwise specified.

\begin{definition}(I/S Characteristic)\label{def:i2scharacteristicproper}
  An RDSI $(\theta,\varphi,\calU)$ is said to have an {\em input to state
{\em (}i/s\,{\em)}
characteristic} $\calK\colon \omegautheta \rightarrow \omegaxtheta$ if
$$\omegautheta \subseteq \calU$$ and
$$\varphi_{u}(t,\theta_{-t}\omega,x(\thetaminus{t}\omega))
\longrightarrow ({\calK}(u))(\omega)\,, \quad \text{as} \ t \rightarrow
\infty\,,\
\widetilde{\forall} \omega \in \Omega\,,$$ for each $u \in \omegautheta$,
and each $x \in \omegaxtheta$.
\mytriangle
\end{definition}

\begin{definition}\label{def:i2scharacteristic}
Let $(\theta, \varphi, \calU)$ be an RDSI with an i/s characteristic
$\calK\colon \omegautheta \rightarrow \omegaxtheta$. We say that $\calK$ is a
{\em tempered i/s characteristic} if
\[
  \check\xi^{x,u}_t \longtemparrow \calK(u)\,,\quad \forall x \in
\omegaxtheta\,,\ \forall u \in \omegautheta\,.
\]
If 
\[
  \check\xi^{x,u}_t \longtemparrow \calK(u)\,,\quad \forall x \in \linftyofX\,,\
\forall u \in \linftyofU\,,
\]
then $\calK$ is said to be a {\em bounded i/s characteristic}.
\mytriangle
\end{definition}
Notice that $\linftyofU \subseteq \omegautheta$,
so that the definition of bounded i/s characteristic is well-posed.
\begin{example}(Linear Example)\label{ex:lineari2schar}
 Consider the RDSI $(\theta,\varphi,\calU)$ from Example \ref{ex:linear},
generated by the random
differential equation with inputs
\begin{equation*}
 \dot \xi = a(\theta_t\omega)\xi + b(\theta_t\omega)u_t(\omega)\,,\quad t
\geqslant 0\,,\quad u \in \calU := \soU\,.
\end{equation*}
Suppose that, in addition to the
hypotheses in Example \ref{ex:linear}, $a,b$ also satisfy
\begin{itemize}
 \item[(L1)] $b$ is tempered; and

 \item[(L2)] there exist a $\lambda > 0$ and nonnegative,
tempered random
   variables $\gamma,\widetilde\gamma \in (\r_{\geqslant})_\theta^\Omega$ such
that
\[
  \e^{\int_s^{r+s} a(\theta_\tau\omega)\,d\tau} \leqslant
\gamma(\theta_s\omega)\e^{-\lambda r}\,,\quad 
\ifonecolumn{\forall s \in
\r\,,\ \forall r \geqslant 0\,,\ \widetilde{\forall} \omega \in
\Omega\,,
\]}\iftwocolumn{\]for all $s \in \r$, $\forall r\geqslant0$, for
$\theta$-almost every $\omega \in \Omega$,}
and
\[
  \e^{\int_s^{r+s} a(\theta_{-\tau}\omega)\,d\tau} \leqslant
\widetilde\gamma(\theta_s\omega)\e^{-\lambda r}\,,\quad 
\ifonecolumn{\forall s \in
\r\,,\ \forall r \geqslant 0\,,\ \widetilde{\forall} \omega \in
\Omega\,.
\]}\iftwocolumn{\]for all $s \in \r$, $\forall r\geqslant0$, for
$\theta$-almost every $\omega \in \Omega$.}
\end{itemize}
\begin{remark}
 If we assume $\theta$ to be ergodic and
\[
 {\mathbb E}a := \int_\Omega a(\omega)\,d\realp(\omega) < 0\,,
\]
then one can show (see \cite{chueshov-2002}) that (L2) holds with $\lambda :=
-{\mathbb E}a$.
\mybox
\end{remark}

We summarize the point of this example in the lemma and proposition below.

\begin{lemma}\label{lemma:continuousi2s}
 Suppose that, in addition to the hypotheses in {\em Example \ref{ex:linear}},
the
coefficients $a$ and $b$ also satisfy {\em(L1)} and {\em(L2)}.
Then the RDSI
$(\theta,\varphi,\calU)$ has a tempered continuous input to state
characteristic.
\end{lemma}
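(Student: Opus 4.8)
The plan is to produce the input-to-state characteristic in closed form from the solution formula (\ref{eq:linearformula}) and then to extract every required property from a single family of exponential estimates. Attach to a tempered random variable $u\in\omegautheta$ the $\theta$-stationary input $\bar u$ with $\bar u_\sigma(\omega)=u(\theta_\sigma\omega)$ (Lemma \ref{lemma:theta-constant}); temperedness makes $t\mapsto u(\theta_t\omega)$ locally bounded for $\theta$-almost all $\omega$, so (after modifying $u$ on a $\theta$-invariant null set if necessary) $\omegautheta\subseteq\soU=\calU$, and by (\ref{eq:linearformula})
\[
\varphi_u(t,\omega,x)=x\,\e^{\int_0^t a(\theta_\tau\omega)\,d\tau}+\int_0^t b(\theta_\sigma\omega)\,u(\theta_\sigma\omega)\,\e^{\int_\sigma^t a(\theta_\tau\omega)\,d\tau}\,d\sigma\,.
\]
The candidate characteristic is
\[
(\calK(u))(\omega):=\int_{-\infty}^0 b(\theta_\sigma\omega)\,u(\theta_\sigma\omega)\,\e^{\int_\sigma^0 a(\theta_\tau\omega)\,d\tau}\,d\sigma\,.
\]
The first task is to show this improper integral converges absolutely for $\theta$-almost every $\omega$: substituting $\tau\mapsto-\tau$ and using the second inequality in (L2) gives $\e^{\int_\sigma^0 a(\theta_\tau\omega)\,d\tau}\le\widetilde\gamma(\omega)\,\e^{\lambda\sigma}$ for $\sigma\le0$, while (L1), Lemma \ref{lemma:temperedproduct} and temperedness of $u$ bound $\|b(\theta_\sigma\omega)\|\,|u(\theta_\sigma\omega)|$, as $\sigma\to-\infty$, by a quantity growing no faster than $\e^{-2\mu\sigma}$ for every $\mu>0$; choosing $\mu<\lambda/2$ makes the integrand $O(\e^{(\lambda-2\mu)\sigma})$ and the integral finite. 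The same estimate will be reused below with $|u|$ replaced by other tempered variables.

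Next I would show that $\calK$ maps $\omegautheta$ into $\omegaxtheta$. Measurability of $\calK(u)$ follows by writing it as the pointwise limit of the truncations $\omega\mapsto\int_{-t}^0(\cdots)\,d\sigma$, each $\calF$-measurable (Fubini, using (T1) for the joint measurability of $(\sigma,\omega)\mapsto\theta_\sigma\omega$), exactly as in the measurability argument of Proposition \ref{prop:convergencetoeq}. Temperedness of $\calK(u)$ is the core computation: replacing $\omega$ by $\theta_s\omega$, substituting $\tau\mapsto\tau+s$, and applying the first inequality in (L2) with endpoints $\sigma+s$ and $s$ gives
\[
\|(\calK(u))(\theta_s\omega)\|\le\int_{-\infty}^0 \|b(\theta_{\sigma+s}\omega)\|\,|u(\theta_{\sigma+s}\omega)|\,\gamma(\theta_{\sigma+s}\omega)\,\e^{\lambda\sigma}\,d\sigma\,;
\]
since $\|b\|\cdot|u|\cdot\gamma$ is tempered (Lemma \ref{lemma:temperedproduct}), it is bounded at $\theta_{\sigma+s}\omega$ by $p_\mu(\omega)\,\e^{\mu|s|}\,\e^{-\mu\sigma}$ with $p_\mu$ tempered, and integrating against $\e^{\lambda\sigma}$ with $0<\mu<\lambda$ yields $\|(\calK(u))(\theta_s\omega)\|\le(\lambda-\mu)^{-1}p_\mu(\omega)\,\e^{\mu|s|}$; since $\mu>0$ is arbitrary, $\calK(u)\in\omegaxtheta$.

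It then remains to verify the two convergence statements. For the i/s characteristic property (Definition \ref{def:i2scharacteristicproper}), take $x\in\omegaxtheta$, replace $\omega$ by $\thetaminus{t}\omega$ and $x$ by $x(\thetaminus{t}\omega)$ in the formula for $\varphi_u$ above, and substitute $\tau,\sigma\mapsto\tau-t,\sigma-t$: the initial-state term is bounded by $\|x(\thetaminus{t}\omega)\|\,\widetilde\gamma(\omega)\,\e^{-\lambda t}$, which tends to $0$ because temperedness of $x$ only contributes a factor $\e^{\gamma t}$ and one may take $\gamma<\lambda$, while the integral term becomes $\int_{-t}^0 b(\theta_\sigma\omega)\,u(\theta_\sigma\omega)\,\e^{\int_\sigma^0 a(\theta_\tau\omega)\,d\tau}\,d\sigma$, which converges to $(\calK(u))(\omega)$ by the absolute convergence established above. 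For tempered continuity (Definitions \ref{def:temperedconvergence} and \ref{def:i2scharcontinuity}), let $u_\alpha\to u_\infty$ in the tempered sense, with dominating tempered $r\ge0$ and threshold $\alpha_0$; then $(\calK(u_\alpha))(\omega)-(\calK(u_\infty))(\omega)$ equals the integral over $(-\infty,0]$ of $b(\theta_\sigma\omega)\bigl(u_\alpha(\theta_\sigma\omega)-u_\infty(\theta_\sigma\omega)\bigr)\e^{\int_\sigma^0 a(\theta_\tau\omega)\,d\tau}$, whose integrand tends to $0$ pointwise in $\sigma$ and, for $\alpha\ge\alpha_0$, is dominated by $\|b(\theta_\sigma\omega)\|\,r(\theta_\sigma\omega)\,\e^{\int_\sigma^0 a(\theta_\tau\omega)\,d\tau}$; dominated convergence gives $(\calK(u_\alpha))(\omega)\to(\calK(u_\infty))(\omega)$ for $\theta$-almost every $\omega$, and $R(\omega):=\int_{-\infty}^0\|b(\theta_\sigma\omega)\|\,r(\theta_\sigma\omega)\,\e^{\int_\sigma^0 a(\theta_\tau\omega)\,d\tau}\,d\sigma$ dominates $\|\calK(u_\alpha)-\calK(u_\infty)\|$ for all $\alpha\ge\alpha_0$ and is tempered by the very shift estimate used for $\calK(u)$; hence $\calK(u_\alpha)\temparrow\calK(u_\infty)$.

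I expect the main obstacle to be the temperedness bookkeeping: showing that $\calK(u)$ and the dominator $R$ are tempered hinges on playing the uniform contraction rate $\lambda$ from (L2) off against the fact that a tempered random variable grows more slowly than $\e^{\gamma|s|}$ for \emph{every} $\gamma>0$ (so one may always reserve a slice of the decay), and on intersecting the various $\theta$-invariant full-measure sets — coming from temperedness of $b,u,r,\gamma,\widetilde\gamma$ (reduced to the countably many values $\gamma=1/n$) and from the two inequalities in (L2) — into a single exceptional null set. Once these estimates are in hand, the convergence claims are routine applications of dominated convergence.
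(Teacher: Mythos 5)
Your proposal is correct and follows essentially the same route as the paper: the same closed-form candidate $(\calK(u))(\omega)=\int_{-\infty}^0 b(\theta_\sigma\omega)u(\theta_\sigma\omega)\e^{\int_\sigma^0 a(\theta_\tau\omega)\,d\tau}\,d\sigma$, the same temperedness estimate obtained by trading a slice of the uniform rate $\lambda$ from (L2) against the sub-exponential growth of the tempered product $|b|\cdot|u|\cdot\gamma$, and the same dominating random variable $R$ for tempered continuity. The only differences are that you make explicit two points the paper elides --- the pullback convergence $\check\xi^{x,\bar u}_t(\omega)\to(\calK(u))(\omega)$ is derived directly from the variation-of-constants formula rather than cited from Chueshov, and the pointwise part of tempered continuity is justified by dominated convergence --- which strengthens rather than alters the argument.
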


\begin{proposition}\label{prop:boundedi2s}
 Suppose that, in addition to satisfying the hypotheses in {\em Example
\ref{ex:linear}}, plus {\em (L1)} and {\em (L2)}, $b$ is also essentially
bounded, and $a$ is essentially bounded from above {\em(}as functions of $\omega
\in
\Omega$\,{\em)}. Then the RDSI $(\theta,\varphi,\calU)$ is bounded {\em(}in the
sense of {\em Definition
\ref{def:boundedcocycle}}\,{\em)}, and has a
continuous
and bounded input to state characteristic {\em(}in the sense of {\em Definitions
\ref{def:i2scharcontinuity}{\em,} \ref{def:i2scharacteristicproper} {\em and}
\ref{def:i2scharacteristic}}\,{\em)}.
\end{proposition}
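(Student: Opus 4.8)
The plan is to leverage the explicit solution formula (\ref{eq:linearformula}) from Example \ref{ex:linear}, now specialized to a $\theta$-stationary input $\bar{u}$ with associated random variable $u \in \omegaub$, so that $u_\sigma(\omega) = u(\theta_\sigma\omega)$. First I would write down the candidate i/s characteristic: evaluating the pullback trajectory $\check\xi^{x,u}_t(\omega) = \varphi_u(t,\theta_{-t}\omega,x(\theta_{-t}\omega))$ using (\ref{eq:linearformula}) and the change of variables $\sigma \mapsto \sigma - t$ in the convolution integral, one gets
\[
  \check\xi^{x,u}_t(\omega) = x(\theta_{-t}\omega)\e^{\int_{-t}^0 a(\theta_\tau\omega)\,d\tau}
  + \int_{-t}^0 b(\theta_\sigma\omega)u(\theta_\sigma\omega)\e^{\int_\sigma^0 a(\theta_\tau\omega)\,d\tau}\,d\sigma\,.
\]
The first term is controlled by (L2) (with $s = -t$, $r = t$): it is bounded by $\gamma(\theta_{-t}\omega)\|x(\theta_{-t}\omega)\|\e^{-\lambda t}$, and since both $\gamma$ and $x$ are tempered this goes to $0$ as $t \to \infty$. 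The integral term converges, as $t \to \infty$, to the improper integral $\calK(u)(\omega) := \int_{-\infty}^0 b(\theta_\sigma\omega)u(\theta_\sigma\omega)\e^{\int_\sigma^0 a(\theta_\tau\omega)\,d\tau}\,d\sigma$, with absolute convergence guaranteed by (L2) (the exponential factor is $\leqslant \gamma(\omega)\e^{\lambda\sigma}$ for $\sigma \leqslant 0$) together with temperedness of $b$ and $u$ to dominate the polynomially-mild growth of $\|b(\theta_\sigma\omega)u(\theta_\sigma\omega)\|$ against the exponential. So the limit exists pointwise $\widetilde{\forall}\omega$, independently of $x$, which is exactly Definition \ref{def:i2scharacteristicproper}; measurability of $\calK(u)$ follows as in Proposition \ref{prop:convergencetoeq}, and one checks $\calK(u)$ is tempered by estimating $\|\calK(u)(\theta_s\omega)\|$ using (L2) and Lemma \ref{lemma:temperedproduct}.

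Next I would establish the tempered/bounded convergence refinements. For the \emph{tempered i/s characteristic} clause, I need a single tempered dominating random variable $r(\omega)$ with $\|\check\xi^{x,u}_t(\omega) - \calK(u)(\omega)\| \leqslant r(\omega)$ for all large $t$; this is obtained by bounding the head term by $\gamma(\theta_{-t}\omega)\|x(\theta_{-t}\omega)\|\e^{-\lambda t}$ — which, using (L2) applied at a shifted base point plus temperedness of $\gamma$ and $x$, is itself dominated by a tempered random variable (indeed by a constant multiple of $\sup_t \gamma(\theta_{-t}\omega)\e^{-\lambda t/2}\cdot\sup_t\|x(\theta_{-t}\omega)\|\e^{-\lambda t/2}$, which is finite $\widetilde{\forall}\omega$) — and bounding the tail $\int_{-\infty}^{-t}$ of the integral by $\gamma(\omega)\|b\|_\theta\|u\|_\theta \cdot(\text{something} \to 0)$, hence by a tempered random variable uniformly in $t$. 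This gives $\check\xi^{x,u}_t \longtemparrow \calK(u)$ for $x \in \omegaxtheta$, $u \in \omegautheta$. For the \emph{bounded} clause, when additionally $b \in \linftyofX$ and $a$ is essentially bounded above, for $x \in \linftyofX$ and $u \in \linftyofU$ the head term is dominated by $\|\gamma\|_\theta\|x\|_\infty\e^{-\lambda t}$ and the tail by $\|\gamma\|_\theta\|b\|_\infty\|u\|_\infty/\lambda$ — again tempered (in fact one can take $r$ to be a tempered random variable, not necessarily a constant, which is all Definition \ref{def:temperedconvergence} requires) — so $\calK$ restricts to a bounded i/s characteristic. Boundedness of the RDSI itself (Definition \ref{def:boundedcocycle}) is the easy forward-trajectory estimate: $\|\varphi(t,\omega,x(\omega),u)\| \leqslant \|x\|_\infty\e^{\int_0^t a(\theta_\tau\omega)d\tau} + \int_0^t\|b\|_\infty\|u\|_\infty\e^{\int_\sigma^t a}d\sigma$, and the essential upper bound $a \leqslant M$ gives $\e^{\int_\sigma^t a} \leqslant \e^{M(t-\sigma)}$, so the whole thing is bounded by a constant depending only on $t, \|x\|_\infty, \|u\|_\infty, M, \|b\|_\infty$, hence essentially bounded in $\omega$.

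Finally, for tempered continuity of $\calK$ (Definition \ref{def:i2scharcontinuity}), suppose $u_\alpha \temparrow u_\infty$ in $\omegautheta$ with tempered dominator $\rho$ and index $\alpha_0$. Then $\calK(u_\alpha)(\omega) - \calK(u_\infty)(\omega) = \int_{-\infty}^0 b(\theta_\sigma\omega)\big(u_\alpha(\theta_\sigma\omega) - u_\infty(\theta_\sigma\omega)\big)\e^{\int_\sigma^0 a(\theta_\tau\omega)d\tau}\,d\sigma$; the integrand converges pointwise to $0$ and, for $\alpha \geqslant \alpha_0$, is dominated in absolute value by $\|b(\theta_\sigma\omega)\|\,\rho(\theta_\sigma\omega)\,\gamma(\omega)\e^{\lambda\sigma}$, which is integrable over $(-\infty,0]$ by the same temperedness-beats-exponential argument (now applied to the product $b\rho$, tempered by Lemma \ref{lemma:temperedproduct}(3)). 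Dominated convergence gives $\calK(u_\alpha)(\omega) \to \calK(u_\infty)(\omega)$ pointwise, and the bound $\|\calK(u_\alpha)(\omega) - \calK(u_\infty)(\omega)\| \leqslant \gamma(\omega)\int_{-\infty}^0\|b(\theta_\sigma\omega)\|\rho(\theta_\sigma\omega)\e^{\lambda\sigma}d\sigma$, shown to be tempered in $\omega$ via (L2) and Lemma \ref{lemma:temperedproduct}, supplies the required tempered dominator; hence $\calK(u_\alpha) \temparrow \calK(u_\infty)$. The main obstacle I anticipate is keeping the temperedness bookkeeping clean: at several points one must absorb a polynomially-or-worse-growing tempered factor ($\|x(\theta_{-t}\omega)\|$, $\|b(\theta_\sigma\omega)\|$, $\rho(\theta_\sigma\omega)$, or their shifts) against a genuine exponential decay coming from (L2), and one has to be careful that the $\omega$-dependence of the constants stays tempered (using the "split $\gamma$ into $\e^{-\lambda r/2}\cdot\e^{-\lambda r/2}$" trick that already appears in the proof of Lemma \ref{lemma:temperedproduct}), and that the exceptional $\theta$-invariant null set on which (L2) may fail is handled uniformly across all the suprema and integrals.
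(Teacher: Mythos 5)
Your proposal is correct and follows essentially the same route as the paper: the explicit variation-of-constants formula, the change of variables turning the pullback trajectory into $x(\theta_{-t}\omega)\e^{\int_{-t}^0 a(\theta_\tau\omega)\,d\tau}+\int_{-t}^0 b(\theta_\sigma\omega)u(\theta_\sigma\omega)\e^{\int_\sigma^0 a(\theta_\tau\omega)\,d\tau}\,d\sigma$, the (L2)-plus-temperedness estimates for $\calK(u)$ and for the continuity dominator $R(\omega)$, and the $\e^{Mt}$ forward estimate for boundedness of $\varphi$. The only slip is writing the (L2) bound as $\gamma(\omega)\e^{\lambda\sigma}$ rather than $\gamma(\theta_\sigma\omega)\e^{\lambda\sigma}$ in a couple of displayed dominators, but since you explicitly flag that shifted tempered factors must be absorbed into the exponential decay (exactly as the paper does via $K_{\omega,\delta}$ and Lemma \ref{lemma:temperedproduct}), this is a bookkeeping correction rather than a gap.
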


\noindent We prove both these results together.

Pick any $u \in \omegautheta$. Then (L1) and (L2) imply
(\cite{chueshov-2002}) that the limit
\begin{equation}\label{eq:lineark}
\iftwocolumn{\adjusttoleft}
  \lim_{t\to\infty}
\varphi(t,\thetaminus{t}\omega,x(\thetaminus{t}\omega),\bar{u}) = 
\ifonecolumn{%
\int_{-\infty}^0 b(\theta_\sigma\omega)u(\theta_\sigma\omega)\e^{\int_\sigma^0
a(\theta_\tau\omega)\,d\tau}\,d\sigma
\end{equation}%
}%
\iftwocolumn{%
\]
\[
\int_{-\infty}^0 b(\theta_\sigma\omega)u(\theta_\sigma\omega)\e^{\int_\sigma^0
a(\theta_\tau\omega)\,d\tau}\,d\sigma
\]%
}%
exists for $\theta$-almost every $\omega \in \Omega$, for any tempered initial
state $x \in \omegaxtheta$. Let
$\calK(u)\colon \Omega \rightarrow X$
be the map defined $\theta$-almost everywhere in $\Omega$ by
\begin{equation}\label{eq:lineari2schar}
  (\calK(u))(\omega) := \int_{-\infty}^0
b(\theta_\sigma\omega)u(\theta_\sigma\omega)\e^{\int_\sigma^0
a(\theta_\tau\omega)\,d\tau}\,d\sigma\,.
\end{equation}
By Equation (\ref{eq:lineark}) and Proposition \ref{prop:convergencetoeq},
$\calK(u)$ is an equilibrium of $(\theta,\varphi_u)$.

{\bf $\calK(u)$ is tempered.} Fix $\delta > 0$ arbitrarily. By (L2), for each $s
\in \r$,
we have
\[
  \begin{array}{rcl}
    |(\calK(u))(\theta_s\omega)|\e^{-\delta|s|} 
\ifonecolumn{&\leqslant&}\iftwocolumn{\\[2ex]&&\adjusttoleftmore
\leqslant\;\;}\displaystyle
\int_{-\infty}^0|b(\theta_{\sigma+s}\omega)u(\theta_{\sigma+s}
\omega)\gamma(\theta_{\sigma+s}\omega)|\e^{-(\lambda|\sigma|+\delta|s|)}\,
d\sigma \\[2ex]
\ifonecolumn{&\leqslant&}\iftwocolumn{&&\adjusttoleftmore
\leqslant\;\;}\displaystyle
\int_{-\infty}^0|b(\theta_{\sigma+s}\omega)u(\theta_{\sigma+s}
\omega)\gamma(\theta_{\sigma+s}\omega)|\e^{-m|\sigma + s|}\,
d\sigma \\[2ex]
\ifonecolumn{&\leqslant&}\iftwocolumn{&&\adjusttoleftmore
\leqslant\;\;}\displaystyle
K_{\omega,\delta}\int_{-\infty}^0 \e^{-m|\sigma + s|/2}\,d\sigma \\[2ex]
\ifonecolumn{&\leqslant&}\iftwocolumn{&&\adjusttoleftmore
\leqslant\;\;}\displaystyle
K_{\omega,\delta}\int_{-\infty}^\infty \e^{-m|\sigma|/2}\,d\sigma\,,
  \end{array}
\]
which is finite and depends only on $\omega$ and $\delta$---in the
computations above we used $m := \min\{\lambda,\delta\} > 0$ and
\[
  K_{\omega,\delta} := \sup_{s \in \r}
|b(\theta_{s}\omega)u(\theta_{s}
\omega)\gamma(\theta_{s}\omega)|\e^{-m|s|/2} < \infty\,.
\]
$K_{\omega,\delta}$ being finite follows from the hypotheses that $b,u,\gamma$
are tempered and the fact that the product of tempered random variables is also
tempered (Lemma \ref{lemma:temperedproduct} (3)).

Observe that $u \in U_\theta^\Omega$ was chosen arbitrarily. Therefore this
shows that (\ref{eq:lineari2schar}) defines an input to state
characteristic 
$\calK\colon \omegautheta \rightarrow \omegaxtheta$ for
$(\theta,\varphi,\calU)$. We now show that $\calK$ is tempered continuous in the
sense of Definition
\ref{def:i2scharcontinuity}, and bounded in the sense of Definition
\ref{def:i2scharacteristic}.

{\bf Continuity.} Suppose that $u \in \calU$ converges to $u_\infty \in
\omegautheta$ in the tempered sense, with corresponding $r \in
(\r_{\geqslant 0})_\theta^\Omega$, and $\alpha = t_0 \geqslant 0$ (see
Definition
\ref{def:temperedconvergence}). Then $\calK({\check u_t}) \temparrow
\calK({u_\infty})$ as well. Indeed, for $t \geqslant
t_0$, we have
\[
  \begin{array}{rcl}
\iftwocolumn{\adjusttoleftmore&&}
|\calK({\check u_t})(\omega) - \calK({u_\infty})(\omega)| 
\ifonecolumn{&\leqslant &\displaystyle}
\iftwocolumn{\\[2ex]&\leqslant &\displaystyle}
\int_{-\infty}^0|b(\theta_{\sigma}\omega)|
|u_t(\theta_{\sigma}\omega) - u_\infty(\theta_\sigma\omega)|\e^{\int_\sigma^0
a(\theta_\tau\omega)\,d\tau}\,d\sigma \\[2ex]
&\leqslant &\displaystyle\int_{-\infty}^0|b(\theta_{\sigma}\omega)|
r(\theta_{\sigma}\omega)\gamma(\theta_{\sigma}\omega)\e^{-\lambda|\sigma|}\,
d\sigma \\[2ex]
&=: &R(\omega)\,,\quad \widetilde{\forall} \omega \in \Omega\,.
  \end{array}
\]
Now computations along the lines of the ones above showing that $\calK(u)$ is
tempered will show that the random variable $R$ so defined is also
tempered.

This completes the proof of Lemma \ref{lemma:continuousi2s}. We now discuss
Proposition \ref{prop:boundedi2s}.

{\bf $\calK$ is bounded.} Fix $u \in \linftyofU$ arbitrarily. If $x \in
\linftyofX$, then
\[
\begin{array}{rcl}
 |x(\thetaminus{t}\omega)\e^{\int_0^t a(\theta_{\tau-t}\omega)\,d\tau} - 0|
&= &|x(\thetaminus{t}\omega)\e^{\int_{0}^t a(\theta_{-\tau}\omega)\,d\tau}|
\\[2ex]
&\leqslant &\|x\|_\infty\widetilde{\gamma}(\omega)e^{-\lambda t}\\[2ex]
&\leqslant &\|x\|_\infty\widetilde{\gamma}(\omega)\,,
\ifonecolumn{\quad \forall t \geqslant 0\,,\ \widetilde{\forall} \omega \in
  \Omega\,.\end{array}\]}\iftwocolumn{\end{array}\]$\forall t \geqslant 0\,,\
\widetilde{\forall} \omega \in  \Omega$.}
Furthermore,
\[
 \begin{array}{rl}
   &\hspace{-4em}
\left| \displaystyle\int_0^t
b(\theta_{\sigma-t}\omega)u(\theta_{\sigma-t}\omega)\e^{\int_\sigma^t
a(\theta_{\tau-t}\omega)\,d\tau}\,d\sigma \right. \\
&\hspace{-1em}
\ifonecolumn{-}\iftwocolumn{\right.\\[2ex]&-\left.}
\left.\displaystyle\int_{-\infty}^0
b(\theta_{\sigma}\omega)u(\theta_{\sigma}\omega)\e^{\int_\sigma^0
a(\theta_{\tau}\omega)\,d\tau}\,d\sigma \right| \\[2ex]
&\hspace{4em} = \left|\displaystyle\int_{-\infty}^{-t}
b(\theta_{\sigma}\omega)u(\theta_{\sigma}\omega)\e^{\int_\sigma^0
a(\theta_{\tau}\omega)\,d\tau}\,d\sigma\right|\\[2ex]
&\hspace{4em} \leqslant \displaystyle\int_{-\infty}^{0}
|b(\theta_{\sigma}\omega)u(\theta_{\sigma}\omega)|\gamma(\theta_\sigma\omega)\e^
{-\lambda|\sigma|}\,d\sigma\\[2ex]
&\hspace{4em} =: R_1(\omega)\,, \quad\widetilde{\forall}\omega \in \Omega\,,
 \end{array}
\]
the random variable $R_1$ so defined being tempered as per computations such as
above. This shows that $\calK(\check\xi_t^{x,u}) \temparrow \calK(u)$ as $t
\rightarrow \infty$, for every $x \in \linftyofX$, and for every $u \in
\linftyofU$. Therefore $\calK$ is
bounded.

Finally, we show that

{\bf $\varphi$ is bounded.} Now suppose that $b$ is essentially bounded and, in
addition to satisfying (L2), $a$ is bounded from above almost everywhere by a
constant $M$.
Then for any $x \in \linftyofX$,
\[
\begin{array}{rcl}
  |x(\thetaminus{t}\omega)\e^{\int_0^t a(\theta_{\tau-t}\omega)\,d\tau}|
&= &|x(\thetaminus{t}\omega)\e^{\int_{-t}^0 a(\theta_{\tau}\omega)\,d\tau}|
\\[2ex]
\ifonecolumn{&\leqslant&}
\iftwocolumn{&&\adjusttoleftmore\leqslant\;\;}
\|x\|_\infty\e^{Mt}\,,\quad \forall t \geqslant 0\,,\
\widetilde{\forall} \omega \in \Omega\,.
\end{array}
\]
And if $u \in \calU \cap \sinftyU$, then
\[
\left| \displaystyle \int_0^t
b(\theta_{\sigma-t}\omega)u_{\sigma-t}(\omega)\e^{\int_\sigma^t
a(\theta_{\tau-t}\omega)\,d\tau}\,d\sigma \right| 
\ifonecolumn{\leqslant}\iftwocolumn{\] \[ \leqslant}
\|b\|_\infty\|u\|_\infty \displaystyle \int_0^t \e^{M(t-\sigma)}
\,d\sigma < \infty\,,
\]
for all $t \geqslant 0$ and $\theta$-almost all $\omega \in \Omega$.
This shows that $\varphi$ is bounded in the sense of Definition
\ref{def:boundedcocycle}. This completes the proof of Proposition
\ref{prop:boundedi2s}.
\mydiamond
\end{example}
\section{Monotone RDSI's}
\newcommand{\temperednn}{(\r_{\geqslant 0})^\Omega_\borel}

If $(X,\leqslant )$ is a partially ordered space and $p,q\in \stochQ$, we write
$p\leqslant q$
to mean that $p(t,\omega )\leqslant q(t,\omega )$ for all $t\in \Tplus$ and all
$\omega \in \Omega $.

\bd{def:monotoneRDIOS}
An RDSI $(\theta ,\varphi,\calU)$ is said to be \emph{monotone}
if the underlying state and input spaces are partially ordered $(X,
\leqslant_X)$, $(U, \leqslant_U)$, and $$\varphi(t,\omega,x,u) \leqslant_X 
\varphi(t,\omega,z,v)$$ whenever $x,z \in X$ and $u,v \in \stochU$ are such
that $x\leqslant_X z$ and $u\leqslant_U v$
\ed
Most often the underlying partial order will be clear from the context and we
shall use simply ``$\leqslant$'' to denote either of ``$\leqslant_X$'',
``$\leqslant_U$'' or ``$\leqslant_Y$''.

\subsection{Converging Input to Converging State}

Recall the convention of using a check mark $\check{\phantom{u}}$ above the
symbol for a given $\theta$-stochastic process to denote its corresponding
pullback flow.

\begin{theorem}[Tempered CICS]\label{thm:randomcics}
 Let $(\theta,\varphi,\calU)$ be a monotone RDSI with state space $X = \r^n$ and
input space $U = \r^k$, both equipped with the usual positive orthant-induced
partial order. Suppose that $\varphi$ has a continuous and tempered i/s
characteristic ${\cal K}\colon \temperedU \rightarrow \temperedX$. If $u \in
\calU$ and $u_\infty \in \temperedU$ are such that
\begin{equation}\label{eq:inputconvergence}
 \check{u}_t \longrightarrow_\theta u_\infty\,,
\end{equation}
then
\[
 \check\xi^{x,u}_t \longrightarrow_\theta \calK(u_\infty)\,,\quad \forall x \in
\temperedX\,.
\]
In other words, if the pullback trajectory of $u$ converges to $u_\infty$ in
the tempered sense, then the pullback trajectories of $\varphi$ subject to the
given input $u$ and starting at any tempered random state $x$ will converge to
$\calK(u_\infty)$ as well.
\end{theorem}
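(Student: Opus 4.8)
The plan is to exploit monotonicity to sandwich the pullback trajectory $\check\xi^{x,u}_t$ between two families of pullback trajectories driven by \emph{constant-in-the-future} inputs whose values approach $u_\infty$ from below and from above, and then to apply the i/s characteristic $\calK$ to each bounding family. First I would use the hypothesis $\check u_t \longrightarrow_\theta u_\infty$ together with the definition of tempered convergence (Definition \ref{def:temperedconvergence}) to produce, for each fixed $\varepsilon > 0$, an $s_0 = s_0(\varepsilon)$ such that, for $\theta$-almost all $\omega$ and all $s \geqslant s_0$, the pullback value $\check u_s(\omega) = u_s(\theta_{-s}\omega)$ lies within $\varepsilon$ (coordinatewise, up to a tempered random bound) of $u_\infty$. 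Concretely, I want random variables $u_\infty^- , u_\infty^+ \in \temperedU$ with $u_\infty^- \leqslant u_\infty \leqslant u_\infty^+$ and $u_\infty^\pm \to u_\infty$ in the tempered sense as $\varepsilon \to 0$, such that for $s \geqslant s_0$ one has $u_\infty^-(\theta_{-s}\omega) \leqslant u_s(\theta_{-s}\omega) \leqslant u_\infty^+(\theta_{-s}\omega)$; here the positive-orthant order on $\r^k$ and the freedom to add a tempered ``$\varepsilon r$'' slack (Lemma \ref{lemma:temperedproduct}) are what make this possible.

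Next I would set up the comparison. Fix $\varepsilon$, the associated $s_0$, and a tempered initial state $x$. For $t > s_0$ split the pullback evolution at the intermediate time $t - s_0$: by the cocycle/concatenation axiom (I4) and Remark \ref{rem:rhoshift}, $\check\xi^{x,u}_t(\omega) = \varphi\bigl(s_0,\ \theta_{-s_0}\omega,\ \varphi(t-s_0,\ \theta_{-t}\omega,\ x(\theta_{-t}\omega),\ u),\ \rho_{t-s_0}(u)\bigr)$, and on the time interval $[0,t-s_0)$ of the inner evolution the relevant input values $u_\tau(\theta_{-t}\omega)$ are exactly the pullback values at times $\geqslant s_0$, hence trapped between the constant $\theta$-stationary inputs $\bar u_\infty^-$ and $\bar u_\infty^+$ (restricted to that interval, invoking (I5) to replace $u$ by a concatenation $\bar u_\infty^{\pm} \lozenge_{t-s_0} (\cdot)$ agreeing with $u$ where it matters). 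Monotonicity of $\varphi$ (Definition \ref{def:monotoneRDIOS}), applied both in the state argument and the input argument, then yields
\[
  \check\xi^{x,u_\infty^-}_{\,t}(\omega) \ \leqslant\ \check\xi^{x,u}_t(\omega) \ \leqslant\ \check\xi^{x,u_\infty^+}_{\,t}(\omega)
\]
modulo bookkeeping about the tail piece $[t-s_0, t)$, which is handled by another application of monotonicity with constant inputs bounding $u$ on that finite window plus the fact that $s_0$ is fixed. Since $u_\infty^\pm \in \temperedX$ are $\theta$-stationary-type (honest random variables), the tempered i/s characteristic property gives $\check\xi^{x,u_\infty^\pm}_{\,t} \longrightarrow_\theta \calK(u_\infty^\pm)$ as $t \to \infty$.

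Finally I would let $\varepsilon \to 0$. Continuity of $\calK$ (Definition \ref{def:i2scharcontinuity}) and $u_\infty^\pm \longrightarrow_\theta u_\infty$ force $\calK(u_\infty^\pm) \longrightarrow_\theta \calK(u_\infty)$, so the upper and lower limits of $\check\xi^{x,u}_t$ are squeezed to $\calK(u_\infty)$; assembling the pointwise ($\theta$-a.e.) convergence with a uniform tempered dominating bound (a sum of the tempered bounds produced along the way, finite by Lemma \ref{lemma:temperedproduct}) upgrades this to tempered convergence $\check\xi^{x,u}_t \longrightarrow_\theta \calK(u_\infty)$. I expect the main obstacle to be the bookkeeping in the comparison step: one must be careful that shifting the $\Omega$-argument in the pullback does \emph{not} shift the input (the point emphasized in Subsection \ref{sec:pullbacktrajectories}), that the concatenation/restriction axioms (I4), (I5) are applied to inputs that genuinely agree on the correct time intervals, and that the ``slack'' random variables $u_\infty^\pm$ can be chosen simultaneously tempered, order-comparable to $u$ along the pullback, and convergent to $u_\infty$ — this is where the interplay of the diagonal limit ($\varepsilon \to 0$ inside, $t \to \infty$ outside) is most delicate and should probably be organized as a single $\limsup$/$\liminf$ estimate rather than an iterated limit.
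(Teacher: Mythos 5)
Your plan is essentially the paper's proof: the envelope inputs $u_\infty^\pm$ you describe are realized there as $a_\tau(\omega) := \inf_{t\geqslant\tau} u_t(\theta_{-t}\omega)$ and $b_\tau(\omega) := \sup_{t\geqslant\tau} u_t(\theta_{-t}\omega)$ (coordinatewise), which converge to $u_\infty$ in the tempered sense and satisfy $\bar a_\tau \leqslant \rho_\tau(u) \leqslant \bar b_\tau$ as $\theta$-stationary processes, after which everything proceeds as you outline --- cocycle split absorbing the initial time segment into a new initial state $x_1$, monotone sandwich, the tempered i/s characteristic applied to the stationary envelopes, continuity of $\calK$, and a sum of tempered dominating bounds via Lemma \ref{lemma:temperedproduct}. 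The one bookkeeping item to repair is that monotonicity requires comparison against the $\theta$-stationary processes $\bar u_\infty^\pm$, i.e.\ $u_\infty^-(\theta_s\omega) \leqslant (\rho_\tau(u))_s(\omega) \leqslant u_\infty^+(\theta_s\omega)$ for all $s \geqslant 0$ after shifting by $\tau$ (not the bound $u_\infty^\pm(\theta_{-s}\omega)$ versus $u_s(\theta_{-s}\omega)$ as you wrote it), and the $\inf$/$\sup$ construction delivers exactly this.
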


\begin{proof}
For each $\tau \geqslant 0$, let $a_\tau,b_\tau \in \omegaub$ be
defined by
\begin{equation}\label{eq:cicsproofinf}
 a_\tau(\omega) := \inf_{t\geqslant\tau}u_t(\thetaminus{t}\omega)
\end{equation}
and
\begin{equation}\label{eq:cicsproofsup}
 b_\tau(\omega) := \sup_{t\geqslant\tau}u_t(\thetaminus{t}\omega)
\end{equation}
for each $\omega\in\Omega$, where the $\inf$ and $\sup$ are taken
coordinatewise. It follows from (\ref{eq:inputconvergence}) that
\[
 a_\tau(\omega),b_\tau(\omega) \longrightarrow u_\infty(\omega)\,,\as \tau
\rightarrow \infty\,,\quad \widetilde{\forall}\omega \in \Omega\,.
\]
 Let $t_0 \geqslant 0$ and $r \in (\r_{\geqslant 0})_\theta^\Omega$ be such that
\[
 |u_t(\thetaminus{t}\omega) - u_\infty(\omega)| \leqslant r(\omega)\,,\quad
\forall \tau
\geqslant t_0\,,\ \widetilde{\forall}\omega \in \Omega\,.
\]
So it follows from the continuity of the Euclidean norm in $\r^n$ that
\[
 |a_\tau(\omega) - u_\infty(\omega)|,|b_\tau(\omega) - u_\infty(\omega)|
\leqslant r(\omega)\,,\quad \forall \tau
\geqslant t_0\,,\ \widetilde{\forall}\omega \in \Omega\,.
\]
also. In other words, $a_\tau,b_\tau \longtemparrow u_\infty$ as $\tau
\rightarrow
\infty$. Moreover,
\[
 a_{\tau}(\theta_t\omega) \leqslant u_t(\omega) \leqslant
b_{\tau}(\theta_t\omega)\,, \quad t \geqslant \tau \geqslant 0\,,\quad \omega
\in \Omega.
\]

For each $\tau \geqslant 0$, let $\bar a_\tau, \bar b_\tau$ be the
$\theta$-stationary stochastic processes generated by $a_\tau,b_\tau$,
respectively. Then
\[
 (\bar a_\tau)_s(\omega) = a_\tau(\theta_s\omega) =
 a_\tau(\theta_{s+\tau}\thetaminus{\tau}\omega)
\iftwocolumn{\]\[}
 \leqslant u_{\tau + s}( \thetaminus{\tau}\omega) = (\rho_\tau(u))_s(\omega)\,,
\]
and, similarly,
\[
 (\rho_\tau(u))_s(\omega) \leqslant (\bar b_\tau)_s(\omega)\,, \quad s,\tau
\geqslant 0\,,\quad \omega \in \Omega\,.
\]
Thus
\begin{equation}\label{eq:ataurhoubtau}
 \bar a_\tau \leqslant \rho_\tau(u) \leqslant \bar b_\tau\,,\quad \tau \geqslant
0\,.
\end{equation}

Fix arbitrarily $x_0 \in \omegaxb$. For any $\omega \in \Omega$ and any $t
\geqslant \tau \geqslant 0$, we have
\[
 \begin{array}{rcl}
  |\check\xi^{x_0,u}_t(\omega) - (\calK(u_\infty))(\omega)|
&\leqslant &|\check\xi^{x_0,u}_t(\omega) - \check\xi^{x_0,\bar
a_\tau}_t(\omega)| \\[1ex]
&&\iftwocolumn{\adjusttoleft}+ |\check\xi^{x_0,\bar a_\tau}_t(\omega) -
(\calK(a_\tau))(\omega)| \\[1ex]
&&\iftwocolumn{\adjusttoleft}+
|(\calK(a_\tau))(\omega)-(\calK(u_\infty))(\omega)|\,.
 \end{array}
\]
We first show that
\begin{equation}\label{eq:star}
 |\check\xi^{x_0,u}_t(\omega) - (\calK(u_\infty))(\omega)| \longrightarrow
0\,,\as t \rightarrow \infty\,,\quad \widetilde{\forall} \omega \in \Omega\,.
\end{equation}

Given $\epsilon > 0$, it follows from the continuity of $\calK$ that there
exists $\tau_0 \geqslant 0$ such that
\[
 |(\calK(a_\tau))(\omega)-(\calK(u_\infty))(\omega)|
\iftwocolumn{,\] \[}
\ifonecolumn{,}
 |(\calK(b_\tau))(\omega)-(\calK(u_\infty))(\omega)| < \epsilon/6\,,\quad \tau
\geqslant \tau_0\,.
\]
Now we can use the definition of input to state characteristic to choose $t_0
\geqslant \tau_0$ such that
\[
 |\check\xi^{x_0,\bar a_{\tau_0}}_t(\omega) - (\calK(a_{\tau_0}))(\omega)| <
\epsilon/6\,,\quad t \geqslant t_0\,.
\]
Using the cocycle property we may write
\[
\begin{array}{rcl}
\iftwocolumn{&&}
 \check\xi^{x_0,u}_t(\omega) 
\iftwocolumn{\\[1ex]}
&= &
\varphi(t-\tau_0,\thetaminus{(t-\tau_0)}\omega,\varphi(\tau_0,\thetaminus{t}
\omega,{x}_0(\thetaminus{t}\omega),u),\rho_{\tau_0}(u)) \\[1ex]
 &=&
\varphi(t-\tau_0,\thetaminus{(t-\tau_0)}\omega,x_1(\thetaminus{(t-\tau_0)}
\omega),\rho_{\tau_0}(u)) \\[1ex]
 &=& \varphi(s,\thetaminus{s}\omega,x_1(\thetaminus{s}\omega),\rho_{\tau_0}(u))
\\[1ex]
 &=& \check\xi^{x_1,\rho_{\tau_0}(u)}_s(\omega)\,,
\end{array}
\]
where $x_1 \in \omegaxb$ is defined by $x_1(\omega) :=
\varphi(\tau_0,\theta_{\tau_0}\omega,x_0(\theta_{\tau_0}\omega),u)$, $\omega \in
\Omega$, and we write $s := t - \tau_0$. Now by monotonicity
\[
 \check\xi^{x_1,\bar a_{\tau_0}}_s(\omega) 
 \leqslant \check\xi^{x_1,\rho_{\tau_0}(u)}_s(\omega)
 \leqslant \check\xi^{x_1,\bar b_{\tau_0}}_s(\omega)\,,
\]
hence, using that $x \leqslant y \leqslant z$ in the positive orthant order
implies $|y-x| \leqslant |z-x|$, we have
\[
\begin{array}{rcl}
\iftwocolumn{&&}
|\check\xi^{x_1,\rho_{\tau_0}(u)}_s(\omega)
 - \check\xi^{x_1,\bar a_{\tau_0}}_s(\omega)|
\iftwocolumn{\\[1ex]}
&\leqslant& |\check\xi^{x_1,\bar b_{\tau_0}}_s(\omega)
 - \check\xi^{x_1,\bar a_{\tau_0}}_s(\omega)|\\[1ex]
&\leqslant& |\check\xi^{x_1,\bar b_{\tau_0}}_s(\omega)
 - (\calK(b_{\tau_0}))(\omega)| \\[1ex]
 &&+|(\calK(b_{\tau_0}))(\omega) - (\calK(u_0))(\omega)| \\[1ex]
 &&+|(\calK(u_0))(\omega) - (\calK(a_{\tau_0}))(\omega)| \\[1ex]
 &&+|(\calK(a_{\tau_0}))(\omega) - \check\xi^{x_1,\bar a_{\tau_0}}_s(\omega)|
\\[1ex]
 &\leqslant& |\check\xi^{x_1,\bar b_{\tau_0}}_s(\omega) -
(\calK(b_{\tau_0}))(\omega)| \\[1ex]
&& +|(\calK(a_{\tau_0}))(\omega) - \check\xi^{x_1,\bar a_{\tau_0}}_s(\omega)|
 +\epsilon/3\,,
\end{array}
\]
for every $s \geqslant 0$.
Again from the definition of input to state characteristic, one can choose $s_0
\geqslant 0$ large enough so that
\[
 |\check\xi^{x_1,\bar b_{\tau_0}}_s(\omega) - (\calK(b_{\tau_0}))(\omega)|, 
 |(\calK(a_{\tau_0}))(\omega) - \check\xi^{x_1,\bar a_{\tau_0}}_s(\omega)|
 < \epsilon/6\,,\ifonecolumn{\quad s \geqslant s_0\,.\]}\iftwocolumn{\]$s
\geqslant s_0$.}
It then follows that
\[
 |\check\xi^{x_0,u}_t(\omega) - (\calK(u_\infty))(\omega)| < \epsilon/2 +
\epsilon/3 + \epsilon/3 = \epsilon\,,\ifonecolumn{\quad t \geqslant
\max\{t_0,\tau_0+s_0\}\,.\]}\iftwocolumn{\]$t \geqslant
\max\{t_0,\tau_0+s_0\}$.}
And since $\epsilon > 0$ was arbitrary, (\ref{eq:star}) holds.

It remains to show that the convergence occurs in the tempered sense. Indeed,
pick $\tau_1,\tau_2,t_3,s_4,s_5 \geqslant 0$ and $r_1,r_2,r_3,r_4,r_5 \in
\temperednn$ such that
\[
 |(\calK(a_\tau))(\omega)-(\calK(u_\infty))(\omega)| \leqslant
r_1(\omega)\,,\quad \tau \geqslant \tau_1\,,\quad \widetilde{\forall}\omega \in
\Omega\,,
\]
\[
 |(\calK(b_\tau))(\omega)-(\calK(u_\infty))(\omega)| \leqslant
r_2(\omega)\,,\quad \tau \geqslant \tau_2\,,\quad \widetilde{\forall}\omega \in
\Omega\,,
\]
\[
|\check\xi^{x_0,a_{\widetilde{\tau}_0}}_t(\omega) -
(\calK(a_{\widetilde{\tau}_0}))(\omega)| \leqslant r_3(\omega)\,,\quad t
\geqslant t_3\,,\quad \widetilde{\forall}\omega \in \Omega\,,
\]
where $\widetilde{\tau}_0 := \max\{\tau_1,\tau_2\}$, and where we may assume
that $t_3 \geqslant \widetilde{\tau_0}$,
\[
 |\check\xi^{x_2,b_{\widetilde{\tau}_0}}_s(\omega) -
(\calK(b_{\widetilde{\tau}_0}))(\omega)| \leqslant r_4(\omega)\,,\quad s
\geqslant s_4\,,\quad \widetilde{\forall}\omega \in \Omega\,,
\]
and
\[
 |(\calK(a_{\widetilde{\tau}}))(\omega) -
\check\xi^{x_2,a_{\widetilde{\tau}_0}}_s(\omega)| \leqslant r_5(\omega)\,,\quad
s \geqslant s_5\,,\quad \widetilde{\forall}\omega \in \Omega\,,
\]
where $x_2 \in \omegaxb$ is defined by $x_2(\omega) :=
\varphi(\widetilde{\tau}_0,\theta_{\widetilde{\tau}_0}\omega,x_0(\theta_{
\widetilde{\tau}_0}\omega))$, $\omega \in \Omega$. Then similar estimates as the
ones we just carried out for the almost everywhere pointwise convergence give us
\[
 |\check\xi^{x_0,u}_s(\omega) - (\calK(u_\infty))(\omega)| 
\iftwocolumn{\]\[}
\leqslant r_1(\omega) + r_3(\omega) + (r_4(\omega) + r_2(\omega) + r_1(\omega) +
r_5(\omega))\,,
\]
for all
\[
 t \geqslant T_0 := \max\{t_3,\widetilde{\tau}_0 + s_4, \widetilde{\tau}_0 +
s_5\}\,,
\]
and for almost every $\omega \in \Omega$. Since the sum of tempered random
variables is still a tempered random variable---Lemma
\ref{lemma:temperedproduct} (1)---, the theorem follows by setting $R
:= 2r_1 + r_2 + r_3 + r_4 + r_5$.
\end{proof}

\begin{theorem}[Bounded CICS]\label{thm:boundedcics}
 Let $(\theta,\varphi,\calU)$ be a monotone RDSI with state space $X = \r^n$ and
input
space $U = \r^k$, both equipped with the usual positive orthant-induced partial
order. Suppose that $\varphi$ is bounded---see {\em Definition
\ref{def:i2scharacteristic}}---and has a continuous and bounded i/s
characteristic ${\cal K}\colon \temperedU \rightarrow \temperedX$. If $u \in
\calU \cap S_\infty^U$ and $u_\infty \in \boundedU$ are such that
\[
 \check{u}_t \longrightarrow_\theta u_\infty\,,
\]
then
\[
 \check\xi^{x,u}_t \longrightarrow_\theta \calK(u_\infty)\,,\quad \forall x \in
\boundedX\,.
\]
In other words, if the pullback trajectory of $u$ converges to $u_\infty$ in
the tempered sense, then the pullback trajectories of $\varphi$ subject to the
given input $u$ and starting at any bounded random state $x$ will converge to
$\calK(u_\infty)$.
\end{theorem}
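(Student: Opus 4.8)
The plan is to follow the proof of Theorem~\ref{thm:randomcics} (Tempered CICS) almost verbatim, replacing ``tempered'' by ``bounded'' wherever the i/s characteristic is invoked and using the hypothesis that $\varphi$ is bounded to keep the auxiliary random states and inputs inside $\boundedX$ and $\sinftyU$. Starting from $\check u_t \longrightarrow_\theta u_\infty$, I would again set, for $\tau \geqslant 0$, $a_\tau(\omega) := \inf_{t \geqslant \tau} u_t(\thetaminus{t}\omega)$ and $b_\tau(\omega) := \sup_{t \geqslant \tau} u_t(\thetaminus{t}\omega)$, coordinatewise, so that $a_\tau, b_\tau \in \omegaub$ and $a_\tau, b_\tau \longtemparrow u_\infty$ exactly as before. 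The new observation is that, since $u \in \calU \cap \sinftyU$ is \emph{globally} essentially bounded, one has $|a_\tau(\omega)|,|b_\tau(\omega)| \leqslant \|u\|_\infty$ for $\theta$-almost all $\omega$, hence $a_\tau, b_\tau \in \boundedU$ (and $u_\infty \in \boundedU$ by hypothesis); the sandwich $\bar a_\tau \leqslant \rho_\tau(u) \leqslant \bar b_\tau$ of Equation~(\ref{eq:ataurhoubtau}) is unchanged.

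Next, fixing $x_0 \in \boundedX$, I would carry out the same cocycle splitting as in the tempered proof: write $\check\xi^{x_0,u}_t = \check\xi^{x_1,\rho_{\tau_0}(u)}_s$ with $s := t - \tau_0$ and $x_1(\omega) := \varphi(\tau_0,\theta_{\tau_0}\omega,x_0(\theta_{\tau_0}\omega),u)$, and sandwich it by monotonicity between $\check\xi^{x_1,\bar a_{\tau_0}}_s$ and $\check\xi^{x_1,\bar b_{\tau_0}}_s$. This is the point at which the boundedness of $\varphi$ is needed: since $\theta_{\tau_0}$ is measure preserving, $\omega \mapsto x_0(\theta_{\tau_0}\omega)$ is again in $\boundedX$, and then Definition~\ref{def:boundedcocycle} together with $u \in \calU \cap \sinftyU$ gives $x_1 \in \boundedX$; the same reasoning applies to the analogous auxiliary state $x_2$ that appears in the temperedness estimate. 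With $x_1 \in \boundedX$ and $a_{\tau_0}, b_{\tau_0} \in \boundedU$, I may now invoke the \emph{bounded} i/s characteristic (Definition~\ref{def:i2scharacteristic}) to obtain $\check\xi^{x_1,\bar a_{\tau_0}}_s \longtemparrow \calK(a_{\tau_0})$ and $\check\xi^{x_1,\bar b_{\tau_0}}_s \longtemparrow \calK(b_{\tau_0})$, which play exactly the role of the tempered i/s characteristic in Theorem~\ref{thm:randomcics}.

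From here the argument is identical to that of Theorem~\ref{thm:randomcics}: continuity of $\calK$ makes $|\calK(a_\tau) - \calK(u_\infty)|$ and $|\calK(b_\tau) - \calK(u_\infty)|$ small for $\tau$ large; the $\epsilon/6$-bookkeeping through the triangle inequality and the elementary fact that $x \leqslant y \leqslant z$ in the positive orthant implies $|y-x| \leqslant |z-x|$ yields $|\check\xi^{x_0,u}_t(\omega) - (\calK(u_\infty))(\omega)| \to 0$ as $t \to \infty$ for $\theta$-almost all $\omega$; and running the same estimate with tempered, nonnegative dominating random variables in place of the $\epsilon/6$'s produces, via Lemma~\ref{lemma:temperedproduct}(1), a single tempered $R$ dominating $|\check\xi^{x_0,u}_t - \calK(u_\infty)|$ for all large $t$, i.e.\ convergence in the sense of $\longrightarrow_\theta$. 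I expect the only genuine obstacle to be the one just indicated: checking that each auxiliary state produced by the cocycle splittings remains in $\boundedX$. This is exactly where the hypotheses ``$\varphi$ is bounded'' and ``$u$ globally bounded'' are used, and without them the bounded i/s characteristic would not be applicable; everything else is a transcription of the Tempered CICS proof.
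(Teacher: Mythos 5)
Your proposal is correct and follows essentially the same route as the paper, which likewise reduces the Bounded CICS theorem to the Tempered CICS proof by observing that the global boundedness of $u$ keeps $a_\tau, b_\tau$ in $\boundedU$ and that the boundedness of $\varphi$ keeps the auxiliary states produced by the cocycle splitting in $\boundedX$, so that the bounded i/s characteristic can be invoked in place of the tempered one. Your write-up is in fact somewhat more explicit than the paper's sketch about why each auxiliary state (e.g.\ $x_1$, $x_2$) remains essentially bounded, which is exactly the right point to check.
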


\begin{proof}
 The proof is essentially the same as the proof of Theorem \ref{thm:randomcics}
with just a couple of extra observations. The assumption that $\varphi$ is
bounded will cause the random variable $x_2$ defined along said proof to be also
bounded. So pullback trajectories starting at $x_2$ and subject to a
$\theta$-stationary input will converge to the appropriate state characteristic.
Moreover, the hypothesis that $u \in
\calU \cap S_\infty^U$ implies that the random variables $a_\tau, b_\tau$ are
also bounded. So
\[
 \check\xi^{x,a_\tau}_t \longtemparrow \calK(a_\tau)
\]
and
\[
 \check\xi^{x,b_\tau}_t \longtemparrow \calK(b_\tau)
\]
as $t \rightarrow \infty$ for any $x \in \linftyofX$, for any $\tau \geqslant
0$. And so the estimates at the end of the proof of temperedness of the
convergence
\[
 \check\xi^{x,u}_t \longrightarrow_\theta \calK(u_\infty)
\]
in Theorem \ref{thm:randomcics} still hold. The result follows.
\end{proof}

Other than the tempered convergence and continuity notions in the hypotheses,
the key element in the proofs of Theorems \ref{thm:randomcics} and
\ref{thm:boundedcics} is the existence and measurability of the $\inf$'s and
$\sup$'s in Equations (\ref{eq:cicsproofinf}) and (\ref{eq:cicsproofsup}).
Therefore these results can be generalized to a wider class of spaces
and orders satisfying enough geometric conditions for that to happen.
\subsection{Cascades}\label{subsec:cascades}

We now discuss a few corollaries of the `converging input to converging state'
theorems just proved. Separate work in preparation deals with a small-gain
theorem for random dynamical systems.

Let $(\theta,\psi)$ be an autonomous RDS evolving on a space $Z = X_1 \times
X_2$. We say that $(\theta,\psi)$ is {\em cascaded} if the flow $\psi$ can be
decomposed as
\[
  \psi(t,\omega,(x_1(\omega),x_2(\omega))) \equiv
\begin{pmatrix}
  \varphi_1(t,\omega,x_1(\omega)) \\
  \varphi_2(t,\omega,x_2(\omega),(\eta_1)^{x_1})
\end{pmatrix}\,,
\]
for some RDSO $(\theta,\varphi_1,h_1)$ with state space $X_1$ and output space
$Y_1$, and some RDSI $(\theta,\varphi_2,\calU_2)$ with state space $X_2$,
input space $U_2 = Y_1$, and set of $\theta$-inputs $\calU_2$ containing all
(forward) output trajectories of $(\theta,\varphi_1,h_1)$. In this case we
write $\psi = \varphi_1 \otimes \varphi_2$. Recall from
item (1) in Theorem \ref{prop:discretecascades} that if the generator of a
discrete RDS can be decomposed as in Equation
(\ref{eq:discretegeneratorcascade}), then the said RDS is a cascade. A similar
decomposition can be done for systems generated by random differential equations
with inputs whose generator satisfies the natural analogues of Equation
(\ref{eq:discretegeneratorcascade}).

\begin{example}[Bounded Outputs]\label{ex:4-1}
Let $(\theta,\psi) = (\theta, \varphi_1 \otimes \varphi_2)$ be a cascaded
RDS as above. Suppose that 
$(\theta,\varphi_1,h_1)$ is an RDSO evolving on a normed space $X_1$ such that
$(\theta,\varphi_1)$ has a unique, globally attracting equilibrium
$(\xi_1)_\infty
\in \omegaqb$:
\[
  (\check\xi_1)^{x_1}_t(\omega) \longrightarrow (\xi_1)_\infty(\omega)\,, \as t
\rightarrow \infty\,, \quad \widetilde\forall \omega \in \Omega\,,\ \forall x_1
\in (X_1)_\borel^\Omega\,.
\]
Suppose further that the output function
$h_1$ is {\em bounded}; that is, there exists $M \geqslant 0$ such
that
\[
  h_1(\omega,x_1) \leqslant M\,,\quad \forall x_1 \in X_1\,, \
\widetilde\forall\omega \in \Omega\,.
\]
We have
\[
\begin{array}{rl}
  (\check\eta_1)^{x_1}_t(\omega) 
&=\hspace{2pt} h_1(\omega, (\check\xi_1)^{x_1}_t(\omega)) \\
&\longrightarrow h_1(\omega,(\xi_1)_\infty(\omega))\,, \as t \rightarrow
\infty\,,
\quad \widetilde\forall \omega \in \Omega\,,\ \forall x_1 \in
(X_1)_\borel^\Omega\,.
\end{array}
\]
And since $h_1$ is bounded, the convergence is automatically tempered. We
denote $(u_2)_\infty := h_1(\cdot, (\xi_1)_\infty(\cdot))$. The hypothesis that
$h_1$ is bounded also guarantees that $(u_2)_\infty$ is tempered.

Now assume that
that $(\theta,\varphi_2,\calU_2)$ satisfies the hypotheses of Theorem
\ref{thm:randomcics}. It then follows that $(\theta,\psi)$ has a globally
attracting equilibrium:
\[
  \check\xi^z_t(\omega) \longrightarrow 
\begin{pmatrix}
(\xi_1)_\infty(\omega) \\
(\xi_2)_\infty(\omega)
\end{pmatrix}
 \as t \rightarrow \infty\,, \quad \widetilde\forall
\omega \in \Omega\,,\ \forall z \in
Z_\theta^\Omega\,,
\]
where $(\xi_2)_\infty := \calK((u_2)_\infty)$. In particular, the convergence
in the second coordinate is tempered.

For conditions guaranteeing that an RDS $(\theta,\varphi)$ would have a unique,
globally attracting equilibrium in the sense above, see \cite[Theorem
3.2]{cao--jiang-2010}. The assumption that the output is bounded is very
reasonable in biological applications, since there is often a cut off or
saturation on the
reading of the strength of a signal. 
\mytriangle
\end{example}

Before we consider the next example, we develop a stronger notion of regularity
for
output functions. We seek a property which preserves tempered
convergence, and which we could check it holds in specific examples.

\begin{definition}[Tempered Lipschitz]\label{def:temperedlipschitz}
  An output function $h\colon \Omega \times X \rightarrow Y$ is said to be
{\em tempered Lipschitz} (with respect to a given MPDS $\theta$) if there
exists a tempered random variable $L \in (\r_{\geqslant 0})_\theta^\Omega$ such
that
\[
  \|h(\omega,x_1) - h(\omega,x_2)\| \leqslant L(\omega)\|x_1 - x_2\|\,,\quad
\forall x_1,x_2 \in X\,, \ \widetilde{\forall} \omega \in \Omega\,.
\]
We refer to $L$ as a {\em Lipschitz random variable for $h$}.
\mytriangle
\end{definition}
For example, suppose that $X \subseteq \r^n$, and that $h\colon \Omega \times X
\rightarrow \r^k$ is an output function such that $h(\omega,\cdot)$ is
differentiable for all $\omega$ in a $\theta$-invariant set of full measure
$\widetilde \Omega \subseteq \Omega$. If the norm of the Jacobian with respect
to $x$,
$$\omega \longmapsto \|D_x h (\omega, \cdot)\| := \sup_{x \in X} |D_x h
(\omega,x)|\,, \quad \omega \in \Omega\,,$$
is tempered, then $h$ is tempered Lipschitz.

\begin{lemma}
Let $h\colon \Omega \times X \rightarrow Y$ be a tempered Lipschitz continuous
output function, $p \in \calS_\theta^X$ be a $\theta$-stochastic process in
$X$, and let $p_\infty \in \omegaqb$. Let $q\colon \Tplus \times \Omega
\rightarrow Y$ be the $\theta$-stochastic process in $Y$ defined by
\[
  q_t(\omega) := h(\omega, p_t(\omega))\,,\quad (t,\omega) \in \Tplus
\times \Omega\,,
\]
and $q_\infty \in Y^\Omega_{\borel(Y)}$ be the random variable in $Y$
defined by
\[
  q_\infty(\omega) := h(\omega, p_\infty(\omega))\,, \quad \omega \in \Omega\,.
\]
If $p_t \temparrow p_\infty$, then $q_t \temparrow q_\infty$.
\end{lemma}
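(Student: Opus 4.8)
The plan is to reduce the whole statement to the pointwise Lipschitz estimate for $h$ and then invoke the closure of the tempered random variables under products. Let $L \in (\r_{\geqslant 0})_\theta^\Omega$ be a Lipschitz random variable for $h$ (Definition~\ref{def:temperedlipschitz}), and let $r \in (\r_{\geqslant 0})_\theta^\Omega$ together with $\alpha_0 = t_0 \in \Tplus$ witness the tempered convergence $p_t \temparrow p_\infty$ as in Definition~\ref{def:temperedconvergence}. Intersecting the finitely many $\theta$-invariant full-measure sets that come up (one on which $h(\omega,\cdot)$ is $L(\omega)$-Lipschitz, one on which $p_t(\omega) \to p_\infty(\omega)$, one on which the domination of $\|p_t-p_\infty\|$ by $r$ holds, and the defining sets of $L$ and $r$ as tempered random variables), we may fix a single $\theta$-invariant $\widetilde\Omega \subseteq \Omega$ of full measure on which all of these hold simultaneously. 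Measurability of $q$ and of $q_\infty$ is immediate, since they are compositions of measurable maps, exactly as in the opening paragraph of the proof of Proposition~\ref{prop:convergencetoeq}.

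First I would check condition (1) of Definition~\ref{def:temperedconvergence} for $q$: for $\omega \in \widetilde\Omega$,
\[
  \|q_t(\omega) - q_\infty(\omega)\| = \|h(\omega, p_t(\omega)) - h(\omega, p_\infty(\omega))\| \leqslant L(\omega)\,\|p_t(\omega) - p_\infty(\omega)\| \longrightarrow 0 \as t \to \infty\,,
\]
since $L(\omega) < \infty$ and $p_t(\omega) \to p_\infty(\omega)$; no continuity of $h(\omega,\cdot)$ beyond the Lipschitz bound is needed. Then I would check condition (2): for every $t \geqslant t_0$ and every $\omega \in \widetilde\Omega$ the same estimate, combined with $\|p_t(\omega) - p_\infty(\omega)\| \leqslant r(\omega)$, gives
\[
  \|q_t(\omega) - q_\infty(\omega)\| \leqslant L(\omega)\,r(\omega) =: R(\omega)\,.
\]
The random variable $R := Lr$ is nonnegative and, since $L$ and $r$ are both tempered, it is tempered by Lemma~\ref{lemma:temperedproduct}(3). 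Hence $R$ and $\alpha_0 = t_0$ witness $q_t \temparrow q_\infty$, which completes the argument.

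There is essentially no real obstacle here; the proof is a one-line Lipschitz estimate. The only points requiring a little care are the bookkeeping of the $\theta$-invariant null sets, so that the dominating $R$ is defined and tempered on a single $\theta$-invariant set of full measure as Definition~\ref{def:temperedconvergence} demands, and the explicit appeal to Lemma~\ref{lemma:temperedproduct}(3) for the fact that the product $Lr$ of two tempered random variables is again tempered.
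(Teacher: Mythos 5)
Your proof is correct and follows essentially the same route as the paper's: both reduce the claim to the pointwise estimate $\|q_t(\omega)-q_\infty(\omega)\| \leqslant L(\omega)\|p_t(\omega)-p_\infty(\omega)\|$ and then invoke Lemma \ref{lemma:temperedproduct}(3) to conclude that the dominating random variable $Lr$ is tempered. The only cosmetic difference is that you derive the almost-everywhere pointwise convergence directly from the Lipschitz bound, whereas the paper cites continuity of $h(\omega,\cdot)$; these are interchangeable here.
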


\begin{proof}
It follows from continuity with respect to $x \in X$ that
\[
  q_t(\omega) = h(\omega, p_t(\omega)) \rightarrow h(\omega, p_\infty(\omega)) =
p_\infty(\omega)\,, \quad \widetilde\forall \omega \in \Omega\,.
\]
Because $p_t \temparrow p_\infty$, there exist $r \in (\r_{\geqslant
0})_\theta^\Omega$ and $t_0 \geqslant 0$ such that
\[
  \|p_t(\omega) - p_\infty(\omega)\| \leqslant r(\omega)\,, \quad \forall t
\geqslant t_0\,,\ 
\widetilde\forall \omega \in \Omega\,.
\]
Let $L$ be a Lipschitz random variable for $h$. Then
\[
\begin{array}{rcll}
  \|q_t(\omega) - q_\infty(\omega)\| 
&= &\|h(\omega,p_t(\omega)) -
h(\omega,p_\infty(\omega))\| & \\[1ex]
&\leqslant &L(\omega)\|p_t(\omega) - p_\infty(\omega)\| & \\[1ex]
&\leqslant &L(\omega)r(\omega)\,, & \quad \forall t \geqslant t_0\,,\
\widetilde\forall \omega \in \Omega\,.
\end{array}
\]
By item (3) in Lemma \ref{lemma:temperedproduct}, $Lr$ is tempered, which
completes the proof.
\end{proof}

Now suppose that $(\theta,\psi,\calU)$ is an RDSI evolving on a state space $Z
= X_1 \times X_2$. In this case we say that $(\theta,\psi,\calU)$ is {\em
cascaded} if the flow $\psi$ can be decomposed as
\[
  \psi(t,\omega,(x_1(\omega),x_2(\omega)),u) \equiv
\begin{pmatrix}
  \varphi_1(t,\omega,x_1(\omega),u) \\
  \varphi_2(t,\omega,x_2(\omega),(\eta_1)^{x_1,u})
\end{pmatrix}\,,
\]
for some RDSIO $(\theta,\varphi_1,\calU_1,h_1)$ with state space $X_1$, set of
$\theta$-inputs $\calU_1 = \calU$ and output space
$Y_1$, and some RDSI $(\theta,\varphi_2,\calU_2)$ with state space $X_2$,
input space $U_2 = Y_1$, and set of $\theta$-inputs $\calU_2$ containing all
(forward) output trajectories of $(\theta,\varphi_1,\calU_1,h_1)$. In this case
we also write $\psi = \varphi_1 \otimes \varphi_2$. Item (1) in
Theorem \ref{prop:discretecascades} can be generalized to contemplate this
kind of cascades for discrete systems, as well as systems generated by
random differential equations.

\begin{example}[Tempered Lipschitz Outputs]\label{ex:4-2}
Suppose that $(\theta,\varphi_1,\calU_1)$ and $(\theta,\varphi_2,\calU_2)$ from
the decomposition above both satisfy the hypotheses of Theorem
\ref{thm:randomcics}. If the output function $h_1$ is Lipschitz continuous, then
$(\theta,\psi,\calU)$ also has the `converging input to converging state'
property;
that is, if $u \in \calU$ is such that $\check{u}_t \temparrow u_\infty$ for
some $u_\infty \in U_\theta^\Omega$, then there exists a $\xi_\infty \in
Z_\theta^\Omega$ such that 
\begin{equation}\label{eq:productcics}
\check\xi^{z,u}_t \longtemparrow \xi_\infty\,,
\quad \forall z \in Z_\theta^\Omega\,,
\end{equation}
as well.

To see this, let $\calK_1\colon (U_1)_\theta^\Omega \rightarrow
(X_1)_\theta^\Omega$ and $\calK_2\colon (U_2)_\theta^\Omega \rightarrow
(X_2)_\theta^\Omega$ be the i/s characteristics of
$(\theta,\varphi_1,\calU_1)$ and $(\theta,\varphi_2,\calU_2)$, respectively.
Fix $$z = (x_1,x_2) \in Z_\theta^\Omega = (X_1)_\theta^\Omega \times
(X_2)_\theta^\Omega$$ arbitrarily. From Theorem \ref{thm:randomcics}, we have
\[
  (\check\xi_1)_t^{x_1,u} \longtemparrow \calK_1(u_\infty)\,.
\]
Since $h_1$ is tempered Lipschitz, it follows that
\[
  (\check\eta_1)_t^{x_1,u} \longtemparrow (u_2)_\infty\,,
\]
where
\[
(u_2)_\infty := h_1(\cdot, \calK_1(u_\infty)(\cdot))\,.
\]
It follows, again from Theorem \ref{thm:randomcics}, that
\[
  (\check\xi_2)_t^{x_2, (\eta_1)^{x_1,u}} \longtemparrow
\calK_2((u_2)_\infty)\,.
\]
Hence
\[
  \check\xi_t^{z,u} 
= 
\begin{pmatrix}
  (\check\xi_1)_t^{x_1,u} \\ (\check\xi_2)_t^{x_2, (\eta_1)^{x_1,u}}
\end{pmatrix} 
\longtemparrow
\begin{pmatrix}
  \calK_1(u_\infty) \\ \calK_2((u_2)_\infty)
\end{pmatrix}\,.
\]
Since $z \in Z_\theta^\Omega$ was picked arbitrarily, this establishes
(\ref{eq:productcics}).

The procedure above can be generalized to cascades of three or more systems to
show that the `converging input to converging state' property will hold provided
that it
holds for its components---and the intermediate outputs are tempered Lipschitz.
Even if is possible to check directly that $(\theta,\varphi,\calU)$ already
satisfies the hypotheses of Theorem \ref{thm:randomcics}, it might be easier to
check them for each component.
\mytriangle  
\end{example}
\ifarxiv{%
Examples \ref{ex:4-1} and \ref{ex:4-2} illustrate how one can obtain global
convergence results for systems decomposable into cascades. In the next section
we discuss ``closed loop'' systems.
}%
\ifarxiv{
\section{Small-Gain Theorem}

\subsection{Feedback Interconnections}

We start by establishing what it means for the feedback interconnection of two
RDSIO's to be well-posed. Given two RDSIO's, we want to consider an RDS
obtained by plugging the output of one of the systems in as an input for the
other, and viceversa. Let $(\theta,\varphi_i,\calU_i,h_i)$ be a discrete RDSIO
with state space $X_i$, input space $U_i = Y_j$, and output space $Y_i = U_j$,
$i = 1,2$, $j = 2,1$. Suppose further that the set $\calU_i$ of $\theta$-inputs
of $(\theta,\varphi_i,\calU_i,h_i)$ contains all (forward) output trajectories
of $(\theta,\varphi_j,\calU_j,h_j)$, $i = 1,2$, $j = 2,1$.

\bd{def:wp}
We say that the \emph{feedback} (\emph{interconnection})
\emph{of the RDSIO's $(\theta,\varphi_1,\calU_1,h_1)$ and
$(\theta,\varphi_2,\calU_2, h_2)$ is well-posed}
provided that, for each $x_1\in X_1$ and each $x_2\in X_2$,
\begin{itemize}
\item[(1)] there exist
$\mu = \mu_{x_1,x_2} \in \calU_1$ and $\nu = \nu_{x_1,x_2} \in
\calU_2$ such that
\beqn
\nu_t(\omega) &=& h_1(\theta _t\omega ,\varphi_1(t,\omega ,x_1,\mu)),\\
\mu_t(\omega) &=& h_2(\theta _t\omega ,\varphi_2(t,\omega ,x_2,\nu)),
\eeqn
for all $t\in \Tplus$ and all $\omega \in \Omega $;

\item[(2)] if $\mu' \in \calU_1$, $\nu' \in \calU_2$ are such that
\beqn
\nu'_t(\omega) &=& h_1(\theta _t\omega ,\varphi_1(t,\omega ,x_1,\mu')),\\
\mu'_t(\omega) &=& h_2(\theta _t\omega ,\varphi_2(t,\omega ,x_2,\nu')),
\eeqn
then $\mu'_\cdot(\omega) = \mu_\cdot(\omega)$ and $\nu'_\cdot(\omega) =
\nu_\cdot(\omega)$ for $\theta$-almost all $\omega \in \Omega$.
\end{itemize}
That is, $\mu_{x_1,x_2} , \nu_{x_1,x_2}$ must exist, and be unique in a certain
sense for the feedback interconnection to be well-posed.
\ed

\noindent One can show using arguments along the lines of the constructions in
Section \ref{sec:pullbacktrajectories} that for discrete systems the feedback
interconnections are always well-posed. The same is true of RDSIO's generated
by RDSE's.

\begin{proposition}
  Suppose that the feedback interconnection of $(\theta,\varphi_1,\calU_1,h_1)$
and $(\theta,\varphi_2,\calU_2,h_2)$ is
well-posed.  Then $\psi\colon \Tplus \times \Omega \times (X_1 \times X_2)
\rightarrow X_1 \times X_2$, defined by
\[
\psi (t,\omega ,(x_1,x_2)) \,:=\;
(\varphi_1(t,\omega ,x_1,\mu_{x_1,x_2}),\varphi_2(t,\omega
,x_2,\nu_{x_1,x_2}))\,,\quad (t,\omega,(x_1,x_2)) \in \Tplus \times \Omega
\times (X_1 \times X_2)\,,
\]
satisfies the cocyle property and $(\theta ,\psi )$ is an RDS with state space
the Cartesian product $X_1 \times X_2$.
\end{proposition}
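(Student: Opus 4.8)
The plan is to verify, one at a time, that $\psi$ meets the requirements of a continuous RDS: joint measurability, continuity in the state (S1), the identity $\psi(0,\cdot)=id$, and the cocycle property (S2), reducing each to the corresponding property of $\varphi_1$ and $\varphi_2$ together with the well-posedness hypothesis of Definition~\ref{def:wp}.

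The easy items come first. The identity $\psi(0,\omega,(x_1,x_2))=(x_1,x_2)$ is immediate from (I3) applied in each coordinate. Continuity of $(x_1,x_2)\mapsto\psi(t,\omega,(x_1,x_2))$ for fixed $(t,\omega)$, and joint $(\borel(\Tplus)\otimes\calF\otimes\borel)$-measurability of $\psi$, follow from (I1)--(I2) for the two component systems, provided one knows that the selection $(x_1,x_2)\mapsto(\mu_{x_1,x_2},\nu_{x_1,x_2})$ is itself measurable in $\omega$ and appropriately continuous in $(x_1,x_2)$; the bare existence assertion in Definition~\ref{def:wp} does not give this, so one invokes the explicit recursive construction of the feedback inputs for discrete systems (along the lines of Section~\ref{sec:pullbacktrajectories}), respectively the analogous iterative construction for RDSIO's generated by random differential equations.

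The heart of the matter is (S2). Fix $s,t\in\Tplus$, $\omega\in\Omega$, $(x_1,x_2)\in X_1\times X_2$, and set $y_1:=\varphi_1(s,\omega,x_1,\mu_{x_1,x_2})$, $y_2:=\varphi_2(s,\omega,x_2,\nu_{x_1,x_2})$, so that $(y_1,y_2)=\psi(s,\omega,(x_1,x_2))$. By Remark~\ref{rem:rhoshift} applied to each coordinate, $\varphi_1(s+r,\omega,x_1,\mu_{x_1,x_2})=\varphi_1(r,\theta_s\omega,y_1,\rho_s(\mu_{x_1,x_2}))$ and $\varphi_2(s+r,\omega,x_2,\nu_{x_1,x_2})=\varphi_2(r,\theta_s\omega,y_2,\rho_s(\nu_{x_1,x_2}))$ for all $r\in\Tplus$. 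Substituting these into the two feedback equations of Definition~\ref{def:wp} and using the group law (T2) together with (\ref{eq:shiftrevisited}) shows that the shifted pair $(\rho_s(\mu_{x_1,x_2}),\rho_s(\nu_{x_1,x_2}))$ obeys, at the base point $\theta_s\omega$, exactly the pair of identities that characterise the feedback inputs $(\mu_{y_1,y_2},\nu_{y_1,y_2})$ of the interconnection restarted at $(y_1,y_2)$. Reading off from the explicit construction that these two pairs agree at $\theta_s\omega$ for almost every time, (I5) gives $\varphi_i(t,\theta_s\omega,y_i,\rho_s(\mu_{x_1,x_2}))=\varphi_i(t,\theta_s\omega,y_i,\mu_{y_1,y_2})$ for $i=1$ and, symmetrically with $\nu$, for $i=2$; hence
\[
\psi(s+t,\omega,(x_1,x_2))=\bigl(\varphi_1(t,\theta_s\omega,y_1,\mu_{y_1,y_2}),\,\varphi_2(t,\theta_s\omega,y_2,\nu_{y_1,y_2})\bigr)=\psi(t,\theta_s\omega,(y_1,y_2)),
\]
which is (S2). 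Together with the preceding items, this exhibits $(\theta,\psi)$ as an RDS with state space $X_1\times X_2$.

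The main obstacle is precisely that last identification of $(\rho_s(\mu_{x_1,x_2}),\rho_s(\nu_{x_1,x_2}))$ with $(\mu_{y_1,y_2},\nu_{y_1,y_2})$ \emph{at the particular base point} $\theta_s\omega$: because the restart state $(y_1,y_2)$ depends on the base point, one cannot simply invoke the uniqueness clause (2) of Definition~\ref{def:wp}, which only yields agreement for $\theta$-almost every base point rather than at $\theta_s\omega$ itself. In the discrete and random-differential-equation settings the matching is obtained directly from the recursive/iterative construction of the feedback inputs; in full generality one would instead conclude only that $\psi$ is a crude cocycle, and then, when $X_1\times X_2$ satisfies the hypotheses of Proposition~\ref{thm:perfection}, replace it by an indistinguishable perfection. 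A secondary but unavoidable obstacle, already flagged above, is producing measurable and state-continuous selections $(x_1,x_2)\mapsto(\mu_{x_1,x_2},\nu_{x_1,x_2})$, for which abstract well-posedness alone does not suffice.
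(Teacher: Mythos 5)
The paper states this proposition without any proof (it sits in the concluding ``sketch'' section on the small-gain theorem), so there is no argument of the authors' to compare yours against line by line; the surrounding text only indicates that well-posedness --- and, implicitly, the regularity and pointwise uniqueness of the feedback signals --- is to be obtained from the recursive constructions for discrete systems and for systems generated by random differential equations. Measured against that, your proposal takes the natural route and its core computation is correct: shifting the two feedback identities of Definition~\ref{def:wp} by $s$, rewriting $\varphi_i(s+r,\omega,\cdot,\cdot)$ via Remark~\ref{rem:rhoshift} and $\nu_{s+r}(\omega)$ via (\ref{eq:shiftrevisited}), does show that $\bigl(\rho_s(\mu_{x_1,x_2}),\rho_s(\nu_{x_1,x_2})\bigr)$ solves the feedback equations restarted at the intermediate state $(y_1,y_2)=\psi(s,\omega,(x_1,x_2))$ at the base point $\theta_s\omega$, and (I5) then converts agreement of the inputs at that base point into the cocycle identity. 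The two obstacles you flag are genuine and worth making explicit. Clause (2) of Definition~\ref{def:wp} compares two pairs of signals that satisfy the feedback equations for \emph{every} $\omega$ with a \emph{fixed} deterministic initial pair, and concludes only $\theta$-a.e.\ agreement; your shifted pair satisfies the restarted equations only at the single point $\theta_s\omega$, with a restart state $(y_1,y_2)$ that itself depends on $\omega$ and $s$, so the uniqueness clause is not even applicable there, let alone strong enough to yield a perfect cocycle. Likewise, continuity of $\psi(t,\omega,\cdot)$ and joint measurability require the selection $(x_1,x_2)\mapsto(\mu_{x_1,x_2},\nu_{x_1,x_2})$ to be suitably regular, which abstract well-posedness does not supply. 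Your resolution --- extract pointwise uniqueness and regularity from the recursive or iterative construction of the feedback signals in the discrete and random-differential-equation cases, and in full generality settle for a crude cocycle to be replaced by an indistinguishable perfection via Proposition~\ref{thm:perfection} --- is exactly what the paper's remarks suggest, and is arguably more careful than the unqualified statement of the proposition itself.
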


\bl{lem:equi-feedback}
Suppose that the feedback interconnection of
$(\theta,\varphi_1,\calU_1,h_1)$ and $(\theta,\varphi_2,\calU_2,h_2)$ is
well-posed.
Then, the following two properties are equivalent:
\begin{itemize}
\item[{\em(1)}]
for the feedback system, $(\bar{\xi },\bar{\zeta })\in \Eq$;
\item[{\em(2)}]
there exists a pair of $\theta $-constant inputs $\mu $ and $\nu $ such that
\[
\bar{\mu }\in \Oeq(\bar{\nu }),\;\;
\bar{\nu }\in \Oeq(\bar{\mu }),\;\;
\bar{\xi }\in \Eq(\bar{\mu })),\;\;
\bar{\zeta }\in \Eq(\bar{\nu }))\,.
\]
\end{itemize}
\els

\subsection{Small-Gain Theorem: Discussion}

A small-gain theorem for anti-monotone systems follows by the same
steps as in the deterministic case
(\cite{monotoneTAC,dcds06}).
The key idea is to assume that characteristics are continuous in our
tempered sense and that a small gain condition is satisfied.
The main theorem will be for
an anti-monotone RDSIO so that 
iterations of characteristics converge to a unique equilibrium, which
amounts (subject to mild technical conditions on cones and the space $X$)
to the requirement that
$\Oeq$ has no period two points
except for a unique equilibrium, in the sense that
there exists some $\mu _0$ so that
\[
\bar{\mu }\in \Oeq(\Oeq(\bar{\mu }))
\;\;\Rightarrow \;\;
\bar{\mu } = \bar{\mu }_0
\,.
\]
We assume, further, that the feedback connection of this system with itself is
well-posed and monotone.
Then (almost) every solution of the closed-loop system converges to
$\Eq(\bar{\mu })$.
A proof as in
\cite{monotoneTAC,dcds06}
begins by appealing to the CICS (Converging Input/Converging State) property
for monotone RDSIO's, and establishes a contraction property on the ``limsup''
and ``liminf'' (defined as in these references) of external signals.
An alternative approach is based on the idea in
\cite{enciso_smith_sontagJDE06},
in which one first proves that the
feedback of two monotone or two anti-monotone systems is monotone. 
More precisely,
if $(\theta _1,\varphi_1,\calU_1,h_1)$ and $(\theta _2,\varphi_2,\calU_2,h_2)$
are both
anti-monotone RDSIO's, we
consider the following order in $X_1\times X_2$:
\[
(x,z)\leqslant (\widetilde x,\widetilde z) \quad\mbox{iff}\quad
x\leqslant \widetilde x \;\mbox{and}\; \widetilde z\leqslant z
\]
(that is, the order on the second component is reversed).
If, instead, $(\theta _1,\varphi_1,\calU_1,h_1)$ and $(\theta _2,\varphi_2,
\calU_2,h_2)$ are
both monotone
RDSIO's, we consider the product order in $X_1\times X_2$:
\[
(x,z)\leqslant (\widetilde x,\widetilde z) \quad\mbox{iff}\quad
x\leqslant \widetilde x \;\mbox{and}\; z\leqslant \widetilde z\,.
\]
One then shows that the feedback system is monotone, and the small-gain
condition assures that there is a unique equilibrium (a.e.) for the composite
system, allowing one to appeal to the theorem in
\cite{cao--jiang-2009}.%
}%


\end{document}